\documentclass[12pt, a4paper]{article}
\usepackage[utf8]{inputenc}
\usepackage{tensor}
\usepackage[margin=0.8in]{geometry}
\usepackage{mathtools}
\usepackage{authblk}
\usepackage{amssymb}
\usepackage{physics}
\usepackage{url}
\usepackage{amsmath}
\usepackage{appendix}
\usepackage{hyperref}
\usepackage{caption}
\usepackage{amsthm}
\usepackage{stmaryrd}
\usepackage{bigints}
\newtheorem{theorem}{Theorem}[section]
\newtheorem{corollary}{Corollary}[theorem]
\newtheorem{lemma}[theorem]{Lemma}
\newtheorem{definition}[theorem]{Definition}

\newcommand{\overbar}[1]{\mkern 1.8mu\overline{\mkern-1.8mu#1\mkern-1.8mu}\mkern 1.8mu}
\usepackage{graphicx}
\numberwithin{equation}{section}

\title{\textbf{Spinorial quasilocal mass for spacetimes with negative cosmological constant}}
\author{Virinchi Rallabhandi\thanks{v.v.rallabhandi@sms.ed.ac.uk}}
\affil{School of Mathematics and Maxwell Institute for Mathematical Sciences, University of Edinburgh, King's Buildings, Edinburgh, UK, EH9 3FD}
\date{17th of April, 2025}

\begin{document}

\maketitle

\begin{abstract}
\noindent A new notion of quasilocal mass is defined for generic, compact, two dimensional, spacelike surfaces in four dimensional spacetimes with negative cosmological constant. The definition is spinorial and based on work for vanishing cosmological constant by Penrose and Dougan \& Mason. Furthermore, this mass is non-negative, equal to the Misner-Sharp mass in spherical symmetry, equal to zero for every generic surface in AdS, has an appropriate form for gravity linearised about AdS and has an appropriate limit for large spheres in asymptotically AdS spacetimes.
    
\end{abstract}

\newpage
\tableofcontents

\noindent\rule{\textwidth}{0.8pt}

\section{Introduction}
One of the triumphs of mathematical general relativity is the positive energy theorem - originally proven by Schoen \& Yau \cite{Schoen1979} based on minimal surface techniques and soon after by Witten \cite{Witten1981} based on spinorial methods. Witten's method was subsequently extended to prove global mass-charge inequalities in 4D Einstein-Maxwell theory \cite{Gibbons1982}, 5D Einstein-Maxwell-Chern-Simons theory \cite{Gibbons1994}, global positive energy theorems for spacetimes with AdS-type asymptotics \cite{Wang2001, Chrusciel2001, Chrusciel2003, Cheng2005, Chrusciel2006, Wang2015} and some mass-charge inequalities in this context \cite{London1995, Kostelecky1996}.

Meanwhile, one of the outstanding problems in mathematical general relativity is to find a completely satisfactory definition of quasilocal mass, a notion of mass associated to a closed, compact, spacelike, 2D surface\footnote{More generally, this could be a co-dimension-2 surface, but this work is concerned with 4D spacetimes only.}, often taken to be diffeomorphic to a sphere - see \cite{Szabados2009} for a review on the many attempts in the literature. At a very high level, Witten's method equates a combination of the ADM quantities \cite{Arnowitt1962} to a non-negative volume integral over a Cauchy surface. This raises the tantalising possibility of replacing the Cauchy surface with a compact, spacelike, 3D hypersurface and thereby finding a notion of quasilocal mass on the hypersurface's boundary. Furthermore, such a quasilocal mass would likely automatically satisfy a form of positivity.

This idea culminated in the spinorial definition of quasilocal mass by Dougan \& Mason \cite{Dougan1991}, relying heavily on the Newman-Penrose (NP) \cite{Newman1962} and Geroch-Held-Penrose (GHP) \cite{Geroch1973} formalisms. Their definition proved to have a number of physically desirable properties \cite{Dougan1992} and simplified Penrose's \cite{Penrose1982} twistorial quasilocal mass\footnote{See also \cite{Zhang2009, Lott2023} for more recent spinorial definitions of quasilocal mass and \cite{Mondal2024} for an application of spinor methods to study positivity of quasilocal masses that are not themselves spinorial.}. The positive energy theorem associated with the Dougan-Mason mass was recently generalised to a quasilocal mass-charge inequality by Reall \cite{Reall2024}, in much the same way Gibbons \& Hull \cite{Gibbons1982} extended Witten's original work. In parallel with the increasing sophistication of global positive energy theorems, Reall suggests his results could be generalised to include a negative cosmological constant. However, to find a quasilocal mass-charge inequality - let alone apply it to the third law of black hole mechanics like Reall - one must first have a satisfactory notion of quasilocal mass for these spacetimes. While quasilocal masses do exist for spacetimes with negative cosmological constant - for example the Hawking mass \cite{Hawking1968} can be generalised \cite{Neves2010} and \cite{Wang2006} generalises the Brown-York and Kijowski masses \cite{Brown1993, Kijowski1997} - these are not naturally spinorial. Thus, one seeks a generalisation of the Dougan-Mason quasilocal mass accommodating a negative cosmological constant.

This work provides such a generalisation, stated as follows.
\begin{definition}[Quasilocal mass]
    \label{def:quasilocalMassIntroduction}
    Given a generic, 2D, spacelike surface, $S$, within a spacetime, $(M, g)$, satisfying the Einstein equation with negative cosmological constant, $\Lambda$, and matter fields satisfying the dominant energy condition, make the following constructions. Let $\{l, n, m, \overbar{m}\}$ be a NP tetrad adapted to $S$. Assume the null expansions of $S$ satisfy $\theta_l > 0$, $\theta_n < 0$ and $\theta_l\theta_n < \frac{2\Lambda}{3}$. Let $D_a$ be the Levi-Civita connection of $g$ and let
    \begin{align}
        \nabla_a\Psi = D_a\Psi + \mathrm{i}\sqrt{-\frac{\Lambda}{12}}\gamma_a\Psi
    \end{align}
    for any Dirac spinor, $\Psi$. Let $\Phi$ denote a Dirac spinor satisfying $\overbar{m}^a\nabla_a\Phi = 0$ on $S$ and let $\{\Phi^A\}$ be a basis of solutions. Then, define matrices, $Q^{AB}$ and $T^{AB}$, by
    \begin{align}
        Q^{AB} &= \int_Sl_an_b\left(\overline{\Phi}^A\gamma^{abc}\nabla_c\Phi^B - \overline{\nabla_c(\Phi^A)}\gamma^{abc}\Phi^B\right)\,\mathrm{d}A \\
        \mathrm{and}\,\,\,T^{AB} &= (\Phi^A)^TC^{-1}\Phi^B, \\
        \mathrm{where}\,\,\, C &= \mathrm{charge\,\,conjugation\,\,matrix}.
    \end{align}
    Finally, define the quasilocal mass as
    \begin{align}
        m(S) &= \frac{1}{16\pi}\sqrt{-\tr(QT^{-1}\overbar{Q}\overbar{T}^{-1})}.
    \end{align}
\end{definition}
The concept of ``generic" is discussed in greater detail later, but for the purpose of definition \ref{def:quasilocalMassIntroduction}, it amounts to $T^{-1}$ existing. Even then, a number of results are required to show this quasilocal mass is well-defined. However, one result is particularly crucial and non-trivial.
\begin{theorem}[Main theorem]
    If the dominant energy condition holds and the null expansions on $S$ satisfy $\theta_l > 0$, $\theta_n < 0$ \& $\theta_l\theta_n < \frac{2\Lambda}{3}$, then $Q^{AB}$ is a non-negative definite matrix.
\end{theorem}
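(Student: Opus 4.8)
The plan is to run Witten's argument in the form adapted to a two-surface by Dougan--Mason and extended (with a Maxwell field) by Reall, with the cosmological modification $D_a\mapsto\nabla_a = D_a + \mathrm{i}\sqrt{-\Lambda/12}\,\gamma_a$ supplying the extra curvature that moves the positivity threshold from $0$ to $\frac{2\Lambda}{3}$. A preliminary observation is that $Q^{AB}$ is Hermitian: the Dirac adjoint satisfies $\overline{\gamma^{abc}} = -\gamma^{abc}$, so complex-conjugating the scalar bilinear in the integrand exchanges its two terms and gives $\overline{Q^{AB}} = Q^{BA}$. Non-negative-definiteness is therefore equivalent to $\mathcal I[\Phi] := \int_S l_an_b\big(\overline\Phi\gamma^{abc}\nabla_c\Phi - \overline{\nabla_c\Phi}\gamma^{abc}\Phi\big)\,\mathrm{d}A \ge 0$ for every $\Phi = \lambda_A\Phi^A$, i.e.\ for every solution of $\overbar{m}^a\nabla_a\Phi = 0$.

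First I would strip the integrand down to data intrinsic to $S$. As $l_an_b\gamma^{abc}$ is totally antisymmetric, its contraction with $l_c$ or $n_c$ vanishes; expanding $\nabla_c = l_cn^d\nabla_d + n_cl^d\nabla_d - m_c\overbar{m}^d\nabla_d - \overbar{m}_cm^d\nabla_d$, only the tangential derivatives contribute, the $\overbar{m}^d\nabla_d$ term is annihilated by the defining equation, and its Dirac conjugate $m^a\nabla_a\overline\Phi = 0$ removes the matching term in $\overline{\nabla_c\Phi}$. Setting $W := l_an_b\overbar{m}_c\gamma^{abc}$ and noting $l_an_bm_c\gamma^{abc} = -\overline{W}$, the two surviving pieces are complex conjugates of one another, so
\[ \mathcal I[\Phi] = -2\,\mathrm{Re}\int_S \overline\Phi\, W\, (m^a\nabla_a\Phi)\,\mathrm{d}A . \]
In particular no spanning hypersurface is needed — the quantity is genuinely built from $S$ alone.

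The core step is to integrate by parts on the closed surface $S$. Since $m$ is tangent to $S$, $\int_S m^a\partial_a f\,\mathrm{d}A = -\int_S f\,(\operatorname{div}_S m)\,\mathrm{d}A$ for any scalar $f$; taking $f = \overline\Phi W\Phi$ and using $m^a\nabla_a\overline\Phi = 0$, the derivative of $\Phi$ disappears entirely and $\mathcal I[\Phi]$ reduces to a purely algebraic surface integral $\int_S\overline\Phi\,\mathcal M\,\Phi\,\mathrm{d}A$ with $\mathcal M = (\operatorname{div}_S m)\,W + m^a\nabla_a W + (\text{Dirac-conjugate terms})$. Here $\mathcal M$ is assembled from the second fundamental forms of $S$ (hence $\theta_l$, $\theta_n$ and the null shears), the normal-bundle connection, the intrinsic connection of $S$, and — via the $\mathrm{i}\sqrt{-\Lambda/12}\,\gamma_a$ parts of $\nabla$, which enter both through $m^a\nabla_a\Phi$ and through the commutator $[\gamma_am^a, W]$ inside $\nabla_a W$ — terms linear in $\sqrt{-\Lambda}$. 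I would expand $\mathcal M$ in the GHP formalism, use the GHP/Ricci identities and the Gauss--Codazzi--Ricci equations for $S$ to express the ambient curvature occurring in it, and then invoke the Einstein equation to trade the Ricci part for the matter stress--energy plus $\Lambda g$. It remains to show $\overline\Phi\mathcal M\Phi\ge 0$ pointwise; as the Dirac pairing is indefinite this means $\mathcal M$ must be ``Dirac-positive'' ($\gamma^0\mathcal M\succeq 0$ in the usual convention). The matter block is Dirac-positive by the dominant energy condition, exactly as in Witten's theorem; the shear and Weyl contributions should cancel or combine into a square; and what is left is a $\gamma^0$-like term with coefficient controlled by $\theta_l,\theta_n$ together with an indefinite $\mathrm{i}\gamma_5$-like term with coefficient controlled by $\sqrt{-\Lambda}$, which is Dirac-positive exactly when a square can be completed — and the arithmetic of that completion should be precisely $\theta_l>0$, $\theta_n<0$, $\theta_l\theta_n<\frac{2\Lambda}{3}$, the factors $\tfrac1{12}$ in the connection and $\tfrac23$ in the hypothesis being tuned so that it comes out this way. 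Integrating over $S$ then gives $\mathcal I[\Phi]\ge 0$.

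The hard part is this final reconciliation: arranging $\mathcal M$ so that the Weyl and shear terms are manifestly harmless, and — since there are two independent sources of $\sqrt{-\Lambda}$-linear contributions (one directly from $\gamma^{abc}\gamma_c = 2\gamma^{ab}$ in the original integrand, one from $\nabla_a W$ in the integration by parts) — checking that completing the square produces exactly the threshold $\frac{2\Lambda}{3}$ rather than some other multiple of $\Lambda$. A subsidiary point is the GHP weight bookkeeping, needed to justify that $m^a\nabla_a$ really behaves as a weighted divergence on $S$ so the integration by parts is valid; and one should of course confirm the Hermiticity of $Q^{AB}$ used at the outset.
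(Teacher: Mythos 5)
Your reduction of $Q(\Phi)$ to a purely algebraic integral over $S$ is essentially what the paper also does (its equation \ref{eq:qPhiGHP}): contracting $l_an_b\gamma^{abc}$ kills the normal derivatives, the defining equation $\overbar{m}^a\nabla_a\Phi=0$ and its conjugate kill one tangential derivative each, and a tangential integration by parts removes the rest. But the resulting integrand is
\begin{align*}
4\big(\rho|\varphi_o|^2 - \mu|\varphi_\iota|^2 - \rho|\xi_o|^2 + \mu|\xi_\iota|^2\big) + \mathrm{i}k\text{-terms},
\end{align*}
with $\rho<0$ and $\mu<0$ under the stated expansion hypotheses, so it contains terms of \emph{both} signs and is manifestly indefinite pointwise. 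Your plan to finish by showing the algebraic kernel $\mathcal M$ is ``Dirac-positive'' cannot work, and the mechanism you propose for bringing in the dominant energy condition is not available: your computation only ever involves a single derivative of the frame, so $\mathcal M$ is built from spin coefficients, never from curvature. No commutator of derivatives appears, so Gauss--Codazzi and the Einstein equation are never triggered and $T_{ab}$ never enters. This is not a repairable detail --- a proof using only data on $S$ could not be correct in principle, because the hypothesis is the dominant energy condition \emph{on a spanning hypersurface} $\Sigma$, and the sign of $Q^{AB}$ genuinely depends on the matter content in the interior (consider a surface with vacuum in a neighbourhood of $S$ but energy-condition-violating matter inside).

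The paper's actual argument is a two-step comparison that does require the bulk. First, using the invertibility of $\gamma^I\nabla_I$ established in section \ref{sec:analysis}, one solves the Witten equation $\gamma^I\nabla_I\Psi=0$ on $\Sigma$ with the mixed boundary data $\psi_o=\varphi_o$, $\chi_\iota=\xi_\iota$ on $S$; the Lichnerowicz identity plus the DEC on $\Sigma$ then give $Q(\Psi)\ge0$ as a non-negative volume integral. Second, the difference $Q(\Phi)-Q(\Psi)$ collapses to a surface integral depending only on the \emph{unmatched} components $\varphi_\iota-\psi_\iota$ and $\xi_o-\chi_o$, and it is here --- not in a pointwise positivity of $Q(\Phi)$'s integrand --- that the square is completed: after a GHP boost rescaling by $z=\sqrt[4]{\mu/\rho}$, the condition $\theta_l\theta_n<\tfrac{2\Lambda}{3}=-8k^2$ gives $-\sqrt{\mu\rho}<-k\sqrt2$ and makes $Q(\Phi)-Q(\Psi)\ge0$. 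Your instinct about where the threshold $\tfrac{2\Lambda}{3}$ comes from (completing a square between an expansion term and a $k$-linear cross term) is correct, but it is applied to the boundary-difference term, not to the integrand of $Q(\Phi)$ itself. Your observation that $Q^{AB}$ is Hermitian is fine.
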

Definition \ref{def:quasilocalMassIntroduction} is based both on Dougan \& Mason's work, but also on Penrose's twistorial definition\footnote{It appears there has been one previous attempt at including a negative cosmological constant in Penrose's work \cite{Kelly1985}. However, the definition in \cite{Kelly1985} is only evaluated at conformal infinity, $\mathcal{I}$. The present definition also differs in that no reference is made to twistors.}. Like the Dougan-Mason mass, this definition applies to 2D surfaces, $S$, which are ``generic" in a sense made precise later. Also, like Penrose's definition, but unlike Dougan \& Mason's, the present definition cannot decompose the mass into its constituents - e.g. energy and linear momentum - except near conformal infinity, $\mathcal{I}$. Most importantly though, a good quasilocal mass should satisfy several properties of physical significance. Although no unanimously agreed list exists, this paper shows the new quasilocal mass given satisfies the following physically desirable properties.
\begin{itemize}
    \item $m(S) \geq 0$.
    \item $m(S) = 0$ for every generic surface in AdS.
    \item $m(S)$ coincides with the Misner-Sharp mass \cite{Misner1964} (including cosmological constant) for spherically symmetric spacetimes.
    \item For asymptotically AdS spacetimes, $m(S)$ agrees with a global notion of mass as $S$ approaches a sphere on conformal infinity, $\mathcal{I}$.
    \item For gravity linearised about AdS, $m(S)$ agrees with a reasonable notion of mass built from the energy-momentum tensor, $T_{ab}$.
\end{itemize}
We begin in section \ref{sec:setup} by setting up the problem and establishing various foundational identities regarding spinors and the GHP formalism. This is supplemented in section \ref{sec:analysis} by analysis required to show a particular Dirac-type operator is invertible, as required for Witten's method. Finally, the new definition of quasilocal mass is stated in section \ref{sec:definition}; the first two properties in the list above are shown to follow somewhat immediately. Section \ref{sec:schwarzschild} is devoted to studying examples with high symmetry - namely spherical symmetry in section \ref{sec:spherical} and metrics with toroidal symmetry in section \ref{sec:torus}. Section \ref{sec:asymptotics} then establishes the asymptotic properties, while section \ref{sec:linearisation} studies gravity linearised around AdS. Section \ref{sec:conclusion} concludes the work with a recapitulation and some speculation on future research. Additionally, the conventions used are listed in appendix \ref{sec:conventions}; in short, they are based on \cite{Buchbinder1998}. These conventions differ slightly from alternative conventions used by Penrose and Rindler \cite{Penrose1984, Penrose1986} and a comparison is provided in appendix \ref{sec:penroseRindler}. Finally, appendix \ref{sec:identities} collates some identities frequently used when manipulating two-component spinors and NP coefficients.

\section{Set-up and the Lichnerowicz identity}
\label{sec:setup}
We begin with the basic set-up that will be used throughout. Let $\Sigma$ be a 3D, spacelike, compact submanifold with boundary, $S$, within a spacetime, $(M, g)$. Let $\{P, Q, X, Y\}$ be an orthonormal vierbein\footnote{Vierbeins are taken to be orthonormal by definition throughout this work.} with $X^a$ \& $Y^a$ tangent to $S$, $Q^a$ an outward-pointing normal to $S$ in $\Sigma$ and $P^a$ a timelike, future-directed normal to $\Sigma$. See figure \ref{fig:setup} for a visual depiction of this set-up. A quasilocal mass is then a number, $m(S)$, assigned to each applicable $S$. Note that $a, b \cdots$ will denote vierbein indices throughout, although in most equations they could equally well be replaced with abstract indices. Furthermore, the spacetime will always be assumed to solve the Einstein equation with a negative cosmological constant, namely
\begin{align}
    R_{ab} - \frac{1}{2}Rg_{ab} + \Lambda g_{ab} = 8\pi T_{ab},
\end{align}
where $R_{ab} - \frac{1}{2}Rg_{ab}$ is the Einstein tensor and $T_{ab}$ is the energy-momentum tensor.

Having chosen $\{P, Q, X, Y\}$ as described, define a Newman-Penrose (NP) tetrad \cite{Newman1962} by
    \begin{align}
        l^a &= \frac{1}{\sqrt{2}}\left(P^a + Q^a\right),\,\, n^a = \frac{1}{\sqrt{2}}\left(P^a - Q^a\right) \,\, \mathrm{and} \,\, m^a = \frac{1}{\sqrt{2}}\left(X^a + \mathrm{i}Y^a\right).
        \label{eq:NPTetrad}
    \end{align}
    Equivalently, given an NP tetrad adapted to $S$ and $\Sigma$, one can define
    \begin{align}
        &P^a = \frac{1}{\sqrt{2}}\left(l^a + n^a\right), \,\, Q^a = \frac{1}{\sqrt{2}}\left(l^a - n^a\right),\,\, X^a = \frac{1}{\sqrt{2}}\left(m^a + \overbar{m}^a\right) \nonumber \\
        &\mathrm{and} \,\, Y^a = \frac{1}{\mathrm{i}\sqrt{2}}\left(m^a - \overbar{m}^a\right).
    \end{align}
\begin{figure}
    \centering
    \includegraphics[scale=0.5]{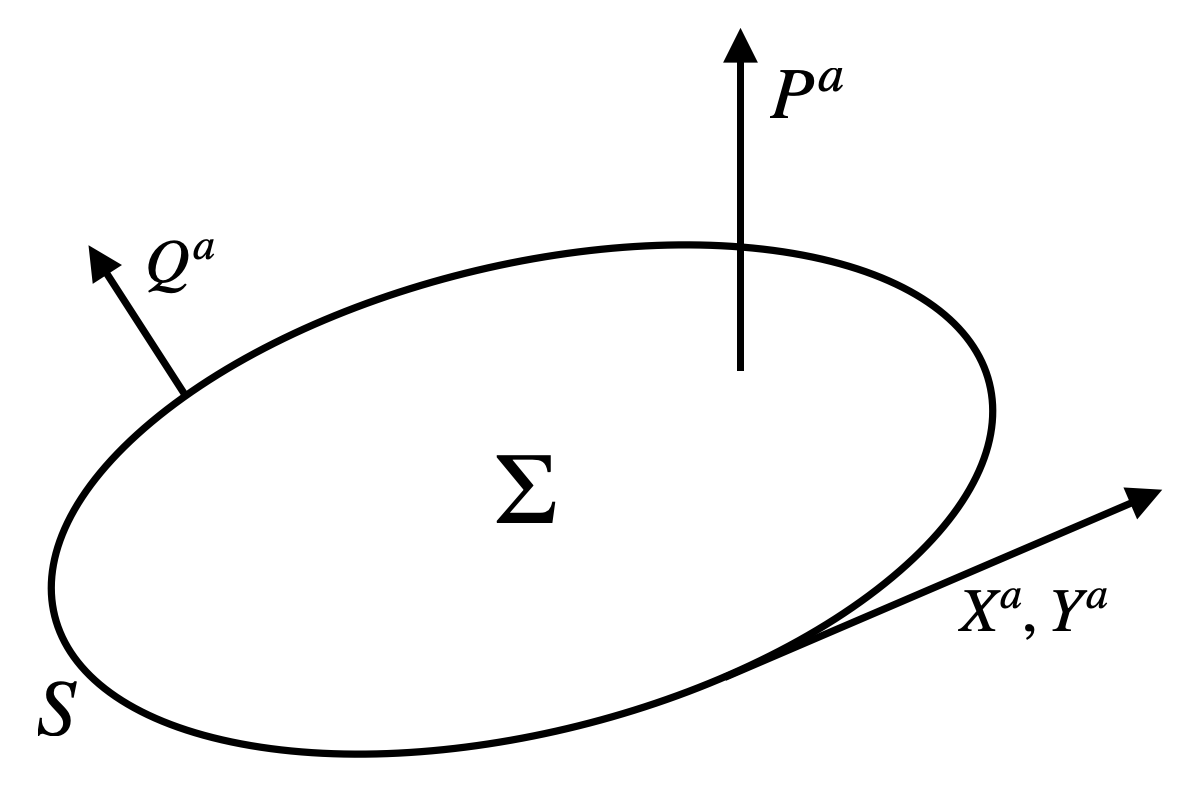}
    \caption{The set-up for defining quasilocal mass.}
    \label{fig:setup}
\end{figure}
\begin{lemma}
    The NP coefficients\footnote{See appendix \ref{sec:identities} for the definitions of the NP coefficients that appear in this work.}, $\mu$ and $\rho$, are real. Furthermore, they are related to the expansions along the null normals by $\theta_l = -2\rho$ and $\theta_n = 2\mu$.
    \label{thm:muRho}
\end{lemma}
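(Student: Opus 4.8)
The plan is to compute the relevant connection coefficients directly from the adapted tetrad. First I would recall the standard GHP/NP definitions $\rho = m^a\overbar{m}^b D_b l_a$ and $\mu = -\overbar{m}^a m^b D_b n_a$ (up to the sign conventions fixed in appendix \ref{sec:identities}), and note that since $l$ and $n$ are built from the real vectors $P,Q$ while $m,\overbar{m}$ are built from the real vectors $X,Y$, one has $\overbar{\rho} = \overbar{m}^a m^b D_b l_a$ and $\overbar{\mu} = -m^a \overbar{m}^b D_b n_a$. Thus $\rho - \overbar{\rho}$ and $\mu - \overbar{\mu}$ are each proportional to the antisymmetric part $X^{[a}Y^{b]} D_b l_a$, respectively $X^{[a}Y^{b]}D_b n_a$.

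The key step is to argue this antisymmetric part vanishes. I would use that $l_a$ (respectively $n_a$) restricted to $S$ annihilates the tangent space of $S$, i.e. $l_a X^a = l_a Y^a = 0$ on $S$, so that $l$ is (proportional to) a normal covector field to $S$ inside the codimension-1 picture; differentiating $l_a X^a = 0$ along $Y$ and $l_a Y^a = 0$ along $X$ and subtracting converts the antisymmetrised covariant derivative of $l_a$ into the Lie bracket term $l_a[X,Y]^a$, which vanishes because $[X,Y]$ is again tangent to $S$ (the coordinate/holonomic frame on $S$ can be chosen so that $X,Y$ span $TS$ and their bracket lies in $TS$). Hence $X^{[a}Y^{b]}D_b l_a = 0$ on $S$, giving $\rho = \overbar{\rho}$; the same computation with $n$ in place of $l$ gives $\mu = \overbar{\mu}$. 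Equivalently and perhaps more cleanly, this is the statement that the second fundamental form of $S$ with respect to any normal direction is symmetric (since $S$ has no torsion as a submanifold), and $\rho$, $\mu$ are essentially the traces of $l$-, $n$-directed shape operators whose antisymmetric parts are what the imaginary parts measure.

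For the second assertion I would relate $\rho$ and $\mu$ to the null expansions. The expansion of $S$ along $l$ is $\theta_l = q^{ab} D_a l_b$ where $q^{ab} = m^a\overbar{m}^b + \overbar{m}^a m^b$ is the induced (inverse) metric on $S$; expanding gives $\theta_l = m^a\overbar{m}^b D_b l_a + \overbar{m}^a m^b D_b l_a = \rho + \overbar{\rho} = 2\rho$ up to the overall sign in the chosen convention, which supplies the stated $\theta_l = -2\rho$. The analogous computation with $n$ gives $\theta_n = \mu + \overbar{\mu} = 2\mu$. The main obstacle is purely bookkeeping: pinning down the precise signs and which of $\rho,\mu$ carries the minus sign under the paper's conventions (appendix \ref{sec:conventions} and \ref{sec:identities}), since NP/GHP sign conventions for $\rho$ and $\mu$ and for the direction of the expansion vary across the literature; once the definitions from the appendices are substituted, both claims reduce to the two short calculations above.
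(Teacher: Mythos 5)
Your proposal is correct and follows essentially the same route as the paper: reality of $\rho$ and $\mu$ from the symmetry of the second fundamental form (which the paper simply cites via Geroch--Held--Penrose, whereas you spell out the Lie-bracket argument), and the expansion identities from contracting $D_a l_b$, $D_a n_b$ with the induced metric $\beta^{ab} = m^a\overbar{m}^b + \overbar{m}^a m^b$. The only caveat is the sign bookkeeping you flag yourself; with the paper's conventions ($\rho = -m^a\bar{\delta}l_a$, $\mu = \overbar{m}^a\delta n_a$) the signs come out as stated.
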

\begin{proof}
    The reality of $\rho$ and $\mu$ for a surface forming tetrad is explained in \cite{Geroch1973}. It can be proven explicitly by using the symmetry of the extrinsic curvatures with respect to $l$ and $n$ respectively. For the expansions, if $\beta_{ab}$ is the induced metric on $S$, then
    \begin{align}
        \theta_l = \beta^{ab}D_al_b \,\,\,\mathrm{and}\,\,\,\theta_n = \beta^{ab}D_an_b. 
    \end{align}
    Since $g_{ab} = -l_an_b - n_al_b + m_a\overbar{m}_b + \overbar{m}_am_b$ in the NP formalism, it follows that $\beta_{ab} = m_a\overbar{m}_b + \overbar{m}_am_b$ and consequently $\theta_l = -\bar{\rho} - \rho = -2\rho$ and $\theta_n = \mu + \bar{\mu} = 2\mu$.
\end{proof}
Given a pair of null normals to $S$, it will be very natural to use the Geroch-Held-Penrose (GHP) formalism \cite{Geroch1973} in what follows. The primary construction underpinning the GHP formalism is the spinor dyad, $\{o, \iota\}$. In particular, when converted to two-component spinors, write the NP tetrad in terms of $\{o, \iota\}$ as 
    \begin{align}
        l_{\alpha\dot{\alpha}} = o_\alpha\bar{o}_{\dot{\alpha}} \,\,\,\mathrm{and}\,\,\, n_{\alpha\dot{\alpha}} = \iota_\alpha\bar{\iota}_{\dot{\alpha}}
        \label{eq:lnDyad}
    \end{align}
    with $\iota^\alpha o_\alpha = \sqrt{2}$. Subsequently, decompose any two-component spinor, $\psi_\alpha$, as 
    \begin{align}
        &\psi_\alpha = \psi_o o_\alpha + \psi_\iota\iota_\alpha
        \label{eq:psiOPsiI}\\
        &\iff \psi_o = \frac{1}{\sqrt{2}}\iota^\alpha\psi_\alpha \,\,\,\mathrm{and} \,\,\, \psi_\iota = -\frac{1}{\sqrt{2}}o^\alpha\psi_\alpha.
    \end{align}
    Finally, in terms of the spinor dyad, choose
    \begin{align}
        m_{\alpha\dot{\alpha}} = \iota_\alpha\bar{o}_{\dot{\alpha}} \,\,\,\mathrm{and}\,\,\, \overbar{m}_{\alpha\dot{\alpha}} = o_\alpha\bar{\iota}_{\dot{\alpha}}.
        \label{eq:mDyad}
    \end{align}
More detail on these constructions can be found in \cite{Geroch1973, Penrose1984, Penrose1986}, where $o_\alpha$ is $o_{A^\prime}$ and $\iota_\alpha$ is $\iota_{A^\prime}$. Also see appendix \ref{sec:penroseRindler} for a more comprehensive notation comparision to \cite{Geroch1973, Penrose1984, Penrose1986}.

\begin{definition}[Modified connection]
    When acting on any Dirac spinor, $\Psi$, of the spacetime, define the modified connection, $\nabla$, by
    \begin{align}
        \nabla_a\Psi &= D_a\Psi + \mathrm{i}k\gamma_a\Psi\,\,\,\,\,\,\,\mathrm{and} \\
        \nabla_a\overline{\Psi} &= D_a\overline{\Psi} - \mathrm{i}k\overline{\Psi}\gamma_a = (\nabla_a\Psi)^\dagger\gamma^0,
    \end{align}
    where $k = \sqrt{-\frac{\Lambda}{12}}$ is a positive constant, $D_a$ is the Levi-Civita connection of $g$ and $\{\gamma^a\}_{a = 0}^3$ are the Dirac gamma matrices\footnote{See appendix \ref{sec:conventions} for gamma matrix and Dirac spinor conjugation conventions.}.
\end{definition}
Note this connection is the transformation of the gravitino in a bosonic solution of gauged supergravity and hints at deeper links \cite{Horowitz1983} between supergravity and Witten's positive energy theorem proof.
\begin{definition}[$E^{ab}(\Psi), E^{ab}(\Psi_1, \Psi_2)$]
    For a Dirac spinor, $\Psi$, let 
    \begin{align}
        E^{ab}(\Psi) &= \overline{\Psi}\gamma^{abc}\nabla_c\Psi + \mathrm{c.c} = \overline{\Psi}\gamma^{abc}\nabla_c\Psi - \nabla_c(\overline{\Psi})\gamma^{abc}\Psi.
        \label{eq:eAB}
    \end{align}
    Similarly, define $E^{ab}(\Psi_1, \Psi_2)$ by 
    \begin{align}
        E^{ab}(\Psi_1, \Psi_2) &= \overline{\Psi}_1\gamma^{abc}\nabla_c\Psi_2 - \nabla_c(\overline{\Psi}_1)\gamma^{abc}\Psi_2.
    \end{align}
\end{definition}
$E^{ab}(\Psi)$ is the Hodge dual of what is usually called the Witten-Nester 2-form \cite{Nester1981}.
\begin{theorem}[Lichnerowicz identity]
    \label{thm:lichnerowicz}
    Let $I, J, \cdots$ denote vierbein indices, but running over only the 1, 2 and 3 components, i.e. $\{Q, X, Y\}$ components. Then, the Lichnerowicz identity can be written as
    \begin{align}
        P_aD_b(E^{ba}(\Psi)) &= 2\left(\nabla_I(\Psi)^\dagger\nabla^I\Psi - 4\pi T^{0a}\overline{\Psi}\gamma_a\Psi - (\gamma^I\nabla_I\Psi)^\dagger\gamma^J\nabla_J\Psi\right).
    \end{align}
\end{theorem}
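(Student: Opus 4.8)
The plan is to establish this by a direct Weitzenb\"ock-type calculation --- the Sen--Witten identity adapted to the modified connection $\nabla$. First I would expand the divergence with the Leibniz rule, using that $\gamma^{abc}$ is covariantly constant for $D$, to get
\begin{align}
D_b E^{ba}(\Psi) = (D_b\overline{\Psi})\gamma^{bac}\nabla_c\Psi + \overline{\Psi}\gamma^{bac}D_b\nabla_c\Psi - (D_b\nabla_c\overline{\Psi})\gamma^{bac}\Psi - \nabla_c(\overline{\Psi})\gamma^{bac}D_b\Psi.
\end{align}
Every bare $D$ is then traded for $\nabla$ via $D_b\Psi = \nabla_b\Psi - \mathrm{i}k\gamma_b\Psi$ and $D_b\overline{\Psi} = \overline{\nabla_b\Psi} + \mathrm{i}k\overline{\Psi}\gamma_b$. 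This produces three kinds of terms: honest ``first-derivative-squared'' terms $\overline{\nabla_b\Psi}\gamma^{bac}\nabla_c\Psi$ and its conjugate; second-derivative terms $\overline{\Psi}\gamma^{bac}\nabla_b\nabla_c\Psi$ and its conjugate; and cross terms linear in $k$, of the shape $\mathrm{i}k\,\overline{\Psi}\gamma_b\gamma^{bac}\nabla_c\Psi$, which I would simplify with the Clifford contraction identities for $\gamma_b\gamma^{bac}$ and $\gamma^{bac}\gamma_b$.

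For the second-derivative terms, antisymmetry of $\gamma^{bac}$ in $b$ and $c$ gives $\gamma^{bac}\nabla_b\nabla_c\Psi = \tfrac12\gamma^{bac}[\nabla_b,\nabla_c]\Psi$, and for the modified connection
\begin{align}
[\nabla_b,\nabla_c]\Psi = \tfrac14 R_{bc}{}^{de}\gamma_{de}\Psi - 2k^2\gamma_{bc}\Psi,
\end{align}
the last term being the contribution of the $\mathrm{i}k\gamma_a$ piece (conventions for $R_{bcde}$ aside). Contracting with $\gamma^{abc}$, the Riemann term collapses --- via the Clifford algebra together with the first Bianchi identity $R_{[bcd]e}=0$ --- to a contraction of the Einstein tensor $G^{ab} = R^{ab} - \tfrac12 Rg^{ab}$ against $\gamma_b$, while the $k^2$ term collapses to a multiple of $\gamma^a$. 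Substituting the Einstein equation $G_{ab} + \Lambda g_{ab} = 8\pi T_{ab}$ and then contracting the free index with $P_a$, the $\Lambda g_{ab}$ contribution from the Einstein-tensor term cancels precisely against the $2k^2$ contribution --- this is exactly why $k = \sqrt{-\Lambda/12}$ --- leaving the matter term $-4\pi T^{0a}\overline{\Psi}\gamma_a\Psi$ (together with its conjugate, which is equal since $T_{ab}$ is real). It remains to assemble the first-derivative terms: after contracting with $P_a$, i.e. restricting the free index to the $P$-direction, the combination $\overline{\nabla_b\Psi}\gamma^{bac}\nabla_c\Psi + \mathrm{c.c.}$ should reorganise, using $P_a\gamma^{bac}$ to force $b,c$ spatial, an identity of the form $\gamma^{I0J} = \gamma^0(\gamma^I\gamma^J - g^{IJ})$ (up to a conventional sign), and $\overline{\chi} = \chi^\dagger\gamma^0$ with the (anti-)Hermiticity of the $\gamma^a$, into $2\big(\nabla_I(\Psi)^\dagger\nabla^I\Psi - (\gamma^I\nabla_I\Psi)^\dagger\gamma^J\nabla_J\Psi\big)$; the $k$-linear cross terms must cancel in this combination because the left-hand side is manifestly real.

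I expect the main obstacle to be the Clifford-algebra bookkeeping, in two spots. First, showing that the triple-gamma/Riemann contraction genuinely collapses to the Einstein tensor with exactly the numerical coefficient needed for the $\Lambda$-terms to cancel the $2k^2$-terms; a slip there would break the whole mechanism. Second, tracking every sign and factor of $2$ through the $3+1$ split and the conjugation conventions --- the sign of $P_a$ as a covector, the (anti-)Hermiticity $(\gamma^0)^\dagger = -\gamma^0$, $(\gamma^I)^\dagger = \gamma^I$, and the relation between $\gamma^{abc}$ and its Hodge dual $\propto \epsilon^{abcd}\gamma_d\gamma_5$ --- so that the first-derivative terms land with the signs shown and the cross terms vanish. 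Working in the dual representation $E^{ab} \propto \epsilon^{abcd}\big(\overline{\Psi}\gamma_d\gamma_5\nabla_c\Psi + \mathrm{c.c.}\big)$ may streamline the curvature computation. As a sanity check one can first verify the $\Lambda = 0$, $k=0$ case, which is the classical Sen--Witten identity, and then confirm the $k$-dependent corrections organise as described.
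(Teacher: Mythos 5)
Your proposal is correct and follows essentially the same route as the paper: Leibniz-expand the divergence, reduce the second-derivative terms via a Lichnerowicz/Weitzenböck identity to the Einstein tensor plus a $k^2\gamma^a$ term (so that the Einstein equation with $\Lambda=-12k^2$ produces exactly $8\pi T^{ab}\gamma_b$), and then perform the $3+1$ Clifford split of the first-derivative terms. The only cosmetic difference is that you compute the curvature of the modified connection $\nabla$ directly, whereas the paper keeps the second derivatives in terms of $D$ and tracks the resulting $-12k^2\overline{\Psi}\gamma^a\Psi$ term separately; both bookkeepings lead to the same cancellation.
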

A variant of the Lichnerowicz identity is always the key result underpinning any Witten-style positive energy theorem. Note the RHS can be written in a more covariant-looking way by replacing $T^{0a}$ with $-P_bT^{ba}$ and replacing all $\nabla_I$ with $h\indices{^b_a}\nabla_b$, where $h_{ab} = g_{ab} + P_aP_b$.
\begin{proof}
    By directly expanding and recombining terms, one finds
    \begin{align}
        D_bE^{ba}(\Psi) &= 2\nabla_b(\overline{\Psi})\gamma^{bac}\nabla_c\Psi - 12k^2\overline{\Psi}\gamma^{a}\Psi - \overline{\Psi}\gamma^{abc}D_b D_c\Psi - D_bD_c(\overline{\Psi})\gamma^{cba}\Psi.
    \end{align}
    For the second derivative terms, recall the standard Lichnerowicz identity, 
    \begin{align}
        \gamma^{abc}D_bD_c\Psi &= -\frac{1}{2}\left(R^{ab} - \frac{1}{2}g^{ab}R\right)\gamma_b\Psi,
        \label{eq:lichnerowicz}
    \end{align}
    which is proven using $\gamma^{abc}$'s antisymmetry to convert $D_bD_c\Psi$ into $\frac{1}{2}[D_b, D_c]\Psi = -\frac{1}{8}R\indices{^d^e_b_c}\gamma_{de}\Psi$ and then applying the identity, $\gamma^{abc}\gamma_{de} = \gamma\indices{^a^b^c_d_e} - 6\gamma\indices{^[^a^b_[_e}\delta\indices{^c^]_d_]} + 6\gamma^{[a}\delta\indices{^b_[_e}\delta\indices{^c^]_d_]}$. Consequently, for the other second derivative term,
    \begin{align}
        D_bD_c(\overline{\Psi})\gamma^{cba} &= (\gamma^{abc}D_b D_c\Psi)^\dagger\gamma^0 = -\frac{1}{2}\left(R^{ab} - \frac{1}{2}g^{ab}R\right)\overline{\Psi}\gamma_b.
    \end{align}
    Substituting back,
    \begin{align}
        D_bE^{ba}(\Psi) &= 2\nabla_b(\overline{\Psi})\gamma^{bac}\nabla_c\Psi - 12k^2\overline{\Psi}\gamma^{a}\Psi + \left(R^{ab} - \frac{1}{2}g^{ab}R\right)\overline{\Psi}\gamma_b\Psi \\
        &= 2\nabla_b(\overline{\Psi})\gamma^{bac}\nabla_c\Psi + 8\pi T^{ab}\overline{\Psi}\gamma_b\Psi
    \end{align}
    by the Einstein equation. $P_a \equiv -\delta_{a0}$ in vierbein indices. Hence,
    \begin{align}
        P_aD_bE^{ba}(\Psi)
        &= 2\left(\nabla_I(\Psi)^\dagger\gamma^{IJ}\nabla_J\Psi - 4\pi T^{0a}\overline{\Psi}\gamma_a\Psi\right) \\
        &= 2\left(\nabla_I(\Psi)^\dagger\nabla^I\Psi - (\gamma^I\nabla_I\Psi)^\dagger\gamma^J\nabla_J\Psi - 4\pi T^{0a}\overline{\Psi}\gamma_a\Psi\right),
    \end{align}
    which is the form of the Lichnerowicz identity needed in this work.
\end{proof}
It will prove useful to apply the Lichnerowicz identity together with the following result.
\begin{lemma}
    For any spacetime antisymmetric tensor, $M^{ab}$, 
    \begin{align}
        P_aD_bM^{ba} &= \widetilde{D}_b(P_aM^{ba}),
    \end{align}
    where $\widetilde{D}$ is the induced covariant derivative on $\Sigma$.
    \label{thm:antisymmetricDerivative}
\end{lemma}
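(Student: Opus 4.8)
The plan is to reduce the identity to the standard relation between the intrinsic divergence on $\Sigma$ and the ambient one. First I would set $W^b := P_aM^{ba}$ and note that $W$ is tangent to $\Sigma$: contracting once more gives $P_bW^b = P_bP_aM^{ba} = 0$, since the symmetric object $P_aP_b$ is paired with the antisymmetric $M^{ba}$. Hence the right-hand side is $\widetilde{D}_bW^b$, the intrinsic divergence of a genuine tangent vector field, and it suffices to show this equals the ``normal component'' $P_aD_bM^{ba}$ of the spacetime divergence.

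Next I would extend $P^a$ to a neighbourhood of $\Sigma$ as a unit vector field --- the identity involves only quantities intrinsic to $\Sigma$, so the extension is immaterial --- and combine two elementary ingredients. Using the orthogonal projector $h\indices{^a_b} = \delta\indices{^a_b} + P^aP_b$ onto $T\Sigma$ together with $P_bW^b = 0$, one finds $\widetilde{D}_bW^b = h\indices{^c_d}D_cW^d = D_bW^b - a_dW^d$, where $a^d := P^cD_cP^d$ is the acceleration of the $P$-congruence. On the other hand, the Leibniz rule gives $D_bW^b = (D_bP_a)M^{ba} + P_aD_bM^{ba}$. Comparing the two, the lemma is equivalent to the claim $(D_bP_a)M^{ba} = a_dW^d$. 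To establish this I would decompose $D_bP_a$ using $\delta\indices{^c_b} = h\indices{^c_b} - P^cP_b$: the purely tangential piece $h\indices{^c_b}D_cP_a$ is the (symmetric) second fundamental form $K_{ba}$ of $\Sigma$ --- here one uses that $P^aD_bP_a = \tfrac{1}{2}D_b(P^aP_a) = 0$ makes the $a$-slot automatically tangential, and that the tangential derivative of a hypersurface normal is symmetric --- and this is annihilated by the antisymmetric $M^{ba}$, while the remaining normal piece contributes $-(P^cD_cP_a)(P_bM^{ba}) = a_dW^d$, using $P_bM^{ba} = -W^a$. This closes the argument.

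The only delicate point is bookkeeping: $D_bW^b$ and the acceleration $a^d$ each depend on how $P$ is extended off $\Sigma$, but their combination does not, and one must see this cleanly rather than fight it. The decomposition above does exactly that --- it isolates the extension-dependent part of $D_bP_a$ as precisely the $P^b$-direction (the acceleration), which cancels between the two sides, leaving only the extension-independent, symmetric tangential part, which dies against $M$'s antisymmetry. Everything else is the Leibniz rule together with the standard projector formula for $\widetilde{D}$ in terms of $D$. A more invariant alternative would express $P_aD_bM^{ba}$ via Stokes' theorem for the spacetime $2$-form Hodge-dual to $M^{ab}$ and the divergence theorem on $\Sigma$, but the direct computation above is shorter and avoids tracking Hodge-star signs.
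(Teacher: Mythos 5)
Your proof is correct and follows essentially the same route as the paper's: express the induced divergence via the projector $h\indices{^a_b} = \delta\indices{^a_b} + P^aP_b$, apply the Leibniz rule, and kill the $D_bP_a$ contribution by pairing the symmetric second fundamental form against the antisymmetric $M^{ba}$. The only difference is bookkeeping --- you isolate the extension-dependent acceleration term $a_dW^d$ and show it cancels explicitly, whereas the paper's expansion produces instead the term $P^cP_bP_aD_cM^{ba}$, which vanishes directly by the symmetry of $P_aP_b$ against $M$'s antisymmetry.
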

\begin{proof}
    This result is from \cite{Cheng2005}. It can be proven as follows. Let $h_{ab}$ be the induced metric on $\Sigma$, i.e. $h_{ab} = g_{ab} + P_aP_b$. Then, since $h\indices{^a_c}P_bM^{cb} = P_bM^{ab}$ by $M^{ab}$'s antisymmetry, the induced covariant derivative acts as
    \begin{align}
        \widetilde{D}_b(P_aM^{ba}) &= h\indices{^c_b}D_{c}(P_aM^{ba}) \\
        &= K_{ba}M^{ba} + \delta\indices{^c_b}P_aD_{c}M^{ba} + P^cP_bP_aD_{c}M^{ba} \\
        &= P_aD_bM^{ba},
    \end{align}
    where $K_{ab}$ is the extrinsic curvature of $\Sigma$ in $M$ and $M^{ba}$'s antisymmetry has been applied.
\end{proof}
\begin{definition}[$Q(\Psi), Q(\Psi_1, \Psi_2)$]
    \label{def:q}
    For a Dirac spinor, $\Psi$, define $Q(\Psi)$ by 
    \begin{align}
        Q(\Psi) &= \int_{\Sigma}P_aD_b(E^{ba}(\Psi))\mathrm{d}V.
        \label{eq:q}
    \end{align}
    By lemma \ref{thm:antisymmetricDerivative}, 
    \begin{align}
        Q(\Psi) &= \int_{S}P_aQ_bE^{ba}(\Psi)\mathrm{d}A. 
    \end{align}
    Meanwhile, by theorem \ref{thm:lichnerowicz},
    \begin{align}
        Q(\Psi) &= 2\int_{\Sigma}\left(\nabla_I(\Psi)^\dagger\nabla^I\Psi - 4\pi T^{0a}\overline{\Psi}\gamma_a\Psi - (\gamma^I\nabla_I\Psi)^\dagger\gamma^J\nabla_J\Psi\right)\mathrm{d}V.
    \end{align}
    Similarly, define $Q(\Psi_1, \Psi_2)$ by 
    \begin{align}
        Q(\Psi_1, \Psi_2) &= \int_{\Sigma}P_aD_b(E^{ba}(\Psi_1, \Psi_2))\mathrm{d}V.
    \end{align}
\end{definition}
Although Dirac spinors are more convenient on $\Sigma$, the positive energy theorem associated to the Dougan-Mason construction requires two-component spinors on $S$.
\begin{lemma}
    If $\Psi = (\psi_\alpha, \bar{\chi}^{\dot{\alpha}})^T$, then
    \begin{align}
        Q(\Psi) &= 4\int_S\Big(\psi_\iota\eth\overline{\psi}_o + \overline{\psi}_\iota\bar{\eth}\psi_o - \overline{\chi}_o\eth\chi_\iota - \chi_o\bar{\eth}\overline{\chi}_\iota + \rho|\psi_o|^2 + \mu|\psi_\iota|^2 + \rho|\chi_o|^2 + \mu|\chi_\iota|^2 \nonumber \\
        &\,\,\,\,\,\,\,\,\,\,\,\,\,\,\,\,\,\, + \mathrm{i}k\sqrt{2}\big(\psi_o\chi_\iota + \psi_\iota\chi_o - \overline{\psi}_o\overline{\chi}_\iota - \overline{\psi}_\iota\overline{\chi}_o\big)\Big)\mathrm{d}A,
    \end{align}
    where $\eth$ and $\bar{\eth}$ are the eth and eth-bar operators defined in \cite{Geroch1973}.
    \label{thm:qGHP}
\end{lemma}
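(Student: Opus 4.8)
The starting point is the surface form of $Q(\Psi)$ supplied by Definition~\ref{def:q}, namely $Q(\Psi) = \int_S P_aQ_bE^{ba}(\Psi)\,\mathrm{d}A$. Since $E^{ba}(\Psi)$ is antisymmetric only $P_{[a}Q_{b]}$ contributes, and with $P = \tfrac{1}{\sqrt2}(l+n)$, $Q = \tfrac{1}{\sqrt2}(l-n)$ one finds $P_{[a}Q_{b]} = \tfrac12(n_al_b - l_an_b)$, whence $P_aQ_bE^{ba}(\Psi) = l_an_bE^{ab}(\Psi)$ (consistent with the integrand of $Q^{AB}$ in Definition~\ref{def:quasilocalMassIntroduction}). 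So the whole task is to expand $l_an_b\bigl(\overline{\Psi}\gamma^{abc}\nabla_c\Psi - \nabla_c(\overline{\Psi})\gamma^{abc}\Psi\bigr)$ pointwise on $S$ in terms of the dyad $\{o,\iota\}$ and the GHP operators.

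First I would reduce the gamma-matrix factor. Using the four-dimensional duality identity $\gamma^{abc} = \lambda\,\epsilon^{abcd}\gamma_5\gamma_d$ (with a convention-dependent constant $\lambda$) together with $\epsilon^{abcd}l_an_b \propto m^c\overbar{m}^d - \overbar{m}^c m^d$, which one reads off by substituting \eqref{eq:NPTetrad} and the vierbein expansion of $\epsilon$, gives $l_an_b\gamma^{abc}\nabla_c \propto \gamma_5\bigl((\overbar{m}\cdot\gamma)\,m^c\nabla_c - (m\cdot\gamma)\,\overbar{m}^c\nabla_c\bigr)$. The crucial point is that the free index of $\gamma^{abc}$ is contracted only against the directions $m,\overbar{m}$ tangent to $S$, so no derivative transverse to $S$ survives and the rest of the argument is purely algebraic on $S$.

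Next I would pass to two-component spinors: write $\Psi = (\psi_\alpha,\bar{\chi}^{\dot\alpha})^T$ and its Dirac conjugate (which in two-component form is built from $\chi^\alpha$ and $\bar{\psi}_{\dot\alpha}$, the reason the final answer is symmetric under $\psi\leftrightarrow\chi$ up to signs), take the gamma matrices in Weyl form so $\gamma_5$ acts by $\pm1$ on the chiral blocks and $m\cdot\gamma$, $\overbar{m}\cdot\gamma$ are the off-diagonal blocks built from $m_{\alpha\dot\alpha} = \iota_\alpha\bar{o}_{\dot\alpha}$, $\overbar{m}_{\alpha\dot\alpha} = o_\alpha\bar{\iota}_{\dot\alpha}$, and decompose each spinor via \eqref{eq:psiOPsiI}. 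For the $D_m$, $D_{\overbar{m}}$ pieces of $\nabla_m$, $\nabla_{\overbar{m}}$ I would use the NP derivatives of $o_\alpha,\iota_\alpha$ along $m$ and $\overbar{m}$: these turn $D_m$ acting on a dyad component into the GHP $\eth$ (or $\bar{\eth}$) of that component plus spin-coefficient cross terms, and they are precisely where $\rho$ (from $\overbar{m}^aD_ao_\alpha$) and $\mu$ (from $m^aD_a\iota_\alpha$) --- real by Lemma~\ref{thm:muRho} --- enter, producing $\rho|\psi_o|^2$, $\mu|\psi_\iota|^2$ and their $\chi$-counterparts. The $\mathrm{i}k(m\cdot\gamma)$, $\mathrm{i}k(\overbar{m}\cdot\gamma)$ pieces of the modified connection are off-diagonal, so they pair a $\psi$-component with a $\chi$-component and, after dyad projection using $\iota^\alpha o_\alpha = \sqrt2$, collapse to the $\mathrm{i}k\sqrt2(\psi_o\chi_\iota + \psi_\iota\chi_o - \overline{\psi}_o\overline{\chi}_\iota - \overline{\psi}_\iota\overline{\chi}_o)$ term.

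Finally I would collect the four contributions. Every spin coefficient other than $\rho,\mu$ (that is, $\alpha,\beta,\sigma,\lambda$ and the spurious $k$-combinations) must either cancel pairwise between the $\nabla_c(\overline{\Psi})$ and $\overline{\Psi}\nabla_c$ halves, be absorbed into the definitions of $\eth$ and $\bar{\eth}$, or --- when the leftover is a total $\eth$-derivative of a scalar of the appropriate GHP weight --- integrate to zero over the closed surface $S$; this last step is also what picks out the single representative $\psi_\iota\eth\overline{\psi}_o$ (rather than $-\overline{\psi}_o\eth\psi_\iota$) in the statement. Tracking the $\sqrt2$'s from the tetrad and dyad normalisations and the duality constant $\lambda$ then produces the overall factor $4$. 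I expect the main obstacle to be the bookkeeping: keeping every sign straight across the gamma-matrix, $\gamma_5$ and charge-conjugation conventions of Appendix~\ref{sec:conventions}, the NP sign conventions of Appendix~\ref{sec:penroseRindler}, and the constant $\lambda$, and then verifying that the unwanted spin-coefficient terms genuinely cancel rather than leaving a residue, so that exactly the eight listed scalars remain.
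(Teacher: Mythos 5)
Your proposal is correct and follows essentially the same route as the paper: reduce $P_aQ_bE^{ba}$ to $l_an_bE^{ab}$, use the duality of the totally antisymmetrised triple gamma product (your $\gamma^{abc}\propto\epsilon^{abcd}\gamma_5\gamma_d$ is just the four-component packaging of the paper's $(\sigma_{[a}\tilde\sigma_b\sigma_{c]})_{\alpha\dot\alpha}=\mathrm{i}\varepsilon_{abcd}(\sigma^d)_{\alpha\dot\alpha}$) so that only tangential derivatives survive, decompose onto the dyad so that $\alpha,\beta$ are absorbed into $\eth,\bar\eth$ and only $\rho,\mu$ remain, and finish with the GHP integration by parts that selects the representatives $\psi_\iota\eth\overline{\psi}_o$ etc. The remaining work is exactly the sign and normalisation bookkeeping you identify, which the paper carries out explicitly.
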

\begin{proof}
    By equation \ref{eq:NPTetrad}, definition \ref{def:q} and $E^{ba}$'s antisymmetry, 
    \begin{align}
        Q(\Psi) &= \int_{S}P_aQ_bE^{ba}(\Psi)\mathrm{d}A = \int_{S}l_an_bE^{ab}(\Psi)\mathrm{d}A = \frac{1}{4}\int_{S}l_{\alpha\dot{\alpha}}n_{\beta\dot{\beta}}E^{\alpha\dot{\alpha}\beta\dot{\beta}}(\Psi)\mathrm{d}A.
        \label{eq:qLNE}
    \end{align}
    Finding $E_{\alpha\dot{\alpha}\beta\dot{\beta}}(\Psi)$ is a long, tedious calculation, summarised as follows.
    \begin{align}
        E_{\alpha\dot{\alpha}\beta\dot{\beta}}(\Psi) &= (\sigma_a)_{\alpha\dot{\alpha}}(\sigma_b)_{\beta\dot{\beta}}E^{ab}(\Psi) = (\sigma_a)_{\alpha\dot{\alpha}}(\sigma_b)_{\beta\dot{\beta}}(\overline{\Psi}\gamma^{abc}\nabla_c\Psi - \nabla_c(\overline{\Psi})\gamma^{abc}\Psi).
    \end{align}
    It suffices to study only the first term; the second term is just the complex conjugate. By applying the Weyl representation of the gamma matrices, as given in appendix \ref{sec:conventions}, one finds
    \begin{align}
        &(\sigma_a)_{\alpha\dot{\alpha}}(\sigma_b)_{\beta\dot{\beta}}\overline{\Psi}\gamma^{abc}\nabla_c\Psi \nonumber \\
        &= -(\sigma_a)_{\alpha\dot{\alpha}}(\sigma_b)_{\beta\dot{\beta}}\chi^\gamma(\sigma^{[a}\tilde{\sigma}^b\sigma^{c]})_{\gamma\dot{\gamma}}D_c\bar{\chi}^{\dot{\gamma}} - (\sigma_a)_{\alpha\dot{\alpha}}(\sigma_b)_{\beta\dot{\beta}}\overline{\psi}_{\dot{\gamma}}(\tilde{\sigma}^{[a}\sigma^b\tilde{\sigma}^{c]})^{\dot{\gamma}\gamma}D_c\psi_\gamma \nonumber \\
        &\,\,\,\,\,\,\, + 2\mathrm{i}k(\sigma_a)_{\alpha\dot{\alpha}}(\sigma_b)_{\beta\dot{\beta}}\chi^\gamma(\sigma^{[a}\tilde{\sigma}^{b]})\indices{_\gamma^\delta}\psi_\delta + 2\mathrm{i}k(\sigma_a)_{\alpha\dot{\alpha}}(\sigma_b)_{\beta\dot{\beta}}\overline{\psi}_{\dot{\gamma}}(\tilde{\sigma}^{[a}\sigma^{b]})\indices{^{\dot{\gamma}}_{\dot{\delta}}}\bar{\chi}^{\dot{\delta}}.
        \label{eq:randoEquation0}
    \end{align}
    Consider this expression term by term. From the identity, 
    \begin{align}
        (\sigma_a)_{\alpha\dot{\beta}}(\tilde{\sigma}_b)^{\dot{\beta}\beta}(\sigma_c)_{\beta\dot{\alpha}} &= g_{ca}(\sigma_b)_{\alpha\dot{\alpha}} - g_{bc}(\sigma_a)_{\alpha\dot{\alpha}} - g_{ab}(\sigma_c)_{\alpha\dot{\alpha}} + \mathrm{i}\varepsilon_{abcd}(\sigma^d)_{\alpha\dot{\alpha}},
    \end{align}
    it follows that
    \begin{align}
    (\sigma_{[a}\tilde{\sigma}_b\sigma_{c]})_{\alpha\dot{\alpha}} &= \mathrm{i}\varepsilon_{abcd}(\sigma^d)_{\alpha\dot{\alpha}} = (\sigma_a)_{\alpha\dot{\beta}}(\tilde{\sigma}_b)^{\dot{\beta}\beta}(\sigma_c)_{\beta\dot{\alpha}} - g_{ca}(\sigma_b)_{\alpha\dot{\alpha}} + g_{bc}(\sigma_a)_{\alpha\dot{\alpha}} + g_{ab}(\sigma_c)_{\alpha\dot{\alpha}}.
    \end{align}
    In conjunction with the identities in appendix \ref{sec:identities}, the original term then reduces to
    \begin{align}
        -(\sigma_a)_{\alpha\dot{\alpha}}(\sigma_b)_{\beta\dot{\beta}}\chi^\gamma(\sigma^{[a}\tilde{\sigma}^b\sigma^{c]})_{\gamma\dot{\gamma}}D_c\bar{\chi}^{\dot{\gamma}} 
        &= 2\chi_\alpha D_{\beta\dot{\alpha}}\bar{\chi}_{\dot{\beta}} - 2\chi_\beta D_{\alpha\dot{\beta}}\bar{\chi}_{\dot{\alpha}}.
    \end{align}
    The other terms in equation \ref{eq:randoEquation0} are handled similarly to get
    \begin{align}
        &-(\sigma_a)_{\alpha\dot{\alpha}}(\sigma_b)_{\beta\dot{\beta}}\overline{\psi}_{\dot{\gamma}}(\tilde{\sigma}^{[a}\sigma^b\tilde{\sigma}^{c]})^{\dot{\gamma}\gamma}D_c\psi_\gamma = 2\overline{\psi}_{\dot{\alpha}}D_{\alpha\dot{\beta}}\psi_\beta - 2\overline{\psi}_{\dot{\beta}}D_{\beta\dot{\alpha}}\psi_\alpha \,\,\,\mathrm{and} \\
        &2\mathrm{i}k(\sigma_a)_{\alpha\dot{\alpha}}(\sigma_b)_{\beta\dot{\beta}}\chi^\gamma(\sigma^{[a}\tilde{\sigma}^{b]})\indices{_\gamma^\delta}\psi_\delta + 2\mathrm{i}k(\sigma_a)_{\alpha\dot{\alpha}}(\sigma_b)_{\beta\dot{\beta}}\overline{\psi}_{\dot{\gamma}}(\tilde{\sigma}^{[a}\sigma^{b]})\indices{^{\dot{\gamma}}_{\dot{\delta}}}\bar{\chi}^{\dot{\delta}} \nonumber \\
        &= -4\mathrm{i}k\varepsilon_{\alpha
        \beta}(\overline{\psi}_{\dot{\alpha}}\bar{\chi}_{\dot{\beta}} + \overline{\psi}_{\dot{\beta}}\bar{\chi}_{\dot{\alpha}}) + 4\mathrm{i}k\varepsilon_{\dot{\alpha}\dot{\beta}}(\chi_\alpha\psi_\beta + \chi_\beta\psi_\alpha).
    \end{align}
    Putting it all together, equation \ref{eq:randoEquation0} reduces to
    \begin{align}
        (\sigma_a)_{\alpha\dot{\alpha}}(\sigma_b)_{\beta\dot{\beta}}\overline{\Psi}\gamma^{abc}\nabla_c\Psi
        &= 2\chi_\alpha D_{\beta\dot{\alpha}}\bar{\chi}_{\dot{\beta}} - 2\chi_\beta D_{\alpha\dot{\beta}}\bar{\chi}_{\dot{\alpha}} + 2\overline{\psi}_{\dot{\alpha}}D_{\alpha\dot{\beta}}\psi_\beta - 2\overline{\psi}_{\dot{\beta}}D_{\beta\dot{\alpha}}\psi_\alpha \nonumber \\
        &\,\,\,\,\,\,\, - 4\mathrm{i}k\varepsilon_{\alpha
        \beta}(\overline{\psi}_{\dot{\alpha}}\bar{\chi}_{\dot{\beta}} + \overline{\psi}_{\dot{\beta}}\bar{\chi}_{\dot{\alpha}}) + 4\mathrm{i}k\varepsilon_{\dot{\alpha}\dot{\beta}}(\chi_\alpha\psi_\beta + \chi_\beta\psi_\alpha).
    \end{align}
    Then, by adding the complex conjugate,
    \begin{align}
        E_{\alpha\dot{\alpha}\beta\dot{\beta}} &= 2\big(\chi_\alpha D_{\beta\dot{\alpha}}\bar{\chi}_{\dot{\beta}} - \chi_\beta D_{\alpha\dot{\beta}}\bar{\chi}_{\dot{\alpha}} + \overline{\psi}_{\dot{\alpha}}D_{\alpha\dot{\beta}}\psi_\beta - \overline{\psi}_{\dot{\beta}}D_{\beta\dot{\alpha}}\psi_\alpha + \bar{\chi}_{\dot{\alpha}}D_{\alpha\dot{\beta}}\chi_\beta - \bar{\chi}_{\dot{\beta}}D_{\beta\dot{\alpha}}\chi_\alpha \nonumber \\
        &\,\,\,\,\,\,\, + \psi_\alpha D_{\beta\dot{\alpha}}\overline{\psi}_{\dot{\beta}}  - \psi_\beta D_{\alpha\dot{\beta}}\overline{\psi}_{\dot{\alpha}}\big) + 8\mathrm{i}k\big(-\varepsilon_{\alpha
        \beta}(\overline{\psi}_{\dot{\alpha}}\bar{\chi}_{\dot{\beta}} + \overline{\psi}_{\dot{\beta}}\bar{\chi}_{\dot{\alpha}}) + \varepsilon_{\dot{\alpha}\dot{\beta}}(\psi_\alpha\chi_\beta + \psi_\beta\chi_\alpha)\big).
    \end{align}
    Therefore, by equations \ref{eq:lnDyad} and \ref{eq:mDyad}, the required integrand is 
    \begin{align}
        l^{\alpha\dot{\alpha}}n^{\beta\dot{\beta}}E_{\alpha\dot{\alpha}\beta\dot{\beta}}(\Psi) &= o^{\alpha}\bar{o}^{\dot{\alpha}}\iota^\beta\bar{\iota}^{\dot{\beta}}E_{\alpha\dot{\alpha}\beta\dot{\beta}} \\
        &= 4\sqrt{2}\big(\chi_\iota\bar{\iota}^{\dot{\beta}}\delta\bar{\chi}_{\dot{\beta}} + \chi_o\bar{o}^{\dot{\alpha}}\bar{\delta}\bar{\chi}_{\dot{\alpha}} + \overline{\psi}_\iota\iota^\beta\bar{\delta}\psi_\beta + \overline{\psi}_o o^\alpha\delta\psi_\alpha + \overline{\chi}_\iota\iota^\beta\bar{\delta}\chi_\beta + \overline{\chi}_oo^\alpha\delta\chi_\alpha \nonumber \\
        &\,\,\,\,\,\,\, + \psi_\iota\bar{\iota}^{\dot{\beta}}\delta\overline{\psi}_{\dot{\beta}} + \psi_o\bar{o}^{\dot{\alpha}}\bar{\delta}\,\overline{\psi}_{\dot{\alpha}}\big) + 16\sqrt{2}\mathrm{i}k\big(-\overline{\psi}_\iota\overline{\chi}_o - \overline{\psi}_o\overline{\chi}_\iota + \psi_\iota\chi_o + \psi_o\chi_\iota\big),
    \end{align}
    where $\delta$ and $\bar{\delta}$ denote $m^aD_a$ and $\overbar{m}^aD_a$ respectively. All the derivative terms can be re-written in terms of the GHP $\eth$ and $\bar{\eth}$ operators \cite{Geroch1973}. In particular, $o_\alpha$ and $\iota_\alpha$ are GHP type-(0, 1) and type-(0, -1) respectively by definition\footnote{In the GHP formalism, an object, $f_{p, q}$, is said to be of type-$(p, q)$ if and only if $f_{p, q} \to z^p\bar{z}^qf_{p, q}$ when $\bar{o}^{\dot{\alpha}} \to z\bar{o}^{\dot{\alpha}}$ and $\bar{\iota}^{\dot{\alpha}} \to \bar{\iota}^{\dot{\alpha}}/z$.}. Since $\psi_\alpha$ (and likewise for $\chi_\alpha$ etc.) is invariant under choice of spinor dyad, it must be that $\psi_o$ and $\psi_\iota$ are type-(0, -1) and type-(0, 1) respectively. For a type-$(p, q)$ object, $f_{p, q}$, in terms of NP quantities, $\eth$ and $\bar{\eth}$ are defined to act as
    \begin{align}
        \eth f_{p, q} &= \delta f_{p, q} - p\beta f_{p, q} - q\bar{\alpha}f_{p, q}
        \label{eq:edth} \\
        \mathrm{and}\,\,\bar{\eth}f_{p, q} &= \bar{\delta}f_{p, q} - p\alpha f_{p, q} - q\bar{\beta}f_{p, q}.
        \label{eq:edthBar}
    \end{align}
    It immediately follows that
    \begin{align}
        \bar{\iota}^{\dot{\beta}}\delta\bar{\chi}_{\dot{\beta}} &= \sqrt{2}\left(\eth\overline{\chi}_o + \mu\overline{\chi}_\iota\right), \\
        \bar{o}^{\dot{\alpha}}\bar{\delta}\bar{\chi}_{\dot{\alpha}} &= -\sqrt{2}\left(\bar{\eth}\overline{\chi}_\iota - \rho\overline{\chi}_o\right), \\
        \iota^\beta\bar{\delta}\psi_\beta &= \sqrt{2}\left(\bar{\eth}\psi_o + \mu\psi_\iota\right) \,\,\,\mathrm{and} \\
        o^\alpha\delta\psi_\alpha &= -\sqrt{2}\left(\eth \psi_\iota - \rho\psi_o\right).
    \end{align}
    Substituting back, simplifying and taking complex conjugates as needed yields
    \begin{align}
        l^{\alpha\dot{\alpha}}n^{\beta\dot{\beta}}E_{\alpha\dot{\alpha}\beta\dot{\beta}}(\Psi) &= 8\big(\chi_\iota\eth\overline{\chi}_o - \chi_o\bar{\eth}\overline{\chi}_\iota + \overline{\psi}_\iota\bar{\eth}\psi_o - \overline{\psi}_o\eth\psi_\iota + \overline{\chi}_\iota\bar{\eth}\chi_o - \overline{\chi}_o\eth\chi_\iota + \psi_\iota\eth\overline{\psi}_o - \psi_o\bar{\eth}\overline{\psi}_\iota\big) \nonumber \\
        &\,\,\,\,\,\,\, + 16\big(\mu|\chi_\iota|^2 + \rho|\chi_o|^2 + \mu|\psi_\iota|^2 + \rho|\psi_o|^2\big) \nonumber \\
        &\,\,\,\,\,\,\, + 16\sqrt{2}\mathrm{i}k\big(-\overline{\psi}_\iota\overline{\chi}_o - \overline{\psi}_o\overline{\chi}_\iota + \psi_\iota\chi_o + \psi_o\chi_\iota\big).
        \label{eq:randoEquation1}
    \end{align}
    The $\eth$ and $\bar{\eth}$ operators were constructed by GHP \cite{Geroch1973} such that integration by parts is valid on $S$, e.g.
    \begin{align}
        \int_S\overline{\psi}_o\eth(\psi_\iota)\mathrm{d}A = -\int_S\psi_\iota\eth(\overline{\psi}_o)\mathrm{d}A.
    \end{align}
    Substituting equation \ref{eq:randoEquation1} into equation \ref{eq:qLNE} and integrating by parts proves the lemma.
\end{proof}

\section{Elements of analysis}
\label{sec:analysis}
A key idea of Witten's method is applying the Lichnerowicz identity with a spinor, $\Psi$, solving $\gamma^I\nabla_I\Psi = 0$ on $\Sigma$. This section is dedicated to proving this is always possible given appropriate boundary conditions on $S$ and given an appropriate functional space for $\Psi$. Readers willing to take this fact for granted can skip ahead to section \ref{sec:definition}. The presentation here is heavily based on \cite{Bartnik2005, ChruscielBartnik2003, Chrusciel2010}. 
\begin{definition}[$C_b^\infty$]
    Let $C_b^\infty$ be the space of Dirac spinors, $\Psi = (\psi_\alpha, \overbar{\chi}^{\dot{\alpha}})^T$, which are smooth on $\Sigma$ and subject to the boundary conditions, $\psi_o = \chi_\iota = 0$ on $S$.
    \label{def:cbInfinity}
\end{definition}
\begin{lemma}[$\langle\cdot, \cdot\rangle_{C_b^\infty}$]
    \label{def:cbInnerProduct}
    Assume the dominant energy condition holds on $\Sigma$ and the null expansions on $S$ satisfy $\theta_l > 0$, $\theta_n < 0$ \& $\theta_l\theta_n < -8k^2$. Then, 
    \begin{align}
        \langle\Psi_1, \Psi_2\rangle_{C_b^\infty} &= \int_\Sigma\left((\nabla_I\Psi_1)^\dagger\nabla^I\Psi_2 + 4\pi T^{0a}\Psi_1^\dagger\gamma_0\gamma_a\Psi_2\right)\mathrm{d}V - Q(\Psi_1, \Psi_2)
    \end{align}
    defines an inner product on $C_b^\infty$.
\end{lemma}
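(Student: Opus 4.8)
The plan is to verify the three defining properties of a complex inner product on $C_b^\infty$: sesquilinearity, conjugate symmetry, and positive-definiteness. Sesquilinearity should be essentially immediate, since each of the three pieces of $\langle\Psi_1,\Psi_2\rangle_{C_b^\infty}$ — the bulk density $(\nabla_I\Psi_1)^\dagger\nabla^I\Psi_2$, the matter density $4\pi T^{0a}\Psi_1^\dagger\gamma_0\gamma_a\Psi_2$, and $Q(\Psi_1,\Psi_2)$ through $E^{ba}(\Psi_1,\Psi_2)=\overline{\Psi}_1\gamma^{abc}\nabla_c\Psi_2-\nabla_c(\overline{\Psi}_1)\gamma^{abc}\Psi_2$ — is conjugate-linear in $\Psi_1$ and linear in $\Psi_2$, and all integrals are finite because $\Sigma$ is compact and the spinors smooth. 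For conjugate symmetry I would conjugate term by term: $\overline{u^\dagger M v}=v^\dagger M^\dagger u$ together with the standard reality properties of Dirac bilinears handles the first two densities, while $\overline{E^{ba}(\Psi_1,\Psi_2)}=E^{ba}(\Psi_2,\Psi_1)$ is forced by the ``$+\,\mathrm{c.c.}$'' structure used to define $E^{ba}(\Psi)$ in \eqref{eq:eAB}; hence $\overline{\langle\Psi_1,\Psi_2\rangle_{C_b^\infty}}=\langle\Psi_2,\Psi_1\rangle_{C_b^\infty}$.

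The substance is positivity. Setting $\Psi_1=\Psi_2=\Psi$, I would display $\langle\Psi,\Psi\rangle_{C_b^\infty}$ as a sum of three pieces. The first, $\int_\Sigma(\nabla_I\Psi)^\dagger\nabla^I\Psi\,\mathrm{d}V$, is a sum of squared Hermitian norms in the positive-definite spatial metric, hence pointwise $\geq 0$. The second, $\int_\Sigma 4\pi T^{0a}\Psi^\dagger\gamma_0\gamma_a\Psi\,\mathrm{d}V$, is pointwise $\geq 0$ by the dominant energy condition and the causal character of the Dirac current, exactly as in Witten's argument \cite{Witten1981} and the treatments in \cite{Bartnik2005, ChruscielBartnik2003, Chrusciel2010}. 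The third is $-Q(\Psi)$, which I would evaluate using lemma \ref{thm:qGHP}: on $C_b^\infty$ we have $\psi_o=\chi_\iota=0$ on $S$, and since $\eth$ and $\bar{\eth}$ differentiate only tangentially along $S$, the terms $\psi_\iota\eth\overline{\psi}_o$, $\overline{\psi}_\iota\bar{\eth}\psi_o$, $\overline{\chi}_o\eth\chi_\iota$, $\chi_o\bar{\eth}\overline{\chi}_\iota$, the terms $\rho|\psi_o|^2$, $\mu|\chi_\iota|^2$, and the $k$-terms carrying $\psi_o$ or $\chi_\iota$ all vanish on $S$, leaving $-Q(\Psi)=4\int_S\big((-\mu)|\psi_\iota|^2+(-\rho)|\chi_o|^2+2\sqrt{2}\,k\,\mathrm{Im}(\psi_\iota\chi_o)\big)\mathrm{d}A$. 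By lemma \ref{thm:muRho}, $-\mu=-\tfrac12\theta_n>0$ and $-\rho=\tfrac12\theta_l>0$, while $(-\mu)(-\rho)-(\sqrt{2}k)^2=\mu\rho-2k^2=-\tfrac14\theta_l\theta_n-2k^2>0$ precisely because $\theta_l\theta_n<-8k^2$; thus the $2\times2$ real symmetric matrix with diagonal entries $-\mu,-\rho$ and off-diagonal entry $-\sqrt{2}\,k$ is positive definite, and since $|\mathrm{Im}(\psi_\iota\chi_o)|\leq|\psi_\iota||\chi_o|$ the boundary integrand is bounded below by a positive-definite expression in $(|\psi_\iota|,|\chi_o|)$. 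Hence all three pieces are $\geq 0$ and $\langle\Psi,\Psi\rangle_{C_b^\infty}\geq 0$.

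For definiteness, if $\langle\Psi,\Psi\rangle_{C_b^\infty}=0$ then all three non-negative pieces vanish. Continuity and vanishing of $\int_\Sigma(\nabla_I\Psi)^\dagger\nabla^I\Psi\,\mathrm{d}V$ force $\nabla_I\Psi=0$ throughout $\Sigma$; vanishing of $-Q(\Psi)$ together with the strict positivity of the boundary form just established forces $\psi_\iota=\chi_o=0$ on $S$, which with the standing conditions $\psi_o=\chi_\iota=0$ gives $\Psi=0$ on $S=\partial\Sigma$. Finally $\nabla_I\Psi=0$ reads $D_v\Psi=-\mathrm{i}k\,v^a\gamma_a\Psi$ for every $v$ tangent to $\Sigma$, a linear first-order ODE for $\Psi$ along any curve in $\Sigma$, so uniqueness of solutions and connectedness of $\Sigma$ propagate the vanishing of $\Psi$ from $S$ to all of $\Sigma$; hence $\Psi=0$ and $\langle\cdot,\cdot\rangle_{C_b^\infty}$ is an inner product.

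I expect the only delicate point to be the boundary estimate: the work lies in recognising that, once the boundary conditions are imposed, what survives of $Q(\Psi)$ is a quadratic form whose positive-definiteness is equivalent to the assumption $\theta_l\theta_n<-8k^2$, with the off-diagonal entry $\sqrt{2}\,k$ coming entirely from the extra term in the modified connection $\nabla$. This is a genuine strengthening of the Dougan--Mason case ($k=0$) and is precisely where the trapping-type hypothesis on the null expansions is used.
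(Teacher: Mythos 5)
Your proposal is correct and follows the same overall structure as the paper's proof: split $\langle\Psi,\Psi\rangle_{C_b^\infty}$ into the gradient term, the matter term handled by the dominant energy condition, and the boundary term $-Q(\Psi)$ evaluated via lemma \ref{thm:qGHP} with the boundary conditions $\psi_o=\chi_\iota=0$, followed by the same ODE-propagation argument for definiteness. The one step you do differently is the sign of the boundary form: the paper introduces a GHP boost rescaling $\psi_\iota\mapsto z\psi_\iota$, $\chi_o\mapsto\chi_o/z$ with $z=\sqrt[4]{\mu/\rho}$ to equalise $\mu'=\rho'=-\sqrt{\mu\rho}$ and then completes a square, whereas you bound $2\sqrt{2}\,k\,\mathrm{Im}(\psi_\iota\chi_o)$ by $-2\sqrt{2}\,k|\psi_\iota||\chi_o|$ and apply Sylvester's criterion to the resulting $2\times 2$ form, whose determinant condition $\mu\rho>2k^2$ is exactly $\theta_l\theta_n<-8k^2$. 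The two are equivalent, but your version is slightly more elementary and makes it transparent that the single product inequality (rather than the pair $\theta_l>2\sqrt{2}k$, $\theta_n<-2\sqrt{2}k$) is precisely the positive-definiteness of that matrix --- the point the paper attributes to the boost trick in a footnote. Both routes yield the same strict positivity on the boundary needed for the definiteness step.
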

\begin{proof}
    Conjugate symmetry and linearity in the second argument are manifest, so only positive definiteness needs to be checked.
    \begin{align}
        \langle\Psi, \Psi\rangle_{C_b^\infty} &= \int_\Sigma\left((\nabla_I\Psi)^\dagger\nabla^I\Psi + 4\pi T^{0a}\Psi^\dagger\gamma_0\gamma_a\Psi\right)\mathrm{d}V - Q(\Psi).
        \label{eq:cbNorm}
    \end{align}
    $\Psi \in C_b^\infty$ implies $\psi_o = \chi_\iota = 0$ on $S$. Then, by lemma \ref{thm:qGHP},
    \begin{align}
        Q(\Psi) &= 4\int_S\Big(\mu|\psi_\iota|^2 + \rho|\chi_o|^2 + \mathrm{i}k\sqrt{2}\big(\psi_\iota\chi_o - \overline{\psi}_\iota\overline{\chi}_o\big)\Big)\mathrm{d}A.
    \end{align}
    For any nonwhere vanishing, complex function, $z$, on $S$, $Q(\Psi)$ can also be written as\footnote{The next equation is formally identical to GHP boost invariance. I'd like to thank Harvey Reall for pointing out this process allows for a single inequality, $\theta_l\theta_n < -8k^2$, as opposed to a pair of inequalities, $\theta_l > 2\sqrt{2}k$ and $\theta_n < -2\sqrt{2}k$.}
    \begin{align}
        Q(\Psi) &= 4\int_S\bigg(\frac{\mu}{|z|^2}|z\psi_\iota|^2 + \rho|z|^2\bigg|\frac{\chi_o}{z}\bigg|^2 + \mathrm{i}k\sqrt{2}\bigg(z\psi_\iota\frac{\chi_o}{z} - \bar{z}\overline{\psi}_\iota\frac{\overline{\chi}_o}{\bar{z}}\bigg)\bigg)\mathrm{d}A
    \end{align}
    Let $\mu^\prime = \mu/|z|^2$, $\rho^\prime = |z|^2\rho$, $\psi_\iota^\prime = z\psi_\iota$ and $\chi_o^\prime = \chi_o/z$. Then,
    \begin{align}
        Q(\Psi) &= 4\int_S\Big(\mu^\prime|\psi^\prime_\iota|^2 + \rho^\prime|\chi^\prime_o|^2 + \mathrm{i}k\sqrt{2}\big(\psi^\prime_\iota\chi^\prime_o - \overline{\psi}^\prime_\iota\overline{\chi}^\prime_o\big)\Big)\mathrm{d}A \\
        &= 4\int_S\Big((\mu^\prime + k\sqrt{2})|\psi^\prime_\iota|^2 + (\rho^\prime + k\sqrt{2})|\chi^\prime_o|^2  - k\sqrt{2}|\psi^\prime_\iota + \mathrm{i}\overline{\chi}^\prime_o|^2\Big)\mathrm{d}A \\
        &\leq 4\int_S\Big((\mu^\prime + k\sqrt{2})|\psi^\prime_\iota|^2 + (\rho^\prime + k\sqrt{2})|\chi^\prime_o|^2\Big)\mathrm{d}A.
        \label{eq:qPsiCb}
    \end{align}
    Choose $z = \sqrt[4]{\mu/\rho}$ so that $\mu^\prime = \rho^\prime = -\sqrt{\mu\rho} = -\frac{1}{2}\sqrt{-\theta_l\theta_n} < -k\sqrt{2}$ by lemma \ref{thm:muRho}. Therefore it immediately follows that $Q(\Psi) \leq 0$.

    Next, consider $T^{0a}\Psi^\dagger\gamma_0\gamma_a\Psi = \Psi^\dagger(T^{00}I + T^{0I}\gamma_0\gamma_I)\Psi$. The eigenvalues of $T^{0I}\gamma_0\gamma_I$ are $\pm\sqrt{T^{0I}T\indices{^0_I}}$, so $T^{0a}\gamma_0\gamma_a$ is non-negative definite if and only if $T^{00} \geq \sqrt{T^{0I}T\indices{^0_I}}$, which holds from the dominant energy condition. In summary, all three terms in equation \ref{eq:cbNorm} are non-negative and therefore
    $\langle\Psi, \Psi\rangle_{C_b^\infty} \geq 0$.

    Finally, suppose $\langle\Psi, \Psi\rangle_{C_b^\infty} = 0$. Then, by equation \ref{eq:cbNorm}, $\nabla_I\Psi = 0$ on $\Sigma$ and $Q(\Psi) = 0$. The boundary conditions already imply $\psi_o = \chi_\iota = 0$ on $S$, equation \ref{eq:qPsiCb} then implies $\chi_o = \psi_\iota = 0$ on $S$ too and therefore $\Psi = 0$ on $S$. Choose an arbitrary point, $p \in \Sigma$, and a smooth curve from any point on $S$ to $p$. Let $t^I$ be the tangent to the curve. Then, $t^I\nabla_I\Psi = 0$ along the curve and $\Psi = 0$ at the initial point. It follows that $\Psi = 0$ everywhere along the curve since $\Psi$ is smooth and thus 1st order, linear, homogeneous ODEs will have a unique solution. But, $p$ is arbitrary, so $\Psi = 0$ everywhere on $\Sigma$. Therefore, one concludes $\langle\cdot, \cdot\rangle_{C_b^\infty}$ is positive definite.
\end{proof}
\begin{definition}[$\mathfrak{D}$]
    Define a linear operator, $\mathfrak{D} : C_b^\infty \to L^2$, by $\mathfrak{D} : \Psi \mapsto \gamma^I\nabla_I\Psi$.
    \label{def:G}
\end{definition}
\begin{lemma}
    $\langle\Psi_1, \Psi_2\rangle_{C_b^\infty} = \langle \mathfrak{D}(\Psi_1), \mathfrak{D}(\Psi_2)\rangle_{L^2}$.
    \label{thm:cbToL2}
\end{lemma}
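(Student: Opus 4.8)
The plan is to obtain the identity by pure substitution from a polarised form of the Lichnerowicz identity of Theorem~\ref{thm:lichnerowicz}; no integration by parts on $\Sigma$ and no use of the boundary conditions defining $C_b^\infty$ should be needed. I would first polarise Theorem~\ref{thm:lichnerowicz}. Its proof is linear in $\Psi$ throughout — each step is a covariant derivative, an application of the Clifford relations, the standard Lichnerowicz identity $\gamma^{abc}D_bD_c\Psi=-\tfrac12\big(R^{ab}-\tfrac12 g^{ab}R\big)\gamma_b\Psi$, or the Einstein equation, and the conjugated and unconjugated copies of $\Psi$ never couple nonlinearly — so repeating the computation with $\overline{\Psi}_1$ in every conjugated slot and $\Psi_2$ in every unconjugated slot gives
\begin{align}
    P_aD_b\big(E^{ba}(\Psi_1,\Psi_2)\big)=2\Big(\nabla_I(\Psi_1)^\dagger\nabla^I\Psi_2-4\pi T^{0a}\overline{\Psi}_1\gamma_a\Psi_2-(\gamma^I\nabla_I\Psi_1)^\dagger\gamma^J\nabla_J\Psi_2\Big).
\end{align}
Integrating over $\Sigma$ and invoking Definition~\ref{def:q} then expresses $Q(\Psi_1,\Psi_2)$ through the gradient integral $\int_\Sigma\nabla_I(\Psi_1)^\dagger\nabla^I\Psi_2\,\mathrm{d}V$, the matter integral, and $\langle\mathfrak{D}(\Psi_1),\mathfrak{D}(\Psi_2)\rangle_{L^2}$, using $\mathfrak{D}\Psi=\gamma^I\nabla_I\Psi$ and the natural $L^2$ inner product $\langle\Phi,\Psi\rangle_{L^2}=\int_\Sigma\Phi^\dagger\Psi\,\mathrm{d}V$ on Dirac spinors over $\Sigma$.

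Second, I would substitute this expression for $Q(\Psi_1,\Psi_2)$ into the definition of $\langle\Psi_1,\Psi_2\rangle_{C_b^\infty}$ from Lemma~\ref{def:cbInnerProduct}. With $\overline{\Psi}_1\gamma_a\Psi_2=\Psi_1^\dagger\gamma_0\gamma_a\Psi_2$ (appendix~\ref{sec:conventions}), the gradient integrals and the matter integrals appearing in $\langle\cdot,\cdot\rangle_{C_b^\infty}$ are cancelled exactly by those produced by $Q(\Psi_1,\Psi_2)$, leaving only $\langle\mathfrak{D}(\Psi_1),\mathfrak{D}(\Psi_2)\rangle_{L^2}$, which is the claim. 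As a consistency check, both sides are conjugate-symmetric, in agreement with the polarised right-hand side above being conjugate-symmetric term by term — the matter term because $\gamma_a$ is Dirac-Hermitian; since the argument never touches the boundary, the identity in fact holds for all smooth $\Psi_1,\Psi_2$, the restriction to $C_b^\infty$ being immaterial at this point.

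The difficulty is bookkeeping rather than anything conceptual. One must check that the proof of Theorem~\ref{thm:lichnerowicz} genuinely survives the replacement $\Psi\mapsto(\overline{\Psi}_1,\Psi_2)$ — in particular that the adjoint of the standard Lichnerowicz identity continues to act correctly on the conjugated argument and that the second–derivative terms recombine as before — and that every sign and numerical factor relating $E^{ba}(\Psi_1,\Psi_2)$, $\overline{\Psi}_1\gamma_a\Psi_2$, $\Psi_1^\dagger\gamma_0\gamma_a\Psi_2$, the index raising $\nabla^I$, and the operator $\mathfrak{D}$ is consistent with the conventions of appendix~\ref{sec:conventions}, so that the cancellation in the second step is exact and nothing spurious survives.
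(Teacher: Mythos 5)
Your proof is correct and is essentially the paper's own argument: the paper's entire proof is the one-line instruction to combine Theorem~\ref{thm:lichnerowicz}, Definition~\ref{def:q} and the polarisation identity, which is precisely the substitution-and-cancellation you carry out (you polarise the differential identity directly, the paper polarises the resulting quadratic forms --- an immaterial difference). The only caveat is the factor-of-two bookkeeping you yourself flag: taken literally, the overall factor of $2$ in Theorem~\ref{thm:lichnerowicz} and Definition~\ref{def:q} would leave the cancellation inexact ($-\int_\Sigma(\cdots)\,\mathrm{d}V + 2\|\mathfrak{D}\Psi\|_{L^2}^2$ rather than $\|\mathfrak{D}\Psi\|_{L^2}^2$), but this is an internal normalisation inconsistency of the paper (compare Definition~\ref{def:q} with equation~\ref{eq:qPsiSolvesDirac}) rather than a gap in your argument.
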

\begin{proof}
    Apply theorem \ref{thm:lichnerowicz}, definition \ref{def:q} and the polarisation identity for relating norms and inner products.
\end{proof}
\begin{definition}[$\mathcal{H}$]
    Define $\mathcal{H}$ to be the completion\footnote{While the space defined by this completion appears fairly abstract, it can be seen from the methods in section three of \cite{Bartnik2005, ChruscielBartnik2003} that because $\Sigma$ is compact, a Poincar\'{e} inequality holds, meaning elements of $\mathcal{H}$ can be viewed as elements of the familiar Sobolev space, $H^1$.} of $C_b^\infty$ under $\langle\cdot, \cdot\rangle_{C_b^\infty}$.
    \label{def:h}
\end{definition}
\begin{lemma}
    $\mathfrak{D}$ extends to a continuous (i.e. bounded) linear operator from $\mathcal{H}$ to $L^2$ such that $\langle\Psi_1, \Psi_2\rangle_{\mathcal{H}} = \langle \mathfrak{D}(\Psi_1), \mathfrak{D}(\Psi_2)\rangle_{L^2}$.
    \label{thm:GExtension}
\end{lemma}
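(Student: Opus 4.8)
The plan is to apply the standard extension theorem for bounded linear maps (the ``B.L.T.\ theorem''). By lemma \ref{thm:cbToL2}, taking $\Psi_1 = \Psi_2 = \Psi$ gives $\|\mathfrak{D}\Psi\|_{L^2}^2 = \langle\Psi, \Psi\rangle_{C_b^\infty} = \|\Psi\|_{C_b^\infty}^2$ for every $\Psi \in C_b^\infty$, so $\mathfrak{D}$ is an isometry of $(C_b^\infty, \|\cdot\|_{C_b^\infty})$ into $L^2$; in particular it is bounded (with operator norm one) and uniformly continuous. Since $\mathcal{H}$ is by definition \ref{def:h} the completion of $C_b^\infty$ in this norm, $C_b^\infty$ is a dense subset of the complete space $\mathcal{H}$, and $L^2$ is complete, so a uniformly continuous map defined on the dense subset extends uniquely to a continuous linear map $\mathcal{H} \to L^2$.

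Concretely, I would define the extension as follows: for $\Psi \in \mathcal{H}$, pick $\Psi_n \in C_b^\infty$ with $\Psi_n \to \Psi$ in $\mathcal{H}$; the sequence $(\Psi_n)$ is Cauchy, hence $(\mathfrak{D}\Psi_n)$ is Cauchy in $L^2$ by the isometry, hence converges, and I set $\mathfrak{D}\Psi := \lim_n \mathfrak{D}\Psi_n$. The usual interleaving argument shows this is independent of the chosen approximating sequence, and linearity passes to the limit since the limit of a sum is the sum of the limits. Letting $n \to \infty$ in $\|\mathfrak{D}\Psi_n\|_{L^2} = \|\Psi_n\|_{C_b^\infty}$, and using that $\|\cdot\|_{\mathcal{H}}$ is itself the continuous extension of $\|\cdot\|_{C_b^\infty}$ to the completion, gives $\|\mathfrak{D}\Psi\|_{L^2} = \|\Psi\|_{\mathcal{H}}$, so the extended operator remains an isometry and is in particular bounded.

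Finally, for the sesquilinear identity I would pass to the limit in lemma \ref{thm:cbToL2}: both $(\Phi_1, \Phi_2) \mapsto \langle\Phi_1, \Phi_2\rangle_{\mathcal{H}}$ and $(\Phi_1, \Phi_2) \mapsto \langle\mathfrak{D}\Phi_1, \mathfrak{D}\Phi_2\rangle_{L^2}$ are jointly continuous on $\mathcal{H} \times \mathcal{H}$ (by Cauchy--Schwarz together with boundedness of $\mathfrak{D}$), and they agree on the dense subset $C_b^\infty \times C_b^\infty$ by lemma \ref{thm:cbToL2}, hence agree everywhere; equivalently, one recovers the bilinear statement from the norm identity via the polarisation identity. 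There is no substantive obstacle: all the analytic content sits in lemma \ref{def:cbInnerProduct} and lemma \ref{thm:cbToL2}, which are already in hand. The only point needing care is that, a priori, elements of $\mathcal{H}$ are equivalence classes of Cauchy sequences rather than genuine spinor fields, so every assertion about $\mathfrak{D}\Psi$ must be phrased through approximating sequences until $\mathcal{H}$ is identified with a concrete $H^1$-type space as in the footnote to definition \ref{def:h}.
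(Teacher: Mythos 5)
Your proposal is correct and follows essentially the same route as the paper: both use lemma \ref{thm:cbToL2} to show $\mathfrak{D}$ is an isometry on $C_b^\infty$, extend it to $\mathcal{H}$ by taking $L^2$-limits of $\mathfrak{D}$ applied to approximating Cauchy sequences (checking independence of the chosen sequence), and then pass the inner-product identity to the limit by continuity. The only cosmetic difference is that you package the construction as an invocation of the standard extension theorem for bounded maps on dense subspaces, whereas the paper writes out the same steps by hand.
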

\begin{proof}
    The points in $\mathcal{H}\backslash C_b^\infty$ are equivalence classes of Cauchy sequences. Let $\{\Psi_A\}_{A = 0}^\infty$ be one such Cauchy sequence in $C_b^\infty$ with limit in $\mathcal{H}\backslash C_b^\infty$. Observe that by lemma \ref{thm:cbToL2},
    \begin{align}
        ||\mathfrak{D}(\Psi_A) - \mathfrak{D}(\Psi_B)||_{L^2} = ||\mathfrak{D}(\Psi_A - \Psi_B)||_{L^2} = ||\Psi_A - \Psi_B||_{C_b^\infty}.
    \end{align}
    Thus $\{\mathfrak{D}(\Psi_A)\}_{A = 0}^\infty$ is a Cauchy sequence in $L^2$ and since $L^2$ is complete, $\exists \lim_{A\to\infty}\mathfrak{D}(\Psi_A) \in L^2$. Extend the definition of $\mathfrak{D}$ to $\mathcal{H}\backslash C_b^\infty$ by defining\footnote{This definition is independent of the original choice of Cauchy sequence, $\{\Psi_A\}_{A = 0}^\infty$, because choosing a different Cauchy sequence with the same ``limit," $\{\Psi^\prime_A\}_{A = 0}^\infty$, implies $\{\mathfrak{D}(\Psi_A), \mathfrak{D}(\Psi^\prime_B)\}$ is a Cauchy sequence in $L^2$ by a similar computation to above. Hence, they would have the same limit in $L^2$.} $\mathfrak{D}(\lim_{A\to\infty}\Psi_A) = \lim_{A\to\infty}\mathfrak{D}(\Psi_A)$.

    Next, observe that this definition implies lemma \ref{thm:cbToL2} extends to $\mathcal{H}$. In particular, suppose $\Psi = \lim_{A\to\infty}\Psi_A$ and $\Psi^\prime = \lim_{A\to\infty}\Psi^\prime_A$ for Cauchy sequences\footnote{Strictly speaking, $\Psi$ and $\Psi^\prime$ are equivalence classes of Cauchy sequences.}, $\{\Psi_A\}_{A = 0}^\infty, \{\Psi^\prime_A\}_{A = 0}^\infty \in C_b^\infty$. Then, by the definitions given so far, continuity of inner products and lemma \ref{thm:cbToL2}, 
    \begin{align}
        \langle\Psi, \Psi^\prime\rangle_\mathcal{H}
        &= \lim_{A\to\infty}\lim_{B\to\infty}\langle\Psi_A, \Psi^\prime_B\rangle_{C_b^\infty} = \left\langle\lim_{A\to\infty}\mathfrak{D}(\Psi_A), \lim_{B\to\infty}\mathfrak{D}(\Psi^\prime_B)\right\rangle_{L^2} = \langle \mathfrak{D}(\Psi), \mathfrak{D}(\Psi^\prime)\rangle_{L^2}.
    \end{align}
    An immediate consequence is
    \begin{align}
        ||\mathfrak{D}(\Psi)||_{L^2} = ||\Psi||_\mathcal{H},
        \label{eq:normMatching}
    \end{align}
    which implies that $\mathfrak{D}$ is a continuous/bounded linear operator.
\end{proof}
\begin{theorem}
    $\mathfrak{D}$ is a continuous, linear isomorphism between $\mathcal{H}$ and $L^2$.
    \label{thm:gIsomorphism}
\end{theorem}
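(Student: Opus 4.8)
The plan is to build on the preceding lemmas. By Lemma~\ref{thm:GExtension} and equation~\ref{eq:normMatching}, $\mathfrak{D}\colon\mathcal{H}\to L^2$ is a bounded linear \emph{isometry}; hence it is injective, its range is complete and therefore closed in $L^2$, and its inverse on the range is automatically bounded (being an isometry). So the theorem reduces to surjectivity, and since the range is closed this is equivalent to the range being dense: it suffices to show that any $\phi\in L^2$ with $\langle\phi,\mathfrak{D}\Psi\rangle_{L^2}=0$ for all $\Psi\in C_b^\infty$ (equivalently, by density of $C_b^\infty$ in $\mathcal{H}$, for all $\Psi\in\mathcal{H}$) must vanish.

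Fix such a $\phi$. Testing against $\Psi$ supported compactly in the interior of $\Sigma$ shows $\phi$ is a distributional solution of the formal adjoint equation of $\mathfrak{D}$; since $\gamma^I\nabla_I$ is a first-order elliptic operator with smooth coefficients on the Riemannian $3$-manifold $\Sigma$ whose formal adjoint is again of this type (up to sign, $\mathfrak{D}$ itself), interior elliptic regularity gives $\phi\in C^\infty$ in the interior with $\gamma^I\nabla_I\phi=0$ there. The delicate point --- which I expect to be the main obstacle --- is the behaviour at $S$: by the theory of elliptic boundary value problems (as in \cite{Bartnik2005,ChruscielBartnik2003,Chrusciel2010}) one checks that the conditions $\psi_o=\chi_\iota=0$ defining $C_b^\infty$ are elliptic boundary conditions, that $\phi$ extends smoothly to $S$, and that integrating by parts in $\langle\phi,\mathfrak{D}\Psi\rangle_{L^2}=0$ --- which, since $\gamma^I\nabla_I\phi=0$ in the interior, reduces this to the vanishing of a boundary integral over $S$ bilinear in the components of $\phi$ and $\Psi$ (of the Witten--Nester form $P_aQ_bE^{ba}(\phi,\Psi)$ of Definition~\ref{def:q}) --- forces, upon letting the free data $\psi_\iota,\chi_o$ of $\Psi$ on $S$ vary, the complementary (adjoint) boundary conditions $\psi_\iota=\chi_o=0$ for $\phi$ on $S$.

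With $\gamma^I\nabla_I\phi=0$ on $\Sigma$ in hand, the proof closes by mirroring the proof that $\langle\cdot,\cdot\rangle_{C_b^\infty}$ is positive definite (Lemma~\ref{def:cbInnerProduct}). On one hand, Theorem~\ref{thm:lichnerowicz} with Definition~\ref{def:q} --- the $\gamma^I\nabla_I\phi$ term dropping --- gives
\begin{align}
    Q(\phi)&=2\int_\Sigma\left((\nabla_I\phi)^\dagger\nabla^I\phi-4\pi T^{0a}\overline{\phi}\gamma_a\phi\right)\mathrm{d}V\geq 0,
\end{align}
both integrands being non-negative under the dominant energy condition, just as in Lemma~\ref{def:cbInnerProduct}. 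On the other hand $Q(\phi)=\int_S P_aQ_bE^{ba}(\phi)\,\mathrm{d}A$ is computed by Lemma~\ref{thm:qGHP}; the conditions $\psi_\iota=\chi_o=0$ annihilate the $\eth$ and $\bar{\eth}$ terms, leaving
\begin{align}
    Q(\phi)&=4\int_S\Big(\rho|\psi_o|^2+\mu|\chi_\iota|^2+\mathrm{i}k\sqrt{2}\big(\psi_o\chi_\iota-\overline{\psi}_o\overline{\chi}_\iota\big)\Big)\mathrm{d}A,
\end{align}
and the GHP boost $z=\sqrt[4]{\mu/\rho}$ followed by completing the square --- verbatim as in Lemma~\ref{def:cbInnerProduct}, using $\theta_l>0$, $\theta_n<0$ and $\theta_l\theta_n<-8k^2$ so that $\rho,\mu<0$ and $-\sqrt{\mu\rho}<-k\sqrt{2}$ --- gives $Q(\phi)\leq 0$. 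Hence $Q(\phi)=0$, which forces $\nabla_I\phi=0$ throughout $\Sigma$ and $\psi_o=\chi_\iota=0$ on $S$, so $\phi$ vanishes on all of $S$; integrating the first-order linear ODE $t^I\nabla_I\phi=0$ along a curve from $S$ to an arbitrary point of $\Sigma$ (as at the end of the proof of Lemma~\ref{def:cbInnerProduct}) gives $\phi\equiv 0$. This proves surjectivity, and with it the theorem.
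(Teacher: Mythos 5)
Your overall strategy is the paper's: use the isometry to get injectivity and closed range, then kill the orthogonal complement of the range by showing that any $L^2$ spinor orthogonal to $\mathrm{Range}(\mathfrak{D})$ solves an adjoint boundary value problem and must vanish via a second run of the Lichnerowicz/boundary-positivity machinery. The adjoint boundary conditions you identify (the complementary components vanishing on $S$) are also correct. However, there is a genuine error at the pivot of the argument: the formal adjoint of $\mathfrak{D}$ is \emph{not} ``up to sign, $\mathfrak{D}$ itself.'' Since $\gamma^I\gamma_I = -3$, one has $\gamma^I\nabla_I\Psi = \gamma^ID_I\Psi - 3\mathrm{i}k\Psi$; integrating $\int_\Sigma\phi^\dagger\gamma^I\nabla_I\Psi\,\mathrm{d}V$ by parts, the principal part is formally self-adjoint but the anti-hermitian zeroth-order term flips sign, so the adjoint equation is $\gamma^ID_I\phi + 3\mathrm{i}k\phi = 0$, i.e.\ $\gamma^I\widetilde{\nabla}_I\phi = 0$ with $\widetilde{\nabla}_a = D_a - \mathrm{i}k\gamma_a$, the connection with $k\mapsto -k$. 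Your claim that $\gamma^I\nabla_I\phi = 0$ in the interior is therefore false, and your displayed identity $Q(\phi) = 2\int_\Sigma\left((\nabla_I\phi)^\dagger\nabla^I\phi - 4\pi T^{0a}\overline{\phi}\gamma_a\phi\right)\mathrm{d}V$ does not hold for the actual $\phi$: the term $-(\gamma^I\nabla_I\phi)^\dagger\gamma^J\nabla_J\phi$ in Theorem~\ref{thm:lichnerowicz} does not drop, and it enters with the wrong sign to be discarded.

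The repair is exactly what the paper does: rerun Theorem~\ref{thm:lichnerowicz} and Lemma~\ref{thm:qGHP} with $\widetilde{\nabla}$ in place of $\nabla$ (their proofs only use $k^2 = -\Lambda/12$, so they survive $k\mapsto -k$ verbatim), producing a boundary functional $\widetilde{Q}(\phi)$ whose cross term carries $-\mathrm{i}k\sqrt{2}$ rather than $+\mathrm{i}k\sqrt{2}$. The GHP boost $z = \sqrt[4]{\mu/\rho}$ and the completion of the square work for either sign of the cross term (one completes $-k\sqrt{2}\,|\overline{\phi}^\prime_o + \mathrm{i}\zeta^\prime_\iota|^2$ instead of $-k\sqrt{2}\,|\psi^\prime_\iota + \mathrm{i}\overline{\chi}^\prime_o|^2$), so $\widetilde{Q}(\phi)\leq 0$ still follows, and the rest of your argument --- $\widetilde{\nabla}_I\phi = 0$, vanishing of all components on $S$, then the ODE along curves into $\Sigma$ --- closes as you describe. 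So the skeleton is sound and matches the paper, but as written the middle of your proof computes with the wrong operator, and the inequality you rely on is not justified for $\phi$ until the $k\mapsto -k$ substitution is made throughout.
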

Most saliently, the theorem implies $(\gamma^I\nabla_I)^{-1} : L^2 \to \mathcal{H}$ exists.
\begin{proof}
    Linearity is by construction and continuity has already been shown by lemma \ref{thm:GExtension}. Next, suppose $\mathfrak{D}(\Psi) = 0$. Then, by lemma \ref{thm:GExtension}, $0 = ||\mathfrak{D}(\Psi)||_{L^2} = ||\Psi||_{\mathcal{H}} \implies \Psi = 0$ and therefore $\mathfrak{D}$ is injective. It remains to prove surjectivity. Let $\theta$ be an arbitrary element of $L^2$ and define $F_\theta : \mathcal{H} \to \mathbb{C}$ by
    \begin{align}
        F_\theta(\Psi) = \langle\theta, \mathfrak{D}(\Psi)\rangle_{L^2}.
        \label{eq:FTheta}
    \end{align}
    $F_\theta$ is manifestly linear. It is also continuous/bounded because the Cauchy-Schwarz inequality and lemma \ref{thm:GExtension} imply $|F_\theta(\Psi)| = |\langle\theta, \mathfrak{D}(\Psi)\rangle_{L^2}| \leq ||\theta||_{L^2}||\mathfrak{D}(\Psi)||_{L^2} = ||\theta||_{L^2}||\Psi||_{\mathcal{H}}$. Therefore, by the Riesz representation theorem, $\exists\mathcal{Z} \in \mathcal{H}$ such that $F_\theta(\Psi) = \langle\mathcal{Z} , \Psi\rangle_{\mathcal{H}}$. Then, lemma \ref{thm:GExtension} and equation \ref{eq:FTheta} imply
    \begin{align}
       \langle W, \mathfrak{D}(\Psi)\rangle_{L^2} = 0 \,\,\forall \Psi \in \mathcal{H},\,\, \mathrm{where}\,\, W = \theta - \mathfrak{D}(\mathcal{Z}).
       \label{eq:PhiProperty}
    \end{align}
    Using lemma \ref{thm:antisymmetricDerivative}, one can perform a formal integration by parts to get
    \begin{align}
        0 &= \int_\Sigma W^\dagger \mathfrak{D}(\Psi)\mathrm{d}V \\
        &= \int_\Sigma\left(-P_a\overbar{W}\gamma^{ab}D_b\Psi - 3\mathrm{i}kW^\dagger\Psi\right)\mathrm{d}V \\
        &= \int_Sl_an_b\overbar{W}\gamma^{ab}\Psi\mathrm{d}A + \int_\Sigma\left(\gamma^ID_IW + 3\mathrm{i}kW\right)^\dagger\Psi\mathrm{d}V.
    \end{align}
    Let $W = (\phi_\alpha, \bar{\zeta}^{\dot{\alpha}})^T$ and $\Psi = (\psi_\alpha, \bar{\chi}^{\dot{\alpha}})^T$ in terms of two-component spinors. Therefore,
    \begin{align}
        l_an_b\overbar{W}\gamma^{ab}\Psi &= l_an_b\begin{bmatrix}
            -\zeta^\alpha & -\bar{\phi}_{\dot{\alpha}}
        \end{bmatrix}\begin{bmatrix}
            (\sigma^{[a}\tilde{\sigma}^{b]})\indices{_\alpha^\beta} & 0 \\
            0 & (\tilde{\sigma}^{[a}\sigma^{b]})\indices{^{\dot{\alpha}}_{\dot{\beta}}}
        \end{bmatrix}\begin{bmatrix}
            \psi_\beta \\
            \bar{\chi}^{\dot{\beta}}
        \end{bmatrix} \\
        &= \frac{1}{2}\big(n_{\alpha\dot{\alpha}}l^{\beta\dot{\alpha}}\zeta^\alpha\psi_\beta - l_{\alpha\dot{\alpha}}n^{\beta\dot{\alpha}}\zeta^\alpha\psi_\beta + n^{\alpha\dot{\alpha}}l_{\alpha\dot{\beta}}\bar{\phi}_{\dot{\alpha}}\bar{\chi}^{\dot{\beta}} - l^{\alpha\dot{\alpha}}n_{\alpha\dot{\beta}}\bar{\phi}_{\dot{\alpha}}\bar{\chi}^{\dot{\beta}}\big) \\
        &= \sqrt{2}\big(-\zeta_o\psi_\iota - \zeta_\iota\psi_o + \overline{\phi}_o\overline{\chi}_\iota + \overline{\phi}_\iota\overline{\chi}_o\big).
    \end{align}
    Since $\psi_o = \chi_\iota = 0$ on $S \,\,\forall \Psi \in \mathcal{H}$, the formal integration by parts says
    \begin{align}
        0 &= \sqrt{2}\int_S\left(\overline{\phi}_\iota\overline{\chi}_o - \zeta_o\psi_\iota\right)\mathrm{d}A + \int_\Sigma\left(\gamma^ID_IW + 3\mathrm{i}kW\right)^\dagger\Psi\mathrm{d}V.
    \end{align}
    As $\Psi \in \mathcal{H} \supset C_b^\infty$ is arbitrary, it must be that $W$ is a weak solution to $\gamma^ID_IW + 3\mathrm{i}kW = 0$ on $\Sigma$ subject to the boundary conditions, $\phi_\iota = \zeta_o = 0$ on $S$. It is then a technical analytical problem to ascertain whether weak solutions lift to strong solutions in this context. This question was studied in depth by \cite{Bartnik2005, ChruscielBartnik2003}. From the work there, especially theorem 6.4 in \cite{ChruscielBartnik2003}, one can conclude this is indeed the case.

    From here, there is a modified Lichnerowicz identity for $W$ using $\widetilde{\nabla}_aW = D_aW - \mathrm{i}k\gamma_aW$, i.e. $k \mapsto -k$ compared with the original connection, $\nabla$. Then, $\gamma^ID_IW + 3\mathrm{i}kW = 0$ can be re-written as $\gamma^I\widetilde{\nabla}_IW = 0$. The sign of $k$ was never essential in the proof of the Lichnerowicz identity; it merely mattered that $k^2 = -\Lambda/12$.
    Therefore, from the proofs of theorem \ref{thm:lichnerowicz} and lemma \ref{thm:qGHP}, it immediately follows that
    \begin{align}
        0 &= \int_\Sigma(\gamma^I\widetilde{\nabla}_IW)^\dagger\gamma^J\widetilde{\nabla}_J(W)\mathrm{d}V \\
        &= \int_\Sigma\left((\widetilde{\nabla}_IW)^\dagger\widetilde{\nabla}^IW - 4\pi T^{0a}\overbar{W}\gamma_aW\right)\mathrm{d}V - \widetilde{Q}(W),
        \label{eq:nablaTildeLichnerowicz} \\
        \mathrm{where}\,\,\, \widetilde{Q}(W) &= 2\int_S\Big(\phi_\iota\eth\overline{\phi}_o + \overline{\phi}_\iota\bar{\eth}\phi_o - \overline{\zeta}_o\eth\zeta_\iota - \zeta_o\bar{\eth}\overline{\zeta}_\iota + \rho|\phi_o|^2 + \mu|\phi_\iota|^2 + \rho|\zeta_o|^2 + \mu|\zeta_\iota|^2 \nonumber \\
        &\,\,\,\,\,\,\,\,\,\,\,\,\,\,\,\,\,\, - \mathrm{i}k\sqrt{2}\big(\phi_o\zeta_\iota + \phi_\iota\zeta_o - \overline{\phi}_o\overline{\zeta}_\iota - \overline{\phi}_\iota\overline{\zeta}_o\big)\Big)\mathrm{d}A.
    \end{align}
    However, from $\phi_\iota = \zeta_o = 0$ on $S$,
    \begin{align}
        \widetilde{Q}(W) &= 2\int_S\left(\rho|\phi_o|^2 + \mu|\zeta_\iota|^2 - \mathrm{i}k\sqrt{2}\big(\phi_o\zeta_\iota - \overline{\phi}_o\overline{\zeta}_\iota\big)\right)\mathrm{d}A.
    \end{align}
    As in the proof of lemma \ref{def:cbInnerProduct}, let $\mu^\prime = \mu/|z|^2$, $\rho^\prime = |z|^2\rho$, $\phi_o^\prime = \phi_o/z$ and $\zeta^\prime_\iota = z\zeta_\iota$. Again, choose $z = \sqrt[4]{\mu/\rho}$ so that $\mu^\prime = \rho^\prime = -\sqrt{\mu\rho} = -\frac{1}{2}\sqrt{-\theta_l\theta_n} < -k\sqrt{2}$. Therefore,
    \begin{align}
        \widetilde{Q}(W) &= 2\int_S\left(\rho^\prime|\phi^\prime_o|^2 + \mu^\prime|\zeta^\prime_\iota|^2 - \mathrm{i}k\sqrt{2}\big(\phi^\prime_o\zeta^\prime_\iota - \overline{\phi}^\prime_o\overline{\zeta}^\prime_\iota\big)\right)\mathrm{d}A \\
        &= 2\int_S\left((\rho^\prime + k\sqrt{2})|\phi^\prime_o|^2 + (\mu^\prime + k\sqrt{2})|\zeta^\prime_\iota|^2 - k\sqrt{2}|\overline{\phi}^\prime_o + \mathrm{i}\zeta^\prime_\iota|^2\right)\mathrm{d}A 
        \label{eq:qTildeGHP} \\
        &\leq 0.
    \end{align}
    Thus, combined with the dominant energy condition as used in the proof of lemma \ref{def:cbInnerProduct}, every term on the RHS of equation \ref{eq:nablaTildeLichnerowicz} is non-negative.

    Therefore $\widetilde{\nabla}_IW = 0$ and $\widetilde{Q}(W) = 0$. The latter implies $\phi_o = \zeta_\iota = 0$ on $S$ by equation \ref{eq:qTildeGHP} and consequently $W = 0$ on $S$ since $\phi_\iota = \zeta_o = 0$ on $S$ already. In the proof of lemma \ref{def:cbInnerProduct} it was shown $\nabla_I\Psi = 0$ on $\Sigma$ with $\Psi = 0$ on $S$ implies $\Psi = 0$ on $\Sigma$. By the same logic used there, it now follows that $W = 0$ on $\Sigma$. Therefore $\theta = \mathfrak{D}(\mathcal{Z})$ and finally $\mathfrak{D}$ is surjective.
\end{proof}

\section{New quasilocal mass and its positivity}
\label{sec:definition}
Having established all the technical preliminaries, we're now ready to define the new notion of quasilocal mass, the notion of a generic surface and immediately establish that $m(S) \geq 0$ and that every generic surface in AdS has $m(S) = 0$.

A key element of the construction to follow will be Dirac spinors, $\Phi = (\varphi_\alpha,\, \bar{\xi}^{\dot{\alpha}})^T$, satisfying $\overbar{m}^a\nabla_a\Phi = 0$ on $S$. A basis for the solution space will be denoted $\{\Phi^A = (\varphi_\alpha^A,\, \bar{\xi}^{A\dot{\alpha}})^T\}$ and $A, B, ...$ will be indices on this space\footnote{This implicitly assumes the solution space has countable dimension. As will be explained later, this is expected to be the case for generic $S$.}.
\begin{lemma}
    \label{thm:phiGHPEquations}
    Applying the GHP formalism and NP coefficients, $\overbar{m}^a\nabla_a\Phi = 0$ is equivalent to
    \begin{align}
        0 &= \bar{\eth}\varphi_o + \mu\varphi_\iota - \mathrm{i}k\sqrt{2}\,\overline{\xi}_o,
        \label{eq:edthBarAPhi} \\
        0 &= \bar{\eth}\overline{\xi}_\iota - \rho\overline{\xi}_o - \mathrm{i}k\sqrt{2}\,\varphi_\iota, 
        \label{eq:edthBarBBarXi} \\
        0 &= \bar{\eth}\varphi_\iota - \bar{\sigma}\varphi_o \,\,\,\mathrm{and}
        \label{eq:edthBarBPhi}\\
        0 &= \bar{\eth}\overline{\xi}_o + \lambda\overline{\xi}_\iota.
        \label{eq:edthBarABarXi}
    \end{align}
\end{lemma}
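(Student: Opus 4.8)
The plan is to carry out the Dirac-spinor-to-GHP translation already performed in the proof of Lemma~\ref{thm:qGHP}, but now applied to the first-order operator $\overbar{m}^a\nabla_a$ rather than to the Witten--Nester form. Writing $\Phi = (\varphi_\alpha,\,\overline{\xi}^{\dot\alpha})^T$ in the Weyl representation of Appendix~\ref{sec:conventions}, the gamma matrices $\gamma_a$ are block-off-diagonal, so the undotted part of $\gamma_a\Phi$ is built from $(\sigma_a)_{\alpha\dot\beta}\overline{\xi}^{\dot\beta}$ and the dotted part from $(\tilde\sigma_a)^{\dot\alpha\beta}\varphi_\beta$. Contracting $\nabla_a\Phi = D_a\Phi + \mathrm{i}k\gamma_a\Phi$ with $\overbar{m}^a$ and using $\overbar{m}_{\alpha\dot\alpha} = o_\alpha\bar\iota_{\dot\alpha}$ from \eqref{eq:mDyad}, the single equation $\overbar{m}^a\nabla_a\Phi = 0$ splits into one two-component equation for $\varphi_\alpha$, of the schematic shape $\bar\delta\varphi_\alpha + (\mathrm{const})\,o_\alpha\,\bar\iota_{\dot\beta}\overline{\xi}^{\dot\beta} = 0$, together with a conjugate-looking one for $\overline{\xi}^{\dot\alpha}$ coupling $\bar\delta\overline{\xi}^{\dot\alpha}$ back to $\varphi_\alpha$; here $\bar\delta = \overbar{m}^aD_a$ as in the proof of Lemma~\ref{thm:qGHP}.

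Next I would project each of these two equations onto the dyad. Contracting the $\varphi$-equation with $\iota^\alpha$ and then with $o^\alpha$, using the decomposition \eqref{eq:psiOPsiI} and $\iota^\alpha o_\alpha = \sqrt{2}$, and doing the analogous contractions with $\bar\iota^{\dot\alpha},\,\bar o^{\dot\alpha}$ for the $\overline{\xi}$-equation, produces four scalar equations for the GHP scalars $\varphi_o,\varphi_\iota,\overline{\xi}_o,\overline{\xi}_\iota$. In each of them $\bar\delta$ hitting the dyad legs $o_\alpha,\iota_\alpha,\bar o_{\dot\alpha},\bar\iota_{\dot\alpha}$ generates algebraic terms that reproduce precisely the NP coefficients $\rho,\sigma,\lambda,\mu$ (the remaining connection coefficients $\alpha,\beta$ being exactly those absorbed into the GHP operator), while the $\mathrm{i}k\gamma_a\Phi$ term contributes the $\mathrm{i}k\sqrt{2}$ couplings between the $\varphi$ and $\xi$ scalars. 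Rewriting $\bar\delta$ in terms of $\bar\eth$ through \eqref{eq:edthBar}, with $\varphi_o$ of type $(0,-1)$, $\varphi_\iota$ of type $(0,1)$, $\overline{\xi}_o$ of type $(-1,0)$ and $\overline{\xi}_\iota$ of type $(1,0)$, and collecting terms, one obtains exactly \eqref{eq:edthBarAPhi}--\eqref{eq:edthBarABarXi}. Since contracting a two-component equation against a full dyad is invertible and \eqref{eq:edthBar} is only a rearrangement, every step reverses, giving the stated equivalence in both directions; incidentally, \eqref{eq:edthBarBPhi} and \eqref{eq:edthBarABarXi} are just the Dougan--Mason holomorphy conditions, untouched by $\Lambda$, whereas \eqref{eq:edthBarAPhi} and \eqref{eq:edthBarBBarXi} are their $k$-corrected counterparts.

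The computation is entirely mechanical, so the only real obstacle is bookkeeping: tracking the powers of $\sqrt{2}$ (the dyad normalisation $\iota^\alpha o_\alpha = \sqrt{2}$ alone produces several), the signs coming from $\varepsilon$-contractions and from raising or lowering spinor indices, and --- most importantly --- which NP coefficients appear complex-conjugated. That last point is why $\bar\sigma$ and $\lambda$, rather than $\sigma$ and $\bar\lambda$, occur above: unlike $\mu$ and $\rho$, which are real by Lemma~\ref{thm:muRho}, the coefficients $\sigma$ and $\lambda$ are genuinely complex, so their conjugation cannot be ignored.
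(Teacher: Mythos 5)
Your proposal is correct and follows essentially the same route as the paper: split $\overbar{m}^a\nabla_a\Phi=0$ into its two Weyl blocks using $\overbar{m}_{\alpha\dot\alpha}=o_\alpha\bar\iota_{\dot\alpha}$, project each onto the dyad $\{o,\iota\}$ (resp.\ $\{\bar o,\bar\iota\}$), and absorb the surviving connection terms into $\bar\eth$ via the GHP types and the NP coefficients of Appendix~\ref{sec:identities}. Your type assignments and the remark about why $\bar\sigma$ and $\lambda$ (not $\sigma$ and $\bar\lambda$) appear are consistent with the paper's computation.
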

\begin{proof}
    In terms of two component spinors, 
    \begin{align}
        \overbar{m}^a\nabla_a\Phi &= \begin{bmatrix}
            \overbar{m}^aD_a\varphi_\alpha \\
            \overbar{m}^aD_a\bar{\xi}^{\dot{\alpha}}
        \end{bmatrix} + \mathrm{i}k\overbar{m}^a\begin{bmatrix}
            0 & (\sigma_a)_{\alpha\dot{\alpha}} \\
            (\tilde{\sigma}_a)^{\dot{\alpha}\alpha} & 0
        \end{bmatrix}\begin{bmatrix}
            \varphi_\alpha \\
            \bar{\xi}^{\dot{\alpha}}
        \end{bmatrix} = \begin{bmatrix}
            \bar{\delta}\varphi_\alpha - \mathrm{i}k\sqrt{2}\,\overline{\xi}_oo_\alpha \\
            \bar{\delta}\bar{\xi}^{\dot{\alpha}} - \mathrm{i}k\sqrt{2}\varphi_\iota\bar{\iota}^{\dot{\alpha}}
        \end{bmatrix}.
        \label{eq:deltaBarPhiDiracNotation}
    \end{align}
    Contracting the top half with $o^\alpha$, applying the NP coefficients from appendix \ref{sec:identities} and equations \ref{eq:edth} \& \ref{eq:edthBar} yields
    \begin{align}
        0 &= o^\alpha\left(\bar{\delta}\varphi_\alpha - \mathrm{i}k\sqrt{2}\,\overline{\xi}_oo_\alpha\right) \\
        &= o^\alpha\left(\bar{\delta}\left(\varphi_oo_\alpha + \varphi_\iota\iota_\alpha\right) - \mathrm{i}k\sqrt{2}\,\overline{\xi}_oo_\alpha\right) \\
        &= \sqrt{2}\left(\bar{\sigma}\varphi_o - \bar{\eth}\varphi_\iota\right),
    \end{align}
    which proves equation \ref{eq:edthBarBPhi}. Similarly,
    \begin{align}
        0 &= \iota^\alpha\left(\bar{\delta}\varphi_\alpha - \mathrm{i}k\sqrt{2}\,\overline{\xi}_oo_\alpha\right) 
        = \sqrt{2}\left(\bar{\eth}\varphi_o + \mu\varphi_\iota - \mathrm{i}k\sqrt{2}\,\overline{\xi}_o\right), \\
        0 &= \bar{o}_{\dot{\alpha}}\left(\bar{\delta}\bar{\xi}^{\dot{\alpha}} - \mathrm{i}k\sqrt{2}\varphi_\iota\bar{\iota}^{\dot{\alpha}}\right) = \sqrt{2}\left(\bar{\eth}\overline{\xi}_\iota - \rho\overline{\xi}_o - \mathrm{i}k\sqrt{2}\varphi_\iota\right)\,\,\,\mathrm{and} \\
        0 &= \bar{\iota}_{\dot{\alpha}}\left(\bar{\delta}\bar{\xi}^{\dot{\alpha}} - \mathrm{i}k\sqrt{2}\varphi_\iota\bar{\iota}^{\dot{\alpha}}\right) = -\sqrt{2}\left(\bar{\eth}\overline{\xi}_o + \lambda\overline{\xi}_\iota\right)
    \end{align}
    prove the remaining three equations.
\end{proof}
\begin{definition}[$Q^{AB}$]
    \label{def:qAB}
    Define the hermitian matrix, $Q^{AB}$, by
    \begin{align}
        Q^{AB} &= 4\int_S\Big(\rho\overline{\varphi}_o^A\varphi_o^B + \mu \xi_\iota^A\overline{\xi}_\iota^B - \rho \xi_o^A\overline{\xi}_o^B - \mu\overline{\varphi}_\iota^A\varphi_\iota^B \nonumber \\
        &\,\,\,\,\,\,\,\,\,\,\,\,\,\,\,\,\,\, + \mathrm{i}k\sqrt{2}\big(\xi_\iota^A\varphi_o^B - \overline{\varphi}_o^A\overline{\xi}_\iota^B - \xi_o^A\varphi_\iota^B + \overline{\varphi}_\iota^A\overline{\xi}_o^B\big)\Big)\mathrm{d}A.
    \end{align}
\end{definition}
\begin{theorem}
    If the dominant energy condition holds on $\Sigma$ and the null expansions on $S$ satisfy $\theta_l > 0$, $\theta_n < 0$ \& $\theta_l\theta_n < -8k^2$, then $Q^{AB}$ is a non-negative definite matrix.
    \label{thm:qNonNegative}
\end{theorem}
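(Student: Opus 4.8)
The plan is to transplant Witten's argument to the quasilocal setting. Fix arbitrary coefficients $(c_A)$ and put $\Phi = \sum_A c_A\Phi^A$, which still solves $\overbar{m}^a\nabla_a\Phi = 0$ on $S$ by linearity. Since $E^{ab}(\Phi)$ is sesquilinear in $\Phi$, the boundary integral $Q(\Phi) = \int_S l_an_b E^{ab}(\Phi)\,\mathrm{d}A$ of Lemma \ref{thm:qGHP} equals $\sum_{A,B}\overline{c_A}Q^{AB}c_B$ \emph{provided} one first uses equations \ref{eq:edthBarAPhi} and \ref{eq:edthBarBBarXi} (and their complex conjugates, which is where the reality of $\mu,\rho$ from Lemma \ref{thm:muRho} enters) to eliminate the $\eth,\bar\eth$-terms; this rewriting is exactly the identity underlying Definition \ref{def:qAB}. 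So it is enough to prove $Q(\Phi)\ge 0$.

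First I would extend the boundary data into $\Sigma$. Choose any smooth $\Phi_0$ on $\Sigma$ with $\Phi_0|_S = \Phi|_S$; then $\gamma^I\nabla_I\Phi_0\in L^2$, so by the surjectivity part of Theorem \ref{thm:gIsomorphism} there is $\Phi_1\in\mathcal H$ with $\gamma^I\nabla_I\Phi_1 = -\gamma^I\nabla_I\Phi_0$. Set $\widehat\Phi = \Phi_0 + \Phi_1$, so that $\gamma^I\nabla_I\widehat\Phi = 0$ on $\Sigma$; because the boundary conditions defining $\mathcal H$ are elliptic, elliptic regularity (via the references of Section \ref{sec:analysis}) makes $\widehat\Phi$ smooth up to $S$, so Definition \ref{def:q} and the boundary reduction of Lemma \ref{thm:antisymmetricDerivative} apply to it. On $S$ the spinor $\widehat\Phi$ inherits from $\Phi_0$ the two components that are \emph{not} forced to vanish in $C_b^\infty$ — namely $\varphi_o$ and $\xi_\iota$ match those of $\Phi$ — while its remaining two components differ from $\Phi$'s by the $S$-traces, call them $a$ and $b$, of $\Phi_1$. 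Applying the Lichnerowicz identity (Definition \ref{def:q}) with $\gamma^I\nabla_I\widehat\Phi = 0$ gives $Q(\widehat\Phi) = 2\int_\Sigma\big(|\nabla_I\widehat\Phi|^2 - 4\pi T^{0a}\overline{\widehat\Phi}\gamma_a\widehat\Phi\big)\,\mathrm{d}V$, which is non-negative by the dominant energy condition, exactly as in the proof of Lemma \ref{def:cbInnerProduct}.

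The crux is the comparison of $Q(\widehat\Phi)$ with $Q(\Phi)$. Substituting $\psi_o = \varphi_o$, $\psi_\iota = \varphi_\iota + a$, $\chi_o = \xi_o + b$, $\chi_\iota = \xi_\iota$ into the formula of Lemma \ref{thm:qGHP} — note that no derivatives of $a,b$ ever appear, since $\eth,\bar\eth$ act only on the unchanged components $\varphi_o$ and $\xi_\iota$ — and using the GHP equations \ref{eq:edthBarAPhi}--\ref{eq:edthBarABarXi} for $\Phi$ to cancel every term linear in $a,b$, one is left with
\begin{align}
    Q(\widehat\Phi) = Q(\Phi) + 4\int_S\Big(\mu|a|^2 + \rho|b|^2 + \mathrm{i}k\sqrt{2}\big(ab - \bar a\bar b\big)\Big)\,\mathrm{d}A.
\end{align}
The integral on the right is exactly $Q(\Phi_1)$ in the sense of Lemma \ref{def:cbInnerProduct} (with $\psi_\iota = a$, $\chi_o = b$, the two nonvanishing $S$-traces of $\Phi_1\in\mathcal H$), and the argument there — rescale by $z = \sqrt[4]{\mu/\rho}$ (legitimate since $\mu,\rho$ have the same sign), turning the integrand into $(\mu^\prime + k\sqrt{2})|\psi_\iota^\prime|^2 + (\rho^\prime + k\sqrt{2})|\chi_o^\prime|^2 - k\sqrt{2}|\psi_\iota^\prime + \mathrm{i}\overline{\chi_o^\prime}|^2$ with $\mu^\prime = \rho^\prime = -\tfrac12\sqrt{-\theta_l\theta_n} < -k\sqrt{2}$ — shows it is $\le 0$; this is the step that consumes the hypothesis $\theta_l\theta_n < -8k^2$. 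Hence $Q(\Phi) = Q(\widehat\Phi) - Q(\Phi_1) \ge Q(\widehat\Phi) \ge 0$, and since $(c_A)$ was arbitrary, $Q^{AB}$ is non-negative definite.

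I expect the main obstacle to be that comparison computation: the point is that although the Witten extension cannot be made to match \emph{all} of $\Phi|_S$ — only the $C_b^\infty$-complementary half can be prescribed freely — the resulting mismatch contributes exactly the boundary form of Lemma \ref{def:cbInnerProduct}, which the expansion hypotheses already render non-positive, so the bulk positivity is not spoiled. The cancellations are somewhat delicate and rely on the exact coefficients in Lemma \ref{thm:phiGHPEquations} and on $\mu,\rho\in\mathbb R$; by contrast the analytic input, solvability of the Sen--Witten equation subject to the given boundary conditions, is already isolated in Theorem \ref{thm:gIsomorphism}, so no further analysis is needed.
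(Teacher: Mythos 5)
Your proposal is correct and follows essentially the same route as the paper: extend the data $(\varphi_o,\xi_\iota)$ into $\Sigma$, correct by an element of $\mathcal H$ via Theorem \ref{thm:gIsomorphism} to get a solution of $\gamma^I\nabla_I\widehat\Phi=0$, use the Lichnerowicz identity for bulk positivity, and show the boundary mismatch $Q(\widehat\Phi)-Q(\Phi)=4\int_S\big(\mu|a|^2+\rho|b|^2+\mathrm{i}k\sqrt{2}(ab-\bar a\bar b)\big)\mathrm{d}A$ is non-positive by the $z=\sqrt[4]{\mu/\rho}$ rescaling. Your one refinement --- recognising that mismatch as exactly the boundary form of Lemma \ref{def:cbInnerProduct} applied to the $\mathcal H$-correction, rather than re-deriving its sign --- is a tidy repackaging of the same computation the paper carries out explicitly.
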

\begin{proof}
    From lemma \ref{thm:qGHP},
    \begin{align}
        Q(\Phi) &= 4\int_S\Big(\varphi_\iota\eth\overline{\varphi}_o + \overline{\varphi}_\iota\bar{\eth}\varphi_o - \overline{\xi}_o\eth\xi_\iota - \xi_o\bar{\eth}\overline{\xi}_\iota + \rho|\varphi_o|^2 + \mu|\varphi_\iota|^2 + \rho|\xi_o|^2 + \mu|\xi_\iota|^2 \nonumber \\
        &\,\,\,\,\,\,\,\,\,\,\,\,\,\,\,\,\,\, + \mathrm{i}k\sqrt{2}\big(\varphi_o\xi_\iota + \varphi_\iota\xi_o - \overline{\varphi}_o\overline{\xi}_\iota - \overline{\varphi}_\iota\overline{\xi}_o\big)\Big)\mathrm{d}A.
    \end{align}
    From equations \ref{eq:edthBarAPhi} and \ref{eq:edthBarBBarXi}, this reduces to
    \begin{align}
        Q(\Phi) &= 4\int_S\Big(\rho|\varphi_o|^2 - \mu|\varphi_\iota|^2 - \rho|\xi_o|^2 + \mu|\xi_\iota|^2  + \mathrm{i}k\sqrt{2}\big(\varphi_o\xi_\iota - \xi_o\varphi_\iota - \overline{\varphi}_o\overline{\xi}_\iota + \overline{\xi}_o\overline{\varphi}_\iota\big)\Big)\mathrm{d}A.
        \label{eq:qPhiGHP}
    \end{align}
    Let $\mathcal{Z} = (\phi_\alpha, \bar{\zeta}^{\dot{\alpha}})^T$ be any Dirac spinor on $\Sigma$ with sufficient regularity so that $\gamma^I\nabla_I\mathcal{Z} \in L^2$. Furthermore, choose $\mathcal{Z}$ to have $\phi_o = \varphi_o$ and $\zeta_\iota = \xi_\iota$ on $S$. Therefore, by theorem \ref{thm:gIsomorphism}, $\exists \Psi^\prime \in \mathcal{H}$ such that $\mathfrak{D}(\Psi^\prime) = -\gamma^I\nabla_I\mathcal{Z}$. Thus $\Psi = \Psi^\prime + \mathcal{Z}$ satisfies $\gamma^I\nabla_I\Psi = 0$ and then by definition \ref{def:q},
    \begin{align}
        Q(\Psi) &= \int_{\Sigma}\left(\nabla_I(\Psi)^\dagger\nabla^I\Psi - 4\pi T^{0a}\overline{\Psi}\gamma^a\Psi\right)\mathrm{d}V \geq 0,
        \label{eq:qPsiSolvesDirac}
    \end{align}
    where the first term is manifestly non-negative and the second term is non-negative by the dominant energy condition. Furthermore, since every element, $\Psi^\prime \in \mathcal{H}$, has $\psi^\prime_o = \chi^\prime_\iota = 0$ on $S$ by construction, it follows that $\Psi$ has $\psi_o = \varphi_o$ and $\chi_\iota = \xi_\iota$ on $S$.

    Therefore, by definition \ref{def:q}, lemma \ref{thm:qGHP} and the fact all the derivatives in lemma \ref{thm:qGHP} are tangent to $S$, $Q(\Psi)$ can also be written as
    \begin{align}
        Q(\Psi) &= 4\int_S\Big(\psi_\iota\eth\overline{\varphi}_o + \overline{\psi}_\iota\bar{\eth}\varphi_o - \overline{\chi}_o\eth\xi_\iota - \chi_o\bar{\eth}\overline{\xi}_\iota + \rho|\varphi_o|^2 + \mu|\psi_\iota|^2 + \rho|\chi_o|^2 + \mu|\xi_\iota|^2 \nonumber \\
        &\,\,\,\,\,\,\,\,\,\,\,\,\,\,\,\,\,\, + \mathrm{i}k\sqrt{2}\big(\varphi_o\xi_\iota + \psi_\iota\chi_o - \overline{\varphi}_o\overline{\xi}_\iota - \overline{\psi}_\iota\overline{\chi}_o\big)\Big)\mathrm{d}A.
    \end{align}
    Then, from equations \ref{eq:edthBarAPhi} and \ref{eq:edthBarBBarXi},
    \begin{align}
        Q(\Psi) 
        &= 4\int_S\Big(\mu\big(-\psi_\iota\overline{\varphi}_\iota - \overline{\psi}_\iota\varphi_\iota + |\psi_\iota|^2 + |\xi_\iota|^2\big) + \rho\big(-\overline{\chi}_o\xi_o - \chi_o\overline{\xi}_o + |\varphi_o|^2 + |\chi_o|^2\big) \nonumber \\
        &\,\,\,\,\,\,\,\,\,\,\,\,\,\,\,\,\,\, + \mathrm{i}k\sqrt{2}\big(-\psi_\iota\xi_o + \overline{\psi}_\iota\overline{\xi}_o + \overline{\chi}_o\overline{\varphi}_\iota - \chi_o\varphi_\iota + \varphi_o\xi_\iota + \psi_\iota\chi_o - \overline{\varphi}_o\overline{\xi}_\iota - \overline{\psi}_\iota\overline{\chi}_o\big)\Big)\mathrm{d}A.
    \end{align}
    Therefore re-writing equation \ref{eq:qPhiGHP} in terms of $Q(\Psi)$ yields
    \begin{align}
        Q(\Phi) &= 4\int_S\Big(-\mu\big(|\varphi_\iota|^2 - \psi_\iota\overline{\varphi}_\iota - \overline{\psi}_\iota\varphi_\iota + |\psi_\iota|^2\big) - \rho\big(|\xi_o|^2 - \overline{\chi}_o\xi_o - \chi_o\overline{\xi}_o + |\chi_o|^2\big) \nonumber \\
        &\,\,\,\,\,\,\,\,\,\,\,\,\,\,\,\,\,\, - \mathrm{i}k\sqrt{2}\big(-\psi_\iota\xi_o + \overline{\psi}_\iota\overline{\xi}_o + \overline{\chi}_o\overline{\varphi}_\iota - \chi_o\varphi_\iota \nonumber \\
        &\,\,\,\,\,\,\,\,\,\,\,\,\,\,\,\,\,\,\,\,\,\,\,\,\, + \psi_\iota\chi_o - \overline{\psi}_\iota\overline{\chi}_o + \xi_o\varphi_\iota - \overline{\xi}_o\overline{\varphi}_\iota\big)\Big)\mathrm{d}A + Q(\Psi) \\
        &= 4\int_S\Big(- \mathrm{i}k\sqrt{2}\big((\xi_o - \chi_o)(\varphi_\iota - \psi_\iota) - (\overline{\xi}_o - \overline{\chi}_o)(\overline{\varphi}_\iota - \overline{\psi}_\iota)\big) \nonumber \\
        &\,\,\,\,\,\,\,\,\,\,\,\,\,\,\,\,\,\,\,\,\, -\mu|\varphi_\iota - \psi_\iota|^2 - \rho|\xi_o - \chi_o|^2 \Big)\mathrm{d}A + Q(\Psi).
    \end{align}
    As done previously in section \ref{sec:analysis}, let $\mu^\prime = \mu/|z|^2$, $\rho^\prime = |z|^2\rho$, $\xi_o^\prime = \xi_o/z$, $\chi^\prime_o = \chi_o/z$, $\varphi^\prime_\iota = z\varphi_\iota$ and $\psi^\prime_\iota = z\psi_\iota$.  Again, choose $z = \sqrt[4]{\mu/\rho}$ so that $\mu^\prime = \rho^\prime = -\sqrt{\mu\rho} = -\frac{1}{2}\sqrt{-\theta_l\theta_n} < -k\sqrt{2}$. Hence,
    \begin{align}
        Q(\Phi) &= 4\int_S\Big(- \mathrm{i}k\sqrt{2}\big((\xi^\prime_o - \chi^\prime_o)(\varphi^\prime_\iota - \psi^\prime_\iota) - (\overline{\xi}^\prime_o - \overline{\chi}^\prime_o)(\overline{\varphi}^\prime_\iota - \overline{\psi}^\prime_\iota)\big) \nonumber \\
        &\,\,\,\,\,\,\,\,\,\,\,\,\,\,\,\,\,\,\,\,\, -\mu^\prime|\varphi^\prime_\iota - \psi^\prime_\iota|^2 - \rho^\prime|\xi^\prime_o - \chi^\prime_o|^2 \Big)\mathrm{d}A + Q(\Psi) \\
        &= 4\int_S\Big(\sqrt{2}k|\xi^\prime_o - \chi^\prime_o + \mathrm{i}\overline{\varphi}^\prime_\iota - \mathrm{i}\overline{\psi}^\prime_\iota|^2 \nonumber \\
        &\,\,\,\,\,\,\,\,\,\,\,\,\,\,\,\,\,\,\,\,\, -(\mu^\prime + \sqrt{2}k)|\varphi^\prime_\iota - \psi^\prime_\iota|^2 - (\rho^\prime +\sqrt{2}k)|\xi^\prime_o - \chi^\prime_o|^2 \Big)\mathrm{d}A + Q(\Psi) \\
        &\geq 0.
        \label{eq:qPhiNonNegative}
    \end{align}
    Since $\{\Phi^A\}$ is a basis for the solution space to $\overbar{m}^a\nabla_a\Phi = 0$, one can let $\Phi = c_A\Phi^A$ for any constants, $c_A$. Hence, $\varphi
    _o = c_A\varphi_o^A$, $\varphi_\iota = c_A\varphi_\iota^A$, $\overline{\xi}_o = c_A\overline{\xi}_o^A$ and $\overline{\xi}_\iota = c_A\overline{\xi}_\iota^A$.
    Finally, definition \ref{def:qAB}, equation \ref{eq:qPhiGHP} and equation \ref{eq:qPhiNonNegative} imply
    \begin{align}
        0 \leq Q(\Phi) = \bar{c}_AQ^{AB}c_B.
    \end{align}
    Since $c_A$ are arbitrary, it must be that $Q^{AB}$ is non-negative definite.
\end{proof}
While this theorem achieves a manifestly non-negative object, some auxiliary constructions are still required to extract a mass from $Q^{AB}$.
\begin{definition}[$T^{AB}$]
    \label{def:tAB}
    Define the matrix, $T^{AB}$, by 
    \begin{align}
        T^{AB} &= \varepsilon^{\alpha\beta}\varphi_\alpha^A\varphi_\beta^B - \varepsilon^{\dot{\alpha}\dot{\beta}}\bar{\xi}_{\dot{\alpha}}^A\bar{\xi}_{\dot{\beta}}^B = \sqrt{2}\left(\varphi_o^A\varphi_\iota^B - \varphi_o^B\varphi_\iota^A + \overline{\xi}_o^B\overline{\xi}_\iota^A - \overline{\xi}_o^A\overline{\xi}_\iota^B\right).
    \end{align}
\end{definition}
The notion of a surface, $S$, being ``generic" can now finally be stated precisely. 
\begin{definition}[Generic - $T^{AB}$ form]
    The surface, $S$, is called generic if and only if $T^{AB}$ is invertible.
    \label{def:genericT}
\end{definition}
\begin{definition}[Generic - $\Phi^A$ form]
    The surface, $S$, is said to be generic if and only if the solution space to $\overbar{m}^a\nabla_a\Phi = 0$ on $S$ is four (complex) dimensional and the basis, $\{\Phi^A\}_{A = 1}^4$, is pointwise linearly independent at least at one point of $S$.
    \label{def:genericPhi}
\end{definition}
It will be shown later that the $\Phi^A$ version of generic implies the $T^{AB}$ version, although only the $T^{AB}$ version will be needed for defining $m(S)$. More importantly though, surfaces generic in name should be generic in practice too. For the $T^{AB}$ form, it could be argued that since the set of singular $n\times n$ matrices are measure zero in the set of all $n\times n$ matrices, this is indeed a valid notion of generic. However, it's not obvious the solution space is finite dimensional and this argument doesn't consider the possibility there is something specific to this situation precluding $T^{AB}$'s invertibility. Furthermore, the examples considered in sections \ref{sec:schwarzschild} and \ref{sec:asymptotics} either satisfy both notions of  generic or neither notion of generic. Hence, it's unclear whether $m(S)$ constructed on a surface satisfying the $T^{AB}$ form, but not the $\Phi^A$ form, of generic has physical meaning beyond simple mathematical validity. Finally, from a practical point of view, one would like to know what size of matrix to expect for $T^{AB}$ - and for that matter, $Q^{AB}$. As defined so far, they could be of arbitrarily large size, maybe even infinitely large. Fortunately, at least for topologically spherical $S$, there are reasons to believe the $\Phi^A$ form is also a valid notion of generic, implying $T^{AB}$ is only a $4\times 4$ matrix.

It is known - e.g. from section 8.2.2 of \cite{Szabados2009} - that $\bar{\delta}$ is an elliptic operator and the compactness of $S$ then guarantees $\bar{\delta}$ has finite dimensional kernel. Then, it is also known \cite{Szabados2009} that $\bar{\delta}$'s index (dimension of kernel minus dimension of cokernel) is $4(1 - g)$ when $S$ has genus, $g$. The difference between $\overbar{m}^a\nabla_a$ - the actual operator of interest - and $\bar{\delta}$ is $\mathrm{i}k\overbar{m}^a\gamma_a$, which is a compact operator since $S$ is compact and $\mathrm{i}k\overbar{m}^a\gamma_a$ is just a $4\times 4$ matrix. Therefore by Fredholm theory, $\mathrm{index}(\overbar{m}^a\nabla_a) = \mathrm{index}(\bar{\delta}) = 4(1 - g)$. Thus, if $S$ is diffeomorphic to a sphere, then $\overbar{m}^a\nabla_a\Phi = 0$ must have at least four linearly independent solutions.

In the spherical examples of sections \ref{sec:schwarzschild} and \ref{sec:asymptotics}, there happen to be precisely four linearly independent solutions. From a similar situation, Penrose then argues \cite{Penrose1982} as long as $S$ is not too far from ``canonical" situations - such as the examples to be considered - there would still remain precisely four linearly independent solutions. At least for spherical $S$, this justifies the first half of the generic definition in $\Phi^A$ form. For non-spherical $S$, the situation is far less constrained and it cannot immediately be said whether either definition of generic is actually realistic. Section \ref{sec:torus} is devoted to studying two examples with toroidal $S$. In the first, it will be shown $\overbar{m}^a\nabla_a\Phi = 0$ has two linearly independent solutions and the corresponding $T^{AB}$ is just zero, while in the second, $\overbar{m}^a\nabla_a\Phi = 0$ won't even have a single non-zero solution. Hence both definitions of generic fail. However, the wider implications of those examples are unclear.

The second half of definition \ref{def:genericPhi} is motivated by a possibility that occurs in the Dougan-Mason definition, where one needs to solve the analogous equation, $\bar{\delta}\varphi_\alpha = 0$. It turns out there exist ``exceptional" surfaces - bifurcate Killing surfaces are one example - where there are two solutions to $\bar{\delta}\varphi_\alpha = 0$ (the maximum expected or desired in that context) which are linearly independent as functions despite being pointwise linearly dependent at every point of $S$. The Dougan-Mason mass cannot be defined on such surfaces because the analogue of $T^{AB}$ just becomes zero. However, based on considerations of holomorphic spin bundles, Dougan and Mason argue such surfaces really are exceptional and not generic. Similarly, definition \ref{def:genericPhi} insists $\{\Phi^A\}_{A = 1}^4$ are pointwise linearly independent at least at one point of $S$ for $S$ to be called generic in the $\Phi^A$ sense.
\begin{lemma}
    $T^{AB}$ is antisymmetric and constant on $S$. Furthermore, the notion of generic in definition \ref{def:genericPhi} implies the notion of generic in definition \ref{def:genericT}.
    \label{thm:tProperties}
\end{lemma}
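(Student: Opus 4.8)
The plan is to treat the three assertions --- antisymmetry of $T^{AB}$, constancy of $T^{AB}$ on $S$, and the fact that Definition~\ref{def:genericPhi} implies Definition~\ref{def:genericT} --- separately, the constancy being the only substantive one. Antisymmetry is immediate: in $T^{AB} = \varepsilon^{\alpha\beta}\varphi_\alpha^A\varphi_\beta^B - \varepsilon^{\dot\alpha\dot\beta}\overline{\xi}_{\dot\alpha}^A\overline{\xi}_{\dot\beta}^B$, exchanging $A \leftrightarrow B$ and relabelling the summed spinor indices $\alpha \leftrightarrow \beta$ and $\dot\alpha \leftrightarrow \dot\beta$ produces an overall minus sign from the antisymmetry of $\varepsilon^{\alpha\beta}$ and $\varepsilon^{\dot\alpha\dot\beta}$, so $T^{AB} = -T^{BA}$.

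For constancy, I would first observe that $T^{AB}$, being (up to the conventions of Appendix~\ref{sec:conventions}) the charge-conjugation pairing $(\Phi^A)^T C^{-1}\Phi^B$, makes no reference to the spinor dyad and is therefore a GHP scalar of type $(0,0)$; on such a scalar $\bar{\eth}$ reduces to the purely tangential derivative $\overbar{m}^a D_a$ along $S$. Applying the Leibniz rule --- legitimate since the GHP weights of each paired factor cancel --- together with the four equations of Lemma~\ref{thm:phiGHPEquations} for $\bar{\eth}\varphi_o$, $\bar{\eth}\varphi_\iota$, $\bar{\eth}\overline{\xi}_o$ and $\bar{\eth}\overline{\xi}_\iota$, one checks that in $\bar{\eth}T^{AB}$ every term carrying a factor of $\mu$, $\rho$, $\bar{\sigma}$ or $\lambda$ cancels against its image under the $A \leftrightarrow B$ antisymmetrisation, while the leftover $\mathrm{i}k\sqrt{2}$ terms coming from $\bar{\eth}\varphi_o$ cancel exactly against those coming from $\bar{\eth}\overline{\xi}_\iota$; hence $\bar{\eth}T^{AB} = 0$. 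Since $\overbar{m}^a$ spans one of the two null directions of the complexified tangent space of the oriented surface $S$, $\bar{\eth}T^{AB} = 0$ says precisely that $T^{AB}$ is holomorphic for the complex structure that $S$ inherits as an oriented Riemannian $2$-manifold; a holomorphic function on a compact connected Riemann surface is constant, which is the claim (on each connected component of $S$).

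For the implication between the two notions of generic, view $T$ as the antisymmetric bilinear form on the four-dimensional Dirac spinor space whose value on $\Phi^A,\Phi^B$ is $T^{AB}$. In the chiral splitting this form is block-diagonal with blocks $\varepsilon^{\alpha\beta}$ and $-\varepsilon^{\dot\alpha\dot\beta}$ on $\mathbb{C}^2$, each a non-degenerate antisymmetric form, so $T$ itself is non-degenerate. If $S$ is generic in the sense of Definition~\ref{def:genericPhi}, then at some point $p \in S$ the four solutions $\Phi^1(p),\dots,\Phi^4(p)$ are linearly independent, hence a basis of the Dirac spinor space; the matrix $T^{AB}$ evaluated at $p$ is then the Gram matrix of a non-degenerate form in a basis and so has nonzero determinant. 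By the constancy just proved, $T^{AB}$ equals this invertible matrix everywhere on $S$, so $S$ is generic in the sense of Definition~\ref{def:genericT}. The one genuine obstacle is the constancy step: because $\overbar{m}^a\nabla_a\Phi = 0$ constrains only the $\overbar{m}^a D_a$-derivatives of the components of $\Phi$, one cannot argue that $\eth T^{AB}$ and $\bar{\eth}T^{AB}$ both vanish directly, and constancy must instead be extracted from holomorphicity on the compact $S$; the accompanying cancellation in $\bar{\eth}T^{AB}$, though short, is the only real computation.
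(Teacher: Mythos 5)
Your proposal is correct and follows essentially the same route as the paper: antisymmetry from the definition, constancy by showing $\bar{\eth}T^{AB}=0$ (your cancellation of the $\mu,\rho,\bar{\sigma},\lambda$ terms under antisymmetrisation and of the $\mathrm{i}k\sqrt{2}$ cross-terms between the $\varphi$ and $\xi$ blocks reproduces exactly the paper's computation, done there in two-component form via equation \ref{eq:deltaBarPhiDiracNotation}) followed by Liouville's theorem on the compact $S$, and invertibility from the pointwise non-degeneracy of the charge-conjugation pairing at the point where $\{\Phi^A\}$ is a pointwise basis, which is your Gram-matrix statement phrased in the paper as a nullspace argument.
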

\begin{proof}
    Antisymmetry follows directly from the definition. Next, observe that by equation \ref{eq:deltaBarPhiDiracNotation},
    \begin{align}
        \bar{\delta}T^{AB}
        &= \varepsilon^{\alpha\beta}\bar{\delta}(\varphi_\alpha^A)\varphi_\beta^B + \varepsilon^{\alpha\beta}\varphi_\alpha^A\bar{\delta}\varphi_\beta^B - \varepsilon^{\dot{\alpha}\dot{\beta}}\bar{\delta}(\bar{\xi}_{\dot{\alpha}}^A)\bar{\xi}_{\dot{\beta}}^B - \varepsilon^{\dot{\alpha}\dot{\beta}}\bar{\xi}_{\dot{\alpha}}^A\bar{\delta}\bar{\xi}_{\dot{\beta}}^B \\
        &= 2\mathrm{i}k\left(\overline{\xi}_o^A\varphi_\iota^B - \overline{\xi}_o^B\varphi_\iota^A + \varphi_\iota^A\overline{\xi}_o^B - \varphi_\iota^B\overline{\xi}_o^A\right) \\
        &= 0.
    \end{align}
    Therefore for each $A$ and $B$, $T^{AB}$ is a holomorphic function on $S$. Then, since $S$ is compact, Liouville's theorem implies $T^{AB}$ is constant on $S$.

    To prove invertibility, it's easier to work in Dirac spinor notation. With the charge conjugation matrix given in appendix \ref{sec:conventions}, observe that
    \begin{align}
        (\Phi^A)^TC^{-1}\Phi^B &= \begin{bmatrix}
            \varphi^A_\alpha & \bar{\xi}^{A\dot{\alpha}}
        \end{bmatrix}\begin{bmatrix}
            \varepsilon^{\alpha\beta} & 0 \\
            0 & \varepsilon_{\dot{\alpha}\dot{\beta}}
        \end{bmatrix}\begin{bmatrix}
            \varphi_\beta^B \\
            \bar{\xi}^{B\dot{\beta}}
        \end{bmatrix} = \varphi^A_\alpha\varepsilon^{\alpha\beta}\varphi_\beta^B + \bar{\xi}^{A\dot{\alpha}}\varepsilon_{\dot{\alpha}\dot{\beta}}\bar{\xi}^{B\dot{\beta}} = T^{AB}.
        \label{eq:tDirac}
    \end{align}
    Let $v_A$ be a vector in the nullspace of $T^{AB}$, i.e. $T^{AB}v_B = 0$. Let $\mathcal{Z} = v_A\Phi^A$. Then, 
    \begin{align}
        T^{AB}v_B = 0 \iff (\Phi^A)^TC^{-1}\mathcal{Z} = 0 \iff w_A(\Phi^A)^TC^{-1}\mathcal{Z} = 0
    \end{align}
    for any vector, $w_A$. Definition \ref{def:genericPhi} says there are four different $\Phi^A$ and they are pointwise linearly independent at least at one point, say $p$, on $S$. Since Dirac spinors also have four components, $\{\Phi^A\}_{A = 1}^4$ must form a pointwise basis at $p$. Hence, $w_A\Phi^A$ can be any Dirac spinor at $p$, which then implies $C^{-1}\mathcal{Z}|_p = 0$. But $C^{-1}$ is invertible, so it must be that $\mathcal{Z}|_p = 0$. However, then $v_A = 0$ by the linear independence of $\{\Phi^A\}_{A = 1}^4$ at $p$, which is equivalent to $T^{AB}$ having trivial nullspace.
\end{proof}
Before $T^{AB}$ can be put to use in extracting information from $Q^{AB}$, one more auxiliary result - which happens to just be a basic fact in linear algebra - is required.
\begin{lemma}
    For any non-negative definite, hermitian matrix, $H$, and antisymmetric matrix, $A$, $\tr(HA\overbar{H}\bar{A})$ is real and $\tr(HA\overbar{H}\bar{A}) \leq 0$.
    \label{thm:funny}
\end{lemma}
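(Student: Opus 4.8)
The plan is to diagonalise the non-negative definite hermitian matrix $H$ and reduce the claim to a statement about the entries of a matrix conjugated into the eigenbasis of $H$. Since $H$ is hermitian and non-negative definite, write $H = U\Lambda U^\dagger$ where $U$ is unitary and $\Lambda = \mathrm{diag}(\lambda_1, \dots, \lambda_n)$ with each $\lambda_i \geq 0$. The key observation is that $\overbar{H} = \overbar{U}\Lambda\overbar{U}^\dagger$ since $\Lambda$ is real, and similarly we can absorb the unitaries: setting $B = U^\dagger A\overbar{U}$, a short computation gives $\tr(HA\overbar{H}\bar{A}) = \tr(\Lambda B\Lambda B^\dagger)$, using that $\overbar{\bar{A}} = A$ and that the trace is cyclic and invariant under the unitary conjugations. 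Note that $B$ need not be antisymmetric, but it does satisfy the relation $B^\dagger = \overbar{U}^T\bar{A}^\dagger U = -\overbar{U}^T\overbar{A}U$; I will track exactly which symmetry survives — the cleanest route is to observe $\bar{A} = A^\dagger$ when... actually $A$ antisymmetric gives $A^T = -A$, hence $A^\dagger = \overbar{A^T} = -\overbar{A} = -\bar{A}$, so $\bar{A} = -A^\dagger$. This lets us replace $\bar{A}$ by $-A^\dagger$ throughout.

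With that substitution, $\tr(HA\overbar{H}\bar{A}) = -\tr(HA\overbar{H}A^\dagger)$. Now $\overbar{H}$ is itself hermitian non-negative definite (complex conjugate of such a matrix), so write $\overbar{H} = S^\dagger S$ for some matrix $S$ (e.g. $S = \overbar{H}^{1/2}$). Then
\begin{align}
    \tr(HA\overbar{H}A^\dagger) = \tr(H A S^\dagger S A^\dagger) = \tr\big((S A^\dagger) H (S A^\dagger)^\dagger\big) = \tr(M H M^\dagger),
\end{align}
where $M = S A^\dagger$. Since $H$ is non-negative definite, $M H M^\dagger$ is non-negative definite for any $M$, hence has non-negative real trace; in particular $\tr(MHM^\dagger) \geq 0$ and is real. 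Therefore $\tr(HA\overbar{H}\bar{A}) = -\tr(MHM^\dagger) \leq 0$ and is real, which is exactly the claim.

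The only genuinely delicate point is bookkeeping the conjugation/transpose identities correctly — it is easy to conflate $\overbar{(\cdot)}$ (entrywise complex conjugate), $(\cdot)^T$ and $(\cdot)^\dagger$, and the antisymmetry hypothesis is on $A$ with respect to $(\cdot)^T$, not $(\cdot)^\dagger$. The crucial derived identity is $\bar{A} = -A^\dagger$, valid precisely because $A^T = -A$. Once that is in hand, the factorisation $\overbar{H} = S^\dagger S$ (which exists since $\overbar{H} \succeq 0$) immediately exhibits $\tr(HA\overbar{H}\bar{A})$ as minus the trace of a manifestly non-negative definite matrix, giving both reality and the sign. An alternative, fully elementary route avoiding square roots: diagonalise $H$ as above, reduce to $-\tr(\Lambda B \Lambda B^\dagger)$ with $B = U^\dagger A \overbar{U}$, and expand in components to get $-\sum_{i,j}\lambda_i\lambda_j |B_{ij}|^2 \leq 0$, which is visibly real and non-positive; I would likely present this component version as it needs no functional-calculus input.
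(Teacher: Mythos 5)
Your proposal is correct, and the ``fully elementary'' component version you sketch at the end --- diagonalising $H$ and reducing the trace to $-\sum_{i,j}\lambda_i\lambda_j|B_{ij}|^2$ with $B = U^\dagger A\overbar{U}$ --- is exactly the paper's own argument (the paper's $v_{(E)}^\dagger A\bar{v}_{(F)}$ is your $B_{EF}$). Your primary route via $\bar{A} = -A^\dagger$ and the factorisation $\overbar{H} = S^\dagger S$ is only a mild repackaging of the same idea, and all the conjugation bookkeeping checks out.
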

\begin{proof}
    In this proof all indices will be downstairs and all summations will be written explicitly. Every hermitian matrix is orthogonally diagonalisable and has real eigenvalues. Therefore, 
    $\exists$ vectors, $\{v_{(A)}\}$, such that $v_{(A)}^\dagger v_{(B)} = \delta_{AB}$ and $Hv_{(A)} = \lambda_Av_{(A)}$ for some $\lambda_A \in \mathbb{R}$. $H$ being non-negative definite implies $\lambda_A \geq 0\,\,\forall A$. Let $U_{AB} = v_{(B)A}$ so orthogonal diagonalisation says 
    \begin{align}
        (U^\dagger HU)_{AB} &= \sum_{C, D}(U^\dagger)_{AC}H_{CD}U_{DB} \sum_{C, D}\bar{v}_{(A)C}H_{CD}v_{(B)D} = \delta_{AB}\lambda_B = D_{AB} \\
        \iff H_{AB} &= \sum_{C, D}U_{AC}D_{CD}(U^\dagger)_{DB} = \sum_{C, D}v_{(C)A}\delta_{CD}\lambda_D\bar{v}_{(D)B} = \sum_C\lambda_Cv_{(C)A}\bar{v}_{(C)B}.
    \end{align}
    Then, the quantity of interest is
    \begin{align}
        \tr(HA\overbar{H}\bar{A}) &= -\sum_{A, B, C, D}H_{AB}A_{BC}\overbar{H}_{CD}\bar{A}_{AD} \\
        &= -\sum_{A, B, C, D, E, F} \lambda_Ev_{(E)A}\bar{v}_{(E)B}A_{BC}\lambda_F\bar{v}_{(F)C}v_{(F)D}\bar{A}_{AD} \\
        &= -\sum_{E, F}\lambda_E\lambda_F|v_{(E)}^\dagger A\bar{v}_{(F)}|^2,
    \end{align}
    which is manifestly real and non-positive.
\end{proof}
\begin{definition}[Quasilocal mass]
    \label{def:quasilocalMass}
    Suppose the dominant energy condition holds on $\Sigma$, the null expansions on $S$ satisfy $\theta_l > 0$, $\theta_n < 0$ \& $\theta_l\theta_n < -8k^2$ and $S$ is generic (either definition). Then, construct $Q^{AB}$ and $T^{AB}$ by definitions \ref{def:qAB} \& \ref{def:tAB} and define the quasilocal mass, $m(S)$, to be
    \begin{align}
        m(S) &= \frac{1}{16\pi}\sqrt{-\tr(QT^{-1}\overbar{Q}\overbar{T}^{-1})}.
    \end{align}
\end{definition}
Theorem \ref{thm:qNonNegative}, lemma \ref{thm:tProperties} and lemma \ref{thm:funny} ensure $m(S)$ is well-defined and manifestly non-negative. Furthermore, $m(S)$ is independent of the choice of basis, $\{\Phi^A\}$, as follows. Define $\Phi^{\prime A} = B\indices{^A_B}\Phi^B$ for a constant, invertible matrix, $B$. Then, by definitions \ref{def:qAB} and \ref{def:tAB},
\begin{align}
    Q^{\prime AB} &= \bar{B}\indices{^A_C}Q^{CD}B\indices{^B_D} \iff Q^\prime = \bar{B}QB^T \,\,\,\mathrm{and} \\
    T^{\prime AB} &= B\indices{^A_C}T^{CD}B\indices{^B_D} \iff T^\prime = BTB^T.
\end{align}
Thus, the object in $m(S)$ transforms as
\begin{align}
    \tr(Q^\prime(T^\prime)^{-1}\overbar{Q}^\prime(\overbar{T}^\prime)^{-1}) &= \tr(\bar{B}QB^TB^{-T}T^{-1}B^{-1}B\overbar{Q}\bar{B}^T\bar{B}^{-T}\overbar{T}^{-1}\bar{B}^{-1}) = \tr(QT^{-1}\overbar{Q}\overbar{T}^{-1}).
\end{align}
\begin{lemma}
    $m(S) = 0$ for every surface, $S$, in AdS that is generic in the $\Phi^A$ sense.
\end{lemma}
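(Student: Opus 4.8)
The plan is to exhibit a distinguished basis $\{\Phi^A\}$ of solutions of $\overbar{m}^a\nabla_a\Phi = 0$ on $S$ for which $Q^{AB}$ vanishes identically, and then to appeal to the basis-independence of $m(S)$ established immediately after Definition \ref{def:quasilocalMass}. The key observation is that AdS is precisely the geometry for which the modified connection $\nabla_a = D_a + \mathrm{i}k\gamma_a$ is \emph{flat}; hence AdS carries a four-complex-dimensional space of globally $\nabla$-parallel spinors (its ``imaginary Killing spinors''), and restricting these to $S$ yields solutions of $\overbar{m}^a\nabla_a\Phi = 0$ all of whose covariant derivatives vanish. Since $Q^{AB}$ is built entirely out of $\Phi^A$ and $\nabla_c\Phi^B$, it is then zero, whence $\tr(QT^{-1}\overbar{Q}\overbar{T}^{-1}) = 0$ and $m(S) = 0$.

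First I would record that in AdS the matter energy-momentum tensor vanishes, so the dominant energy condition holds trivially (the remaining expansion hypotheses of Definition \ref{def:quasilocalMass} being assumed, as they must be for $m(S)$ to be defined at all), and the Einstein equation gives $R_{ab} = \Lambda g_{ab}$, $R = 4\Lambda$; by maximal symmetry $R_{abcd} = \tfrac{\Lambda}{3}\left(g_{ac}g_{bd} - g_{ad}g_{bc}\right)$. Next I would compute the curvature of $\nabla$. Because the $\gamma$-matrices are covariantly constant, the first-derivative cross terms cancel and
\begin{align}
[\nabla_a,\nabla_b]\Psi &= [D_a,D_b]\Psi - k^2[\gamma_a,\gamma_b]\Psi = -\tfrac{1}{4} R_{abcd}\gamma^{cd}\Psi - k^2[\gamma_a,\gamma_b]\Psi,
\end{align}
the last step using the curvature identity $[D_a,D_b]\Psi = -\tfrac{1}{4}R_{abcd}\gamma^{cd}\Psi$ already invoked in the proof of Theorem \ref{thm:lichnerowicz}. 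Substituting the maximally symmetric Riemann tensor gives $R_{abcd}\gamma^{cd} = \tfrac{2\Lambda}{3}\gamma_{ab} = \tfrac{\Lambda}{3}[\gamma_a,\gamma_b]$, so that $[\nabla_a,\nabla_b]\Psi = -\left(\tfrac{\Lambda}{12} + k^2\right)[\gamma_a,\gamma_b]\Psi$, which vanishes precisely because $k^2 = -\Lambda/12$. Thus $\nabla$ is flat, and since AdS is simply connected (at worst one passes to a simply connected neighbourhood of the compact $\Sigma$) it admits a four-complex-dimensional space of global $\nabla$-parallel spinors.

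It then remains to assemble these facts. A nonzero parallel spinor is nowhere vanishing, since parallel transport is an isomorphism, and a maximal family of parallel spinors is pointwise linearly independent at every point; in particular their restrictions to $S$ are linearly independent as functions on $S$ and each solves $\overbar{m}^a\nabla_a\Phi = 0$. Since $S$ is generic in the sense of Definition \ref{def:genericPhi}, the solution space of $\overbar{m}^a\nabla_a\Phi = 0$ is exactly four-complex-dimensional, so these four restrictions constitute a basis $\{\Phi^A\}$. For this basis $\nabla_c\Phi^A = 0$ identically on $S$, so the defining formula for $Q^{AB}$ in Definition \ref{def:quasilocalMassIntroduction} (equivalently the GHP form of Definition \ref{def:qAB}) gives $Q^{AB} = 0$. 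By Lemma \ref{thm:tProperties}, genericity in the $\Phi^A$ sense implies genericity in the $T^{AB}$ sense, so $T^{-1}$ exists and $\tr(QT^{-1}\overbar{Q}\overbar{T}^{-1}) = \tr(0) = 0$; hence $m(S) = 0$ for this basis, and by the basis-independence noted after Definition \ref{def:quasilocalMass} the value of $m(S)$ does not depend on the chosen basis. I expect the only step calling for genuine verification to be the flatness of $\nabla$ — i.e. the cancellation $\tfrac{\Lambda}{12} + k^2 = 0$ that the modified connection was engineered to produce; everything else is routine, modulo the standard facts that AdS is simply connected and that the parallel spinors of a flat connection are nowhere zero.
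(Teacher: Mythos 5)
Your proof is correct and follows essentially the same route as the paper: both take the four-complex-dimensional space of AdS Killing spinors as the basis $\{\Phi^A\}$, so that $\nabla_a\Phi^A = 0$ forces $Q^{AB} = 0$ and hence $m(S) = 0$, with basis-independence closing the argument. The only difference is that you explicitly verify the existence of this Killing-spinor space by checking that the maximally symmetric curvature cancels against $k^2 = -\Lambda/12$ to make $\nabla$ flat, a fact the paper simply asserts (and later exhibits concretely in lemma \ref{thm:adsBasics}).
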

\begin{proof}
    With this notion of generic $\exists$ exactly four linearly independent solutions to $\overbar{m}^a\nabla_a\Phi = 0$. However, AdS already has a four dimensional space of Killing spinors, i.e. solutions to $\nabla_a\varepsilon_k = 0$. Therefore, since $\nabla_a\varepsilon_k = 0$ is a stronger condition, one can use the Killing spinors of AdS as $\{\Phi^A\}_{A = 1}^4$. Then, $\Phi = \varepsilon_k$ and $\nabla_a\varepsilon_k = 0 \implies E^{ab}(\Phi) = 0 \implies Q(\Phi) = 0 \implies m(S) = 0$.
\end{proof}
It's natural to consider the converse, i.e. study the implications of $m(S) = 0$. This problem is considerably harder even in the asymptotically flat or asymptotically hyperbolic context - see \cite{Hirsch2024a, Hirsch2024b} for recent progress in those cases - and will not be considered in this work.

The new definition of quasilocal mass is closest in spirit to Penrose's definition, albeit there is no need for twistors. In particular, $Q^{AB}$ is analogous to Penrose's ``kinematical twistor" - see the material around equation 23 in \cite{Penrose1982} - while $T^{AB}$ is analogous to his surface ``infinity twistor" - see the discussion between equations 25 and 26 in \cite{Penrose1982}. Meanwhile, the present definition is also closely related to the Dougan-Mason mass. When $\Lambda = 0$, the left-handed and right-handed sectors of all the equations decouple, meaning it suffices to simply set the right-handed sector to zero. Then, $A, B, \cdots$ only run $1, 2$. Thus, $T^{AB}$ can be normalised to $\varepsilon^{AB}$ and one can use it to manipulate two-component spinors with $Q^{AB}$ now viewed as $P^{\dot{A}A}$, a 4-momentum converted to two-component spinors. Then, the present definition says
\begin{align}
    -256\pi^2m(S)^2 &= \tr(QT^{-1}\overbar{Q}\overbar{T}^{-1}) = Q^{AB}T_{BC}\overbar{Q}^{CD}\overbar{T}_{DA} \equiv P^{\dot{A}A}\varepsilon_{AB}P^{B\dot{B}}\varepsilon_{\dot{B}\dot{A}},
\end{align} 
which is the Dougan-Mason mass (up to normalisation). However, since Dougan and Mason have a full energy-momentum vector, $P^{\dot{A}A}$, they are able to further decompose $m(S)$ into a quasilocal energy and quasilocal linear momentum. This decomposition is lost in the present definition - as it is in Penrose's definition when $S$ is away from $\mathcal{I}$. While the technical reason is simply that $A, B, \cdots$ run over four indices, instead of two, a more physical reason could be the difference between the Casimir operators of $\mathfrak{o(3, 2)}$ and $\mathfrak{iso(3, 1)}$, as will be discussed further at the end of section \ref{sec:asymptotics}.

\section{Highly symmetric examples}
\label{sec:schwarzschild}
For an arbitrary surface, $S$, the quasilocal mass of definition \ref{def:quasilocalMass} will likely be very difficult, if not impossible, to calculate analytically. However, if the surface has a high degree of symmetry, then more progress can be made. In section \ref{sec:spherical} the focus will be spherically symmetric spacetimes, where it will be shown definition \ref{def:quasilocalMass} reduces to the Misner-Sharp mass\footnote{The Misner-Sharp mass is usually taken as the standard mass for spherically symmetric spacetimes \cite{Szabados2009}.} \cite{Misner1964} of such spacetimes\footnote{While this could appear to be merely a sanity check, in fact it is non-trivial. For example, the Brown-York mass \cite{Brown1993} does not agree with the Misner-Sharp mass and in fact produces $m(S_r^2) = r(1 - \sqrt{1 - 2M/r})$ in the Schwarzschild spacetime (with $\Lambda = 0$) despite being physically very well-motivated.}. Likewise, section \ref{sec:torus} explores two examples with toroidal symmetry, where it will turn out that a number of assumptions required for definition \ref{def:quasilocalMass} don't hold. The canonical examples of spacetimes with such high symmetry are the Schwarzschild spacetime and its variations, described by the metric,
\begin{align}
    g &= -\left(c - \frac{2M}{r} + 4k^2\right)\mathrm{d}t\otimes\mathrm{d}t + \frac{\mathrm{d}r\otimes\mathrm{d}r}{c - 2M/r + 4k^2} + r^2g^{(c)},
    \label{eq:schwarzschildGeneral}
\end{align}
where $c = 1$, $0$ or $-1$ and $g^{(c)}$ is the standard metric on the round 2-sphere, the 2-torus or a compactified 2D hyperbolic space respectively.

\subsection{Spherical symmetry}
\label{sec:spherical}
Given there is a heavy reliance on null normals in the NP and GHP formalisms, it will be be easiest to study a general, spherically symmetric spacetime by deploying double null coordinates. In particular, for any spherically symmetric spacetime, let $r$ be the area-radius function and let $u$ \& $v$ be null coordinates normal to the symmetry spheres, $S_r^2$. Then, in such ``double null" coordinates, spherical symmetry dictates the metric is
    \begin{align}
        g &= -\Omega(u, v)^2(\mathrm{d}u\otimes\mathrm{d}v + \mathrm{d}v\otimes\mathrm{d}u) + r(u, v)^2g_{S^2},
    \end{align}
for some function, $\Omega(u, v)$. Without loss of generality assume $u$ is outgoing and $v$ is ingoing, i.e. $\partial_ur > 0$ and $\partial_vr< 0$.

For any $S_r^2$ in this spacetime, a natural NP tetrad is
\begin{align}
    l &= \frac{1}{\Omega}\frac{\partial}{\partial u},\,\, n = \frac{1}{\Omega}\frac{\partial}{\partial v} \,\,\mathrm{and}\,\, m = \frac{1}{r\sqrt{2}}\left(\frac{\partial}{\partial\theta} + \frac{\mathrm{i}}{\sin(\theta)}\frac{\partial}{\partial\phi}\right).
\end{align}
For the chosen tetrad, one finds 
\begin{align}
    \sigma = \lambda = 0, \,\,\, \rho = -\frac{\partial_ur}{\Omega r},\,\,\, \mu = \frac{\partial_vr}{\Omega r} \,\,\,\mathrm{and}\,\,\, \beta = -\alpha = \frac{1}{2\sqrt{2}r}\cot(\theta)
    \label{eq:schwarzschildNP}
\end{align}
by direct calculation.
\begin{theorem}
    The general solution to $\overbar{m}^a\nabla_a\Phi$ on $S_r^2$ is 
    \begin{align}
        \xi_o &= \bar{c}_1\left({}_{1/2}Y_{1/2,-1/2}\right) + \bar{c}_2\left({}_{1/2}Y_{1/2,1/2}\right), \\
        \varphi_\iota &= c_3\left({}_{-1/2}Y_{1/2,-1/2}\right) + c_4\left({}_{-1/2}Y_{1/2,1/2}\right), \\
        \varphi_o &= -\bigg(\frac{\sqrt{2}}{\Omega}\partial_v(r)c_3 + 2\mathrm{i}krc_2\bigg)\left({}_{1/2}Y_{1/2,-1/2}\right) - \bigg(\frac{\sqrt{2}}{\Omega}\partial_v(r)c_4 - 2\mathrm{i}krc_1\bigg)\left({}_{1/2}Y_{1/2,1/2}\right) \,\,\,\mathrm{and} \\
        \xi_\iota &= \bigg(\frac{\sqrt{2}}{\Omega}\partial_u(r)\bar{c}_1 + 2\mathrm{i}kr\bar{c}_4\bigg)\left({}_{-1/2}Y_{1/2,-1/2}\right) + \bigg(\frac{\sqrt{2}}{\Omega}\partial_u(r)\bar{c}_2 - 2\mathrm{i}kr\bar{c}_3\bigg)\left({}_{-1/2}Y_{1/2,1/2}\right),
    \end{align}
    where $c_A$ are arbitrary constants and $\left({}_{s}Y_{jm}\right)$ are spin-weighted spherical harmonics\footnote{The exact expressions for the four spin-weighted spherical harmonics used are listed in appendix \ref{sec:conventions}.}.
    \label{thm:phiSphere}
\end{theorem}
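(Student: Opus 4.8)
The plan is to convert $\overbar{m}^a\nabla_a\Phi = 0$ into the four GHP scalar equations of Lemma \ref{thm:phiGHPEquations}, specialise them to the symmetry sphere using the spin coefficients in equation \ref{eq:schwarzschildNP}, and then solve them with the standard $\eth$-calculus on the round $S^2$. Since $\sigma = \lambda = 0$ on $S_r^2$, equations \ref{eq:edthBarBPhi} and \ref{eq:edthBarABarXi} decouple completely and become the homogeneous equations $\bar{\eth}\varphi_\iota = 0$ and $\bar{\eth}\overline{\xi}_o = 0$, whereas equations \ref{eq:edthBarAPhi} and \ref{eq:edthBarBBarXi} express $\bar{\eth}\varphi_o$ and $\bar{\eth}\overline{\xi}_\iota$ algebraically in terms of $\varphi_\iota$ and $\overline{\xi}_o$, with the coefficients $\rho$, $\mu$ and $k$ all constant on the sphere (by equation \ref{eq:schwarzschildNP}, $\rho = -\partial_u r/(\Omega r)$ and $\mu = \partial_v r/(\Omega r)$ depend only on $u,v$). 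So the system is effectively triangular: two homogeneous equations determine $\varphi_\iota$ and $\overline{\xi}_o$, and the remaining two then define $\varphi_o$ and $\overline{\xi}_\iota$.

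Next I would invoke the classical facts about the $\eth$-operators on the round sphere: with $\alpha = -\beta = \tfrac{1}{2\sqrt{2}r}\cot\theta$ the GHP $\bar{\eth}$ built from $\overbar{m}^aD_a$ and the connection terms $-p\alpha - q\bar{\beta}$ is (a radius-dependent multiple of) the usual spin-lowering operator on the spin-weighted spherical harmonics ${}_sY_{jm}$, $j \geq |s|$. Two properties suffice: $\bar{\eth}$ annihilates exactly the $j = \tfrac{1}{2}$ level of spin-weight $-\tfrac{1}{2}$ functions, so its kernel there is $\mathrm{span}\{{}_{-1/2}Y_{1/2,-1/2},\,{}_{-1/2}Y_{1/2,1/2}\}$; and $\bar{\eth}$ restricts to a linear isomorphism from spin-weight $+\tfrac{1}{2}$ functions onto spin-weight $-\tfrac{1}{2}$ functions. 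Solving $\bar{\eth}\varphi_\iota = 0$ then gives $\varphi_\iota$ as an arbitrary combination of ${}_{-1/2}Y_{1/2,\pm1/2}$ (defining $c_3,c_4$), and $\bar{\eth}\overline{\xi}_o = 0$ gives $\overline{\xi}_o$ likewise; rewriting this in terms of $\xi_o$ via $\overline{{}_sY_{jm}} \propto {}_{-s}Y_{j,-m}$ reproduces the stated form of $\xi_o$ in ${}_{1/2}Y_{1/2,\pm1/2}$ with coefficients $\bar{c}_1,\bar{c}_2$.

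Then I would substitute these into equations \ref{eq:edthBarAPhi} and \ref{eq:edthBarBBarXi}. Since $\varphi_\iota$ and $\overline{\xi}_o$ both sit at the $j = \tfrac{1}{2}$ level and $\rho,\mu,k$ are constants on $S_r^2$, the right-hand sides are spin-weight $-\tfrac{1}{2}$ combinations of ${}_{-1/2}Y_{1/2,\pm1/2}$, hence lie in the image of $\bar{\eth}|_{\text{spin }+1/2}$; applying $\bar{\eth}^{-1}$ (i.e. dividing by the nonzero $j = \tfrac{1}{2}$ eigenvalue, and carrying the explicit $1/r$ and the $\tfrac{1}{\sqrt2}$ from equation \ref{eq:NPTetrad}) determines $\varphi_o$ and $\overline{\xi}_\iota$ uniquely as spin-weight $+\tfrac{1}{2}$ combinations of ${}_{1/2}Y_{1/2,\pm1/2}$. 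Conjugating $\overline{\xi}_\iota$, collecting terms and inserting the values of $\mu$ and $\rho$ should yield precisely the displayed coefficients for $\varphi_o$ and $\xi_\iota$. Because this last step imposes no further constraint, the solution space is exactly four complex-dimensional, parametrised by $c_1,\dots,c_4$ — which is also exactly what genericity in the $\Phi^A$ sense requires here.

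The conceptual content — the triangular decoupling and the kernel/cokernel structure of $\bar{\eth}$ on $S^2$ — is routine. I expect the only genuinely fiddly part to be the normalisation bookkeeping: checking that the GHP $\bar{\eth}$ assembled from $\overbar{m}^aD_a$ together with $-p\alpha - q\bar{\beta}$ really coincides with the standard spin-lowering operator on $S_r^2$ including the correct power of $r$, and then threading those constants through the two inversions so that the coefficients of $\varphi_o$ and $\xi_\iota$ come out exactly as stated rather than up to an overall radius-dependent factor. That is where I expect the length of the argument to reside.
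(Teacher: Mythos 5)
Your proposal is correct and follows essentially the same route as the paper: exploit $\sigma=\lambda=0$ to make the system triangular, solve the two homogeneous equations using the kernel of $\bar{\eth}$ at the $j=\tfrac12$ level of the spin-weighted spherical harmonics, then substitute into equations \ref{eq:edthBarAPhi} and \ref{eq:edthBarBBarXi} and invert $\bar{\eth}$ on the $j=\tfrac12$ eigenspace (the paper does this via the explicit $\eth_s$, $\bar{\eth}_s$ eigenvalue relations and completeness rather than an abstract $\bar{\eth}^{-1}$, but the content is identical). You also correctly locate the only real work in the normalisation bookkeeping relating the GHP $\bar{\eth}$ to the standard spin-lowering operator via the $\alpha=-\beta$ values in equation \ref{eq:schwarzschildNP}.
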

\begin{proof}
    Let $\eth_s$ and $\bar{\eth}_s$ be differential operators that act on functions on the sphere, $F$, by
    \begin{align}
        \eth_sF &= s\cot(\theta)F - \left(\frac{\partial}{\partial\theta} + \frac{\mathrm{i}}{\sin(\theta)}\frac{\partial}{\partial\phi}\right)F \,\,\,\mathrm{and} \\
        \bar{\eth}_sF &= -s\cot(\theta)F - \left(\frac{\partial}{\partial\theta} - \frac{\mathrm{i}}{\sin(\theta)}\frac{\partial}{\partial\phi}\right)F. 
    \end{align}
    Then, since $\varphi_o$ \& $\xi_o$ are type-$(0, -1)$ and $\varphi_\iota$ \& $\xi_\iota$ are type-$(0, 1)$ in the GHP formalism, the chosen tetrad and the NP coefficients in equation \ref{eq:schwarzschildNP} imply the equations of lemma \ref{thm:phiGHPEquations} can be written as
    \begin{align}
        0 &= \bar{\eth}_{1/2}\varphi_o  - \frac{\sqrt{2}}{\Omega}\partial_v(r)\varphi_\iota + 2\mathrm{i}rk\overline{\xi}_o,
        \label{eq:edthHalfBarAPhi} \\
        0 &= \eth_{-1/2}\xi_\iota - \frac{\sqrt{2}}{\Omega}\partial_u(r)\xi_o - 2\mathrm{i}rk\overline{\varphi}_\iota, 
        \label{eq:edthHalfBXi} \\
        0 &= \bar{\eth}_{-1/2}\varphi_\iota 
        \label{eq:edthHalfBarBPhi} \,\,\,\mathrm{and} \\
        0 &= \eth_{1/2}\xi_o.
        \label{eq:edthHalfAXi}
    \end{align}
    The spin-weighted spherical harmonics, $\left({}_{s}Y_{jm}\right)$, are known \cite{Penrose1984} to be eigenfunctions of $\eth_s$ and $\bar{\eth}_s$; in particular
    \begin{align}
        \eth_s\left({}_{s}Y_{jm}\right) &= \sqrt{(j - s)(j + s + 1)}\left({}_{s+1}Y_{jm}\right), \\
        \bar{\eth}_s\left({}_{s}Y_{jm}\right) &= -\sqrt{(j + s)(j - s + 1)}\left({}_{s-1}Y_{jm}\right) \,\,\,\mathrm{and} \\
        \overline{\left({}_{s}Y_{jm}\right)} &= (-1)^{s + m}\left({}_{-s}Y_{j(-m)}\right).
    \end{align}
    Furthermore, they form a complete basis for expanding functions on the round sphere. Hence, it immediately follows that the solutions to equations \ref{eq:edthHalfBarBPhi} and \ref{eq:edthHalfAXi} are 
    \begin{align}
        \varphi_\iota &= c_3\left({}_{-1/2}Y_{1/2,-1/2}\right) + c_4\left({}_{-1/2}Y_{1/2,1/2}\right) \,\,\,\mathrm{and} \\
        \xi_o &= \bar{c}_1\left({}_{1/2}Y_{1/2,-1/2}\right) + \bar{c}_2\left({}_{1/2}Y_{1/2,1/2}\right)
    \end{align}
    for some constants, $c_1$, $c_2$, $c_3$ and $c_4$.

    Substituting these into equations \ref{eq:edthHalfBarAPhi} and \ref{eq:edthHalfBXi} then says
    \begin{align}
        \bar{\eth}_{1/2}\varphi_o &= \bigg(\frac{\sqrt{2}}{\Omega}\partial_v(r)c_3 + 2\mathrm{i}krc_2\bigg)\left({}_{-1/2}Y_{1/2,-1/2}\right) \nonumber \\
        &\,\,\,\,\,\,\, + \bigg(\frac{\sqrt{2}}{\Omega}\partial_v(r)c_4 - 2\mathrm{i}krc_1\bigg)\left({}_{-1/2}Y_{1/2,1/2}\right) \,\,\,\mathrm{and} \\
        \eth_{-1/2}\xi_\iota &= \bigg(\frac{\sqrt{2}}{\Omega}\partial_u(r)\bar{c}_1 + 2\mathrm{i}kr\bar{c}_4\bigg)\left({}_{1/2}Y_{1/2,-1/2}\right) + \bigg(\frac{\sqrt{2}}{\Omega}\partial_u(r)\bar{c}_2 - 2\mathrm{i}kr\bar{c}_3\bigg)\left({}_{1/2}Y_{1/2,1/2}\right).
    \end{align}
    The claimed expressions for $\varphi_o$ and $\xi_\iota$ then follow by once again applying the completeness and eigenfunction properties (under $\eth_s$ and $\bar{\eth}_s$) of spin-weighted spherical harmonics. 
\end{proof}
The mass definition \ref{def:quasilocalMass} will be compared against is the Misner-Sharp mass \cite{Misner1964}.
\begin{definition}[Misner-Sharp mass]
    Including a cosmological constant, the Misner-Sharp mass for spherically symmetric spacetimes is defined to be
    \begin{align}
        m_{MS}(S_r^2) &= \frac{r}{2}\left(1 + 4k^2r^2 - (g^{ab} - \beta^{ab})D_a(r)D_b(r)\right),
    \end{align}
    where $\beta_{ab}$ is the induced metric on each $S_r^2$.
\end{definition}
\begin{theorem}
    $m(S_r^2)$ agrees with the Misner-Sharp mass (with cosmological constant) for spherically symmetric spacetimes.
\end{theorem}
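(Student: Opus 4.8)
The plan is to feed the explicit solution of Theorem \ref{thm:phiSphere} straight into Definitions \ref{def:qAB} and \ref{def:tAB}, assemble $m(S_r^2)$, and recognise the result. Fix the basis $\{\Phi^A\}_{A=1}^4$ by setting $c_B=\delta^A_B$ in Theorem \ref{thm:phiSphere}, so that $\Phi=c_A\Phi^A$ for arbitrary constants $c_A$, exactly as in the proof of Theorem \ref{thm:qNonNegative}. The decisive simplification on a symmetry sphere is that, by \ref{eq:schwarzschildNP}, the quantities $\rho=-\partial_ur/(\Omega r)$, $\mu=\partial_vr/(\Omega r)$, $\sigma=\lambda=0$, together with $r$, $\Omega$ and $k$, are all constant over $S_r^2$; hence every integral reduces to an angular integral of a product of two spin-$\tfrac12$ weighted spherical harmonics, evaluated via their orthonormality and the conjugation rule $\overline{({}_sY_{jm})}=(-1)^{s+m}({}_{-s}Y_{j,-m})$.

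First I would compute $T^{AB}$. By Lemma \ref{thm:tProperties} it is constant on $S_r^2$, so one may write $T^{AB}=\tfrac1{4\pi r^2}\int_{S_r^2}T^{AB}\,\mathrm{d}A$ and substitute the $\varphi^A$, $\overline{\xi}^A$ of Theorem \ref{thm:phiSphere} into Definition \ref{def:tAB}; the angular integrals then collapse. One expects $T$ to be proportional to a block anti-diagonal, symplectic-type matrix with entries built from $\partial_ur/\Omega$, $\partial_vr/\Omega$ and $kr$, hence invertible precisely when $\partial_ur\,\partial_vr\neq0$; this is where $S_r^2$ is generic. Record $T^{-1}$. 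Next I would compute $Q^{AB}$ by substituting the explicit $\varphi^A$, $\overline{\xi}^A$ directly into Definition \ref{def:qAB}, pulling the angle-independent factors $\rho$, $\mu$, $r$, $\Omega$, $k$ outside, and doing the surviving angular integrals by orthonormality; Hermiticity and non-negativity of the resulting $Q$ are guaranteed a priori by Definition \ref{def:qAB} and Theorem \ref{thm:qNonNegative} --- a useful internal check --- and its entries should be linear combinations of $\rho r^2$, $\mu r^2$ and $kr^2$.

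With $Q$ and $T^{-1}$ in hand one forms $QT^{-1}\overbar{Q}\,\overbar{T}^{-1}$ and takes its trace; using $\rho\mu=-\partial_ur\partial_vr/(\Omega^2r^2)$, this explicit $4\times4$ computation should collapse to $\tr(QT^{-1}\overbar{Q}\,\overbar{T}^{-1})=-(8\pi r)^2\bigl(1+4k^2r^2-2r^2\rho\mu\bigr)^2$. To close the loop, note that in the double-null metric $g^{uv}=g^{vu}=-1/\Omega^2$ and $\beta^{ab}D_a(r)D_b(r)=0$, so $(g^{ab}-\beta^{ab})D_a(r)D_b(r)=2g^{uv}\partial_ur\,\partial_vr=-2\partial_ur\partial_vr/\Omega^2=2r^2\rho\mu$, giving $m_{MS}(S_r^2)=\tfrac r2\bigl(1+4k^2r^2-2r^2\rho\mu\bigr)$. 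Taking the non-negative square root, $m(S_r^2)=\tfrac1{16\pi}\sqrt{-\tr(QT^{-1}\overbar{Q}\,\overbar{T}^{-1})}=\tfrac r2\bigl|1+4k^2r^2-2r^2\rho\mu\bigr|$, which is $m_{MS}(S_r^2)$ since the Misner--Sharp mass is non-negative wherever $m(S_r^2)$ is defined.

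I expect the main obstacle to be bookkeeping rather than insight: keeping straight which spin-$\tfrac12$ harmonic pairs with which and with what sign under the conjugation rule, tracking every factor of $\sqrt2$ and $\mathrm{i}$, and checking that the $4\times4$ trace genuinely collapses to a perfect square. A secondary point worth verifying is that the invertibility of $T$ --- and hence the genericity of $S_r^2$ needed for Definition \ref{def:quasilocalMass} --- holds exactly on the region where the expansion hypotheses $\theta_l>0$, $\theta_n<0$, $\theta_l\theta_n<-8k^2$ are available.
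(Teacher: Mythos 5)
Your proposal follows essentially the same route as the paper: substitute the explicit basis from Theorem \ref{thm:phiSphere} into Definitions \ref{def:qAB} and \ref{def:tAB}, compute the $4\times4$ matrices $Q$ and $T^{-1}$ explicitly using constancy of $\rho,\mu,r,\Omega$ on $S_r^2$ and orthonormality of the spin-weighted harmonics, and verify by direct calculation that $\frac{1}{16\pi}\sqrt{-\tr(QT^{-1}\overbar{Q}\overbar{T}^{-1})}=\frac{r}{2}\bigl(1+4k^2r^2+\frac{2}{\Omega^2}\partial_ur\,\partial_vr\bigr)$, which is $m_{MS}$ in double-null coordinates; your predicted final formula agrees with the paper's. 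The only small correction: $T$ is invertible when $\partial_ur\,\partial_vr+2k^2\Omega^2r^2\neq0$ (the denominator appearing in $T^{-1}$), not when $\partial_ur\,\partial_vr\neq0$ --- a condition implied by, but strictly weaker than, $\theta_l\theta_n<-8k^2$, which is exactly the point you flagged for verification.
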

\begin{proof}
    Taking the four $c_A$ to be the coefficients multiplying the four linearly independent solutions, it follows from theorem \ref{thm:phiSphere} that
    \begin{align}
        Q^{AB} &\equiv \frac{4r(2\partial_u(r)\partial_v(r) + \Omega^2(1 + 4k^2r^2))}{\Omega^3}\begin{bmatrix}
            \partial_ur & 0 & 0 & -\mathrm{i}k\Omega r\sqrt{2} \\
            0 & \partial_ur & \mathrm{i}k\Omega r\sqrt{2} & 0 \\
            0 & -\mathrm{i}k\Omega r\sqrt{2} & -\partial_vr & 0 \\
            \mathrm{i}k\Omega r\sqrt{2} & 0 & 0 & -\partial_vr
        \end{bmatrix}, \\
        T^{AB} &\equiv \frac{1}{\pi\Omega}\begin{bmatrix}
            0 & -\partial_ur & -\mathrm{i}k\Omega r\sqrt{2} & 0 \\
            \partial_ur & 0 & 0 & -\mathrm{i}k\Omega r\sqrt{2} \\
            \mathrm{i}k\Omega r\sqrt{2} & 0 & 0 & -\partial_vr \\
            0 & \mathrm{i}k\Omega r\sqrt{2} & \partial_vr & 0
        \end{bmatrix} \,\,\,\mathrm{and\,\,hence} \\
        T^{-1} &= \frac{\pi\Omega}{\partial_u(r)\partial_v(r) + 2k^2\Omega^2r^2}\begin{bmatrix}
            0 & \partial_vr & -\mathrm{i}k\Omega r\sqrt{2} & 0 \\
            -\partial_vr & 0 & 0 & -\mathrm{i}k\Omega r\sqrt{2} \\
            \mathrm{i}k\Omega r\sqrt{2} & 0 & 0 & \partial_ur \\
            0 & \mathrm{i}k\Omega r\sqrt{2} & -\partial_ur & 0
        \end{bmatrix}.
    \end{align}
    Then, by direct calculation, one finds 
    \begin{align}
        m(S_r^2) &= \frac{1}{16\pi}\sqrt{-\tr(QT^{-1}\overbar{Q}\overbar{T}^{-1})} =  \frac{r}{2}\left(\frac{2}{\Omega^2}\partial_u(r)\partial_v(r) + 1 + 4k^2r^2\right),
    \end{align}
    which is the Misner-Sharp mass in double null coordinates (note the Misner-Sharp mass is manifestly coordinate independent).
\end{proof}
\begin{corollary}
    For the Schwarzschild-AdS spacetime, $m(S_r^2)$ coincides with the mass parameter, $M$, in the metric.
\end{corollary}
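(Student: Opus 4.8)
The plan is to read the corollary off the theorem just proved, which identifies $m(S_r^2)$ with the (cosmological) Misner-Sharp mass of any spherically symmetric spacetime, and then simply to evaluate that mass on Schwarzschild-AdS. Since the Misner-Sharp mass is manifestly coordinate-independent, there is no need to pass to double-null coordinates: one can work directly in the $(t,r)$ chart of \eqref{eq:schwarzschildGeneral} with $c = 1$.

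First I would note that for the metric \eqref{eq:schwarzschildGeneral} the area-radius function is exactly the coordinate $r$, so $D_a r$ is purely radial and the induced metric on each $S_r^2$ is $\beta_{ab} = r^2 (g^{(1)})_{ab}$, supported only on the angular directions. Hence $\beta^{ab} D_a r\, D_b r = 0$ and $(g^{ab} - \beta^{ab}) D_a r\, D_b r = g^{ab} D_a r\, D_b r = g^{rr}$, which for \eqref{eq:schwarzschildGeneral} is the same radial function that multiplies $-\mathrm{d}t\otimes\mathrm{d}t$. The cosmological piece of $g^{rr}$ then cancels the $4k^2 r^2$ term in the Misner-Sharp formula, leaving
\begin{align}
m(S_r^2) = m_{MS}(S_r^2) = \frac{r}{2}\big(1 + 4k^2 r^2 - g^{rr}\big) = \frac{r}{2}\cdot\frac{2M}{r} = M.
\end{align}
As a sanity check one could instead convert the metric to double-null form and substitute $\Omega,\partial_u r,\partial_v r$ into the expression $m(S_r^2) = \frac{r}{2}\big(\frac{2}{\Omega^2}\partial_u(r)\partial_v(r) + 1 + 4k^2 r^2\big)$ from the previous proof; this is more laborious and gives the same value.

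The only genuine work is checking that the hypotheses of the preceding theorem --- equivalently of Definition \ref{def:quasilocalMass} --- actually hold for the spheres in question: the dominant energy condition (automatic, since $T_{ab} = 0$ off the singularity), the expansion inequalities $\theta_l > 0$, $\theta_n < 0$, $\theta_l\theta_n < -8k^2$, and genericity. Using $\theta_l = -2\rho$, $\theta_n = 2\mu$ with the NP coefficients \eqref{eq:schwarzschildNP}, the first two reduce to $\partial_u r > 0$, $\partial_v r < 0$, the product inequality cuts down the admissible range of $r$ in the static exterior region, and genericity (invertibility of $T^{AB}$) is visible from the explicit matrices already computed. With those in hand the corollary is immediate, so I do not expect a substantive obstacle --- the content is the one-line evaluation above.
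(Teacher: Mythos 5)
Your proposal is correct and follows essentially the same route as the paper: both simply evaluate the Misner--Sharp mass of the preceding theorem directly in the static $(t,r)$ chart, where $\partial_t r = 0$ and $(g^{ab}-\beta^{ab})D_a r\,D_b r = g^{rr} = 1 + 4k^2r^2 - 2M/r$, so the cosmological terms cancel and $m(S_r^2) = M$. Your additional check of the hypotheses is sound and consistent with the paper's own remark that $\theta_l\theta_n < -8k^2 \iff r > 2M$ for Schwarzschild--AdS.
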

\begin{proof}
    The Misner-Sharp mass for Schwarzschild-AdS is most easily calculated in the standard $(t, r, \theta, \phi)$ coordinates instead of double null coordinates. Hence,
    \begin{align}
        m(S_r^2) &= \frac{r}{2}\left(1 + 4k^2 + \frac{1}{1 + 4k^2 - 2M/r}\partial_t(r)^2 - (1 + 4k^2 - 2M/r)\partial_r(r)^2\right) = M
    \end{align}
    as expected.
\end{proof}
In theorem \ref{thm:qNonNegative} it was assumed that $\theta_l\theta_n < -8k^2$. However, that assumption never came up in the preceding spherical symmetry discussion. It can be shown that for the Schwarzschild-AdS spacetime,
$\theta_l\theta_n = 2(1 + 4k^2r^2 - 2M/r)$, implying $\theta_l\theta_n < -8k^2 \iff r > 2M$. This result is somewhat mysterious because $r = 2M$ is no longer a special radius when a cosmological constant is added to the Schwarzschild metric.

\subsection{Toroidal symmetry}
\label{sec:torus}
In this section, the round spheres in section \ref{sec:spherical} are replaced with tori. Because the torus is flat, it serves as the simplest - and most practically tractable - example where the constructions of section \ref{sec:definition} can be studied on a non-spherical surface, $S$. From earlier, $\mathrm{index}(\overbar{m}^a\nabla_a) = 4(1 - g)$ for a genus, $g$, surface, so there might not be any solutions to $\overbar{m}^a\nabla_a\Phi = 0$ in general. Furthermore, it's possible the number of solutions varies with sufficient deformations of the torus in question. The aim of this section is to illustrate these possibilities via counterexamples to the assumptions underpinning definition \ref{def:quasilocalMass}. First consider the $c = 0$ case in equation \ref{eq:schwarzschildGeneral}. To summarise, the domain of outer communication of the toroidal Schwarzschild-AdS spacetime is $\mathbb{R}\times((M/2k^2)^{1/3}, \infty)\times \mathbb{T}^2$ with the metric,
    \begin{align}
        g &= -f(r)^2\mathrm{d}t\otimes\mathrm{d}t + \frac{\mathrm{d}r\otimes\mathrm{d}r}{f(r)^2} + r^2(\mathrm{d}\theta\otimes\mathrm{d}\theta + \mathrm{d}\phi\otimes\mathrm{d}\phi), 
        \label{eq:kottlerMetric} \\
        \mathrm{where} \,\, f(r) &= \sqrt{-\frac{2M}{r} + 4k^2r^2}
    \end{align}
and $(\theta, \phi)$ are coordinates on each $\mathbb{T}^2 = S^1\times S^1$.

Then, one can follow the same steps as section \ref{sec:spherical}. In particular, with 
\begin{align}
        l &= \frac{1}{\sqrt{2}}\left(\frac{1}{f}\frac{\partial}{\partial t} + f\frac{\partial}{\partial r}\right),\,\, n = \frac{1}{\sqrt{2}}\left(\frac{1}{f}\frac{\partial}{\partial t} - f\frac{\partial}{\partial r}\right) \,\,\mathrm{and}\,\, m = \frac{1}{r\sqrt{2}}\left(\frac{\partial}{\partial\theta} + \mathrm{i}\frac{\partial}{\partial\phi}\right)
    \end{align}
as a NP tetrad for $\mathbb{T}_r^2$, one finds 
\begin{align}
        \sigma = \lambda = \alpha = \beta = 0 \,\,\,\mathrm{and}\,\,\, \rho = \mu = -\frac{f}{r\sqrt{2}}.
    \end{align}
Consequently $\overbar{m}^a\nabla_a\Phi = 0$ is equivalent to
    \begin{align}
        0 &= 2\partial_{\bar{z}}\varphi_o - f\varphi_\iota - 2\mathrm{i}kr\overline{\xi}_o, \\
        0 &= 2\partial_z\xi_\iota + f\xi_o + 2\mathrm{i}kr\overline{\varphi}_\iota, \\
        0 &= \partial_{\bar{z}}\varphi_\iota \,\,\,\mathrm{and} \\
        0 &= \partial_z\xi_o,
    \end{align}
where $z = \theta - \mathrm{i}\phi$. Note that $\partial_z$ is globally well defined even if $\theta$ and $\phi$ are not. By applying Liouville's theorem repeatedly and imposing periodicity in the $\theta$ and $\phi$ directions, it can be shown the general solution to these four equations is
\begin{align}
    \varphi_o = c_1, \,\,\, \xi_o = 0,\,\,\,\varphi_\iota = 0 \,\,\,\mathrm{and}\,\,\,\xi_\iota = \bar{c}_2
    \label{eq:kottlerPhi}
\end{align}
for constants, $c_1$ and $c_2$. Note that the $\Phi^A$ form of generic immediately fails because there are only two linearly independent solutions, not four. Furthermore, $\theta_l\theta_n < -8k^2$ fails to hold at any radius. By taking $c_1$ and $c_2$ to parameterise the two linearly independent solutions, direct calculation shows
\begin{align}
    Q^{AB} &= 8\sqrt{2}\pi^2r\begin{bmatrix}
            -f & -2\mathrm{i}kr \\
            2\mathrm{i}kr & -f
        \end{bmatrix}\,\,\,\mathrm{and}\,\,\,T^{AB} = 0.
\end{align}
Hence, the $T^{AB}$ form of generic also fails and $m(\mathbb{T}_r^2)$ cannot be formed via definition \ref{def:quasilocalMass}. Furthermore, even $Q^{AB}$ is not non-negative definite - an effect of the $\theta_{l, n}$ conditions failing.

It's unclear exactly what conclusions can be drawn more generally from this example. The failure of the $\Phi^A$ form of generic is unsurprising given the earlier discussion of $\mathrm{index}(\overbar{m}^a\nabla_a\Phi)$. However, the failure of the $T^{AB}$ form of generic could potentially be explained by the high degree of symmetry and flatness of the chosen surface. While deforming $\mathbb{T}_r^2$ slightly is unlikely to change the number of solutions to $\overbar{m}^a\nabla_a\Phi = 0$, it may yet result in an invertible $T^{AB}$ and a well-defined $m(S)$. It's also unclear whether there is any relation between the $\theta_{l, n}$ conditions and the generic conditions.

As the next example shows though, optimism needs to be tempered because the situation can in fact be even worse. Consider an ``AdS solition," constructed from equation \ref{eq:kottlerMetric} via the procedure in \cite{Horowitz1998}. In particular, define new coordinates, $\tau = \mathrm{i}\theta$ and $\omega = \mathrm{i}t$. Analytically continue the coordinates so that $\tau$ \& $\omega$ are real and the metric is
\begin{align}
    g &= -r^2\mathrm{d}\tau\otimes\mathrm{d}\tau + \frac{\mathrm{d}r\otimes\mathrm{d}r}{f(r)^2} + f(r)^2\mathrm{d}\omega\otimes\mathrm{d}\omega + r^2\mathrm{d}\phi\otimes\mathrm{d}\phi.
\end{align}
Unwrap the $\tau$ coordinate so $\tau \in \mathbb{R}$ and compactify the $\omega$ coordinate so that $(\omega, \phi)$ are coordinates on a torus. Avoiding a conical singularity as $r \to r_0 = (M/2k^2)^{1/3}$ forces the periodicity,
\begin{align}
    \omega \sim \omega + \frac{\pi}{3k^2r_0}.
\end{align}
Once again consider constant-$r$ tori, $\mathbb{T}_r^2$. With the NP tetrad,
\begin{align}
    l &= \frac{1}{\sqrt{2}}\left(\frac{1}{r}\frac{\partial}{\partial\tau} + f\frac{\partial}{\partial r}\right),\,\, n = \frac{1}{\sqrt{2}}\left(\frac{1}{r}\frac{\partial}{\partial\tau} - f\frac{\partial}{\partial r}\right) \,\,\mathrm{and}\,\, m = \frac{1}{\sqrt{2}}\left(\frac{1}{f}\frac{\partial}{\partial\omega} + \frac{\mathrm{i}}{r}\frac{\partial}{\partial\phi}\right),
\end{align}
one finds
\begin{align}
    \alpha = \beta = 0,\,\, \sigma = \lambda = -\frac{3M}{2\sqrt{2}r^2f} \,\,\mathrm{and}\,\, \rho = \mu = -\frac{8k^2r^3 - M}{2\sqrt{2}r^2f} = -\frac{f^2 + 12k^2r^2}{4\sqrt{2}rf}.
\end{align}
Package the GHP components of $\Phi$ into a vector, $v = (\varphi_o, \varphi_\iota, \bar{\xi}_o, \bar{\xi}_\iota)^T$. Then, with these NP coefficients, the equations of lemma \ref{thm:phiGHPEquations} become $\overbar{m}^\mu\partial_\mu v = Av$, where
\begin{align}
    A = \begin{bmatrix}
        0 & -\mu & \mathrm{i}k\sqrt{2} & 0 \\
        \bar{\sigma} & 0 & 0 & 0 \\
        0 & 0 & 0 & -\lambda \\
        0 & \mathrm{i}k\sqrt{2} & \rho & 0
    \end{bmatrix} = \frac{1}{2\sqrt{2}r^2f}\begin{bmatrix}
        0 & 8k^2r^3 - M & 4\mathrm{i}kr^2f & 0 \\ 
        -3M & 0 & 0 & 0 \\
        0 & 0 & 0 & 3M \\
        0 & 4\mathrm{i}kr^2f & -(8k^2r^3 - M) & 0
    \end{bmatrix}
\end{align}
is effectively a constant matrix on $\mathbb{T}_r^2$.

$\{m ,\overbar{m}\}$ induces a complex structure on $\mathbb{T}_r^2$. Choose a corresponding complex coordinate on the torus, $z = \frac{1}{\sqrt{2}}(f\omega - \mathrm{i}r\phi)$, so that $m^\mu\partial_\mu = \partial_z$ and $\overbar{m}^\mu\partial_\mu = \partial_{\bar{z}}$ on $\mathbb{T}_r^2$. Then, the equation to solve is $\partial_{\bar{z}}v = Av$. Integrating immediately yields 
\begin{align}
    v = \mathrm{e}^{\bar{z}A}c(z)
\end{align}
for some holomorphic vector, $c(z)$. However, by Liouville's theorem, $c(z)$ must be a constant vector, $c$. But, then $v$ would be a globally defined, non-constant, antiholomorphic vector on the compact space, $\mathbb{T}_r^2$, contradicting Liouville's theorem. The only way around this is to have $c \in \mathrm{nullspace}(A)$, so that the $\bar{z}$ dependence falls out\footnote{In fact, the toroidal Schwarzschild-AdS example earlier can be analysed in exactly this way. Since $\sigma = \lambda = 0$ in that example, the analogue of $A$ has two rows of zeroes, which then yield a 2D nullspace and the two constant solutions in equation \ref{eq:kottlerPhi}.} in $v = \mathrm{e}^{\bar{z}A}c$. However, it turns out
\begin{align}
    \mathrm{det}(A) &= (\mu^2 - 2k^2)\lambda^2.
\end{align}
Hence, by $\mu = \rho$ and lemma \ref{thm:muRho}, $A$ is invertible whenever the $\theta_l\theta_n < -8k^2$ assumption holds. In this example, this inequality always holds since
\begin{align}
    \mu^2 > 2k^2 &\iff (f^2 + 12k^2r^2)^2 > 64k^2r^2f^2 \\
    &\iff f^4 + 144k^4r^4 > 40k^2r^2f^2 \\
    &\iff 64k^2rM + \frac{4M^2}{r^2} > 0.
\end{align}
Therefore, the only solution is $v = 0$ and $\overbar{m}^a\nabla_a\Phi = 0$ has no non-trivial solutions.

The AdS soliton famously has negative energy \cite{Horowitz1998} and avoids spinorial positive energy theorems - like the one in \cite{Chrusciel2006} - because the torus has two inequivalent spin structures and the soliton's spin structure is incompatible with the one required to apply Witten's method. It's unclear if any of these properties is linked to $\overbar{m}^a\nabla_a\Phi = 0$ having only the trivial solution.

\section{Asymptotic limit}
\label{sec:asymptotics}
The next property of $m(S)$ studied is the large sphere limit. In this section, it will be convenient to set the ``AdS length scale," to 1, i.e. choose units such that $\Lambda = -3$ and $k = 1/2$. The length scales can always be restored on dimensional grounds. Note this section also features some additional indices and some slight index duplication as explained in appendix \ref{sec:conventions}.
\begin{definition}[Asymptotically AdS]
    \label{def:asymptoticallyAdS}
    A spacetime, $(M, g)$, is said to be asymptotically AdS if and only if only if the following conditions hold. $(M, g)$ must admit a conformal compactification such that conformal infinity, $\mathcal{I}$, is diffeomorphic to $\mathbb{R}\times S^2$. Then, in an open neighbourhood of $\mathcal{I}$, there must exist coordinates, $(r, x^m) = (r, t, \theta^\alpha)$, such that $\{r = \infty\}$ is $\mathcal{I}$ itself, constant $r$ \& $t$ surfaces are diffeomorphic to $S^2$ and $g$ admits a Fefferman-Graham expansion \cite{Fefferman1985},
    \begin{align}
        g &= \mathrm{e}^{2r}\bigg(-\bigg(1 + \frac{1}{4}\mathrm{e}^{-2r}\bigg)^2\mathrm{d}t\otimes\mathrm{d}t + \left(1 - \frac{1}{4}\mathrm{e}^{-2r}\right)^2g_{S^{2}} \nonumber \\
        &\,\,\,\,\,\,\,\,\,\,\,\,\,\,\,\,\,\,\,\, + \mathrm{e}^{-3r}f_{(3)mn}\,\mathrm{d}x^m\otimes\mathrm{d}x^n + O(\mathrm{e}^{-4r})\bigg) + \mathrm{d}r\otimes\mathrm{d}r,
        \label{eq:feffermanGraham}
    \end{align}
    for some symmetric tensor, $f_{(3)mn}$, that only depends on $x^m$.
\end{definition}
\begin{lemma}
    \label{thm:adsBasics}
    The metric on AdS can also be written as
    \begin{align}
        g_{\mathrm{AdS}} &= -\left(\frac{1+\rho^2}{1-\rho^2}\right)^2\mathrm{d}t\otimes\mathrm{d}t + \frac{4}{(1 - \rho^2)^2}\delta_{IJ}\mathrm{d}x^I\otimes\mathrm{d}x^J,
    \end{align}
    where $\rho = \sqrt{x^Ix_I}$. Then, with the tetrad,
    \begin{align}
    e_0 = \frac{1-\rho^2}{1+\rho^2}\partial_t \,\,\,\mathrm{and}\,\,\,e_I = \frac{1-\rho^2}{2}\partial_I,
    \label{eq:ePrime}
\end{align}
the Killing spinors\footnote{The Killing spinors are defined to be solutions of $\nabla_a\varepsilon_k = 0$.} of AdS can be written as
\begin{align}
    \varepsilon_k &= \frac{1}{\sqrt{1-\rho^2}}\left(I - \mathrm{i}x_I\gamma^I\right)\mathrm{e}^{\mathrm{i}\gamma^0t/2}\varepsilon_0,
    \label{eq:epsilonK}
    \end{align}
with $\varepsilon_0$ an arbitrary spinor that's constant with respect to the chosen tetrad.
\label{thm:epsilonK}
\end{lemma}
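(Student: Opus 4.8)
The plan is to prove the two assertions in turn: the alternative coordinate form of the metric, then the explicit Killing spinors together with the dimension count.

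For the metric, I would observe that pure AdS is the Fefferman--Graham metric of equation \ref{eq:feffermanGraham} with $f_{(3)mn} = 0$ (the expansion terminating for AdS), which is isometric to the global form $-\cosh^2\chi\,\mathrm{d}t\otimes\mathrm{d}t + \mathrm{d}\chi\otimes\mathrm{d}\chi + \sinh^2\chi\,g_{S^2}$ under $\mathrm{e}^\chi = 2\mathrm{e}^r$. Introducing $\rho = \tanh(\chi/2) \in [0,1)$ gives $\cosh\chi = \tfrac{1+\rho^2}{1-\rho^2}$, $\sinh\chi = \tfrac{2\rho}{1-\rho^2}$ and $\mathrm{d}\chi = \tfrac{2}{1-\rho^2}\mathrm{d}\rho$, so the spatial part collapses to $\tfrac{4}{(1-\rho^2)^2}\big(\mathrm{d}\rho\otimes\mathrm{d}\rho + \rho^2 g_{S^2}\big)$; writing this in Cartesian coordinates $x^I$ with $\rho = \sqrt{x^I x_I}$ yields $\tfrac{4}{(1-\rho^2)^2}\delta_{IJ}\,\mathrm{d}x^I\otimes\mathrm{d}x^J$, which is the claimed metric, and the tetrad of equation \ref{eq:ePrime} is then immediately seen to be orthonormal.

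For the Killing spinors, I would first compute the Levi-Civita spin connection of the coframe $e^0 = \tfrac{1+\rho^2}{1-\rho^2}\,\mathrm{d}t$, $e^I = \tfrac{2}{1-\rho^2}\,\mathrm{d}x^I$ dual to equation \ref{eq:ePrime}. Using $\rho\,\mathrm{d}\rho = x_I\,\mathrm{d}x^I$ one gets the clean structure equations $\mathrm{d}e^I = x_J\,e^J\wedge e^I$ and $\mathrm{d}e^0 = \tfrac{2x_J}{1+\rho^2}\,e^J\wedge e^0$, whence Cartan's first equation gives $\omega^I{}_J = x_J e^I - x_I e^J$ and $\omega^0{}_I = \omega^I{}_0 = \tfrac{2x_I}{1+\rho^2}\,e^0$. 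Substituting the induced spinor connection and the ansatz of equation \ref{eq:epsilonK} into the Killing spinor equation $\nabla_a\varepsilon_k = D_a\varepsilon_k + \tfrac{\mathrm i}{2}\gamma_a\varepsilon_k = 0$ (with $k = \tfrac12$ in the chosen units), I would verify the $a=0$ and $a=I$ components separately. The elementary inputs are $\partial_I\big((1-\rho^2)^{-1/2}\big) = x_I(1-\rho^2)^{-3/2}$, $\partial_I x_J = \delta_{IJ}$, the anticommutation $\gamma^0\gamma^I = -\gamma^I\gamma^0$ (used to commute $\mathrm{e}^{\mathrm i\gamma^0 t/2}$ past factors of $\gamma^I$, which flips the sign of the $x_I\gamma^I$ term), and $\gamma^I\gamma^J + \gamma^J\gamma^I = 2\delta^{IJ}I$, all in the conventions of appendix \ref{sec:conventions}; with these, the conformal-factor derivative, the $\rho$-dependence of $(I - \mathrm ix_I\gamma^I)$, the spin-connection pieces and the $\tfrac{\mathrm i}{2}\gamma_a$ term should cancel identically.

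Finally, since $\varepsilon_0$ ranges over all constant spinors -- four complex parameters -- the construction produces a four-dimensional family, and this is maximal: the integrability condition of $\nabla_a\varepsilon_k = 0$ is the vanishing of the curvature of $\nabla$, which holds for AdS precisely because AdS is a bosonic gauged-supergravity background, so a Killing spinor is determined by its value at a single point and the family above is \emph{all} of them. I expect the spin-connection bookkeeping and the simultaneous cancellation in the $a=I$ equation to be the main obstacle -- four different contributions must balance at once -- with keeping the signature and gamma-matrix conventions of appendix \ref{sec:conventions} consistent throughout being the delicate part; the metric assertion, by contrast, is a routine substitution.
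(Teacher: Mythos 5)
Your proposal follows essentially the same route as the paper: the same net coordinate change to the Poincar\'e ball (the paper goes via $r = \ln(R+\sqrt{1+R^2})-\ln 2$ with $R = 2\rho/(1-\rho^2)$, which is exactly your $\mathrm{e}^\chi = 2\mathrm{e}^r$, $\rho = \tanh(\chi/2)$), the same connection one-forms $\omega_{0I} = -\tfrac{2}{1+\rho^2}x_I e^0$ and $\omega_{IJ} = x_J e^I - x_I e^J$, a direct substitution into $\nabla_a\varepsilon_k = 0$, and a dimension count to conclude the family is complete (your integrability argument is in fact slightly more self-contained than the paper's appeal to the known four-dimensionality). One caveat: in the paper's mostly-plus conventions the Clifford algebra is $\gamma^a\gamma^b+\gamma^b\gamma^a = -2g^{ab}I$, so $\gamma^I\gamma^J+\gamma^J\gamma^I = -2\delta^{IJ}I$ and $(x_I\gamma^I)^2 = -\rho^2 I$; the sign you quote would spoil the cancellations if carried through literally, so fix that before doing the verification.
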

\begin{proof}
    AdS is written in the ``Poincar\'{e} ball" model in this lemma. To convert from the Fefferman-Graham coordinates, let $r = \ln(R + \sqrt{1 + R^2}) - \ln(2)$, where $R = \frac{2\rho}{1 - \rho^2}$ is the area-radius function. Then, with the tetrad chosen in equation \ref{eq:ePrime}, the structure equations imply the connection one-forms are
    \begin{align}
        \omega_{0I} = -\frac{2}{1+\rho^2}x_Ie^0 \,\,\,\mathrm{and}\,\,\,\omega_{IJ} = x_Je^I - x_Ie^J.
    \end{align}
    Finally, directly substituting equation \ref{eq:epsilonK} into
    \begin{align}
        \nabla_a\varepsilon_k &= e\indices{_a^\mu}\partial_\mu\varepsilon_k - \frac{1}{4}\omega_{bca}\gamma^{bc}\varepsilon_k + \frac{\mathrm{i}}{2}\gamma_a\varepsilon_k
    \end{align}
    and applying the Clifford algebra shows the claimed expression does indeed satisfy $\nabla_a\varepsilon_k = 0$. Since AdS is known to only have four linearly independent Killing spinors and equation \ref{eq:epsilonK} has four free parameters in $\varepsilon_0$, these must be the full set of solutions.
\end{proof}
\begin{definition}[``Conserved" quantities]
    \label{def:momenta}
    In any asymptotically AdS spacetime, define the energy, linear momentum, angular momentum and centre of mass position as
    \begin{align}
        E &= \frac{3}{16\pi}\int_{S_\infty^2}s^{\alpha\beta}f_{(3)\alpha\beta}\mathrm{d}(g_{S^{2}}), \\
        P_I &= \frac{3}{16\pi}\int_{S_\infty^2}s^{\alpha\beta}f_{(3)\alpha\beta}\hat{x}_I\mathrm{d}(g_{S^{2}}), \\
        J_{IJ} &= \frac{3}{16\pi}\int_{S_\infty^2}f_{(3)0\alpha}\bigg(\hat{x}_I\frac{\partial\theta^\alpha}{\partial x^J}\bigg|_{\rho = 1} - \hat{x}_J\frac{\partial\theta^\alpha}{\partial x^I}\bigg|_{\rho = 1}\bigg)\mathrm{d}(g_{S^{2}}) \,\,\,\mathrm{and} \\
        K_I &= \frac{3}{16\pi}\int_{S_\infty^2}f_{(3)0\alpha}\frac{\partial\theta^\alpha}{\partial x^J}\bigg|_{\rho = 1}\left(\delta\indices{^J_I} - \hat{x}^J\hat{x}_I\right)\mathrm{d}(g_{S^{2}})
    \end{align}
    respectively. In these expressions, $\theta^\alpha$ denote local coordinates on $S^{2}$, $s^{\alpha\beta}$ is the inverse of the round metric on $S^2$, $S_\infty^2$ is the sphere at infinity, $\hat{x}^I$ denote unit vector Cartesian coordinates and $\rho = \sqrt{x_Ix^I}$, i.e. $x^I = \rho\hat{x}^I$.
\end{definition}
These definitions are based off \cite{Chrusciel2006} - especially their equations 3.5 and 3.6 - but written in terms of Fefferman-Graham expansions/coordinates instead. Their exact form is motivated by terms that appear in positive energy theorems for asymptotically locally AdS spacetimes \cite{Rallabhandi2025}. However, some heuristics can be discussed now. It can be shown \cite{Wang2001} the Riemannian analogue of $(E, P_I)$ transforms as a Lorentz vector when one chooses a different conformal class representative to define $\mathcal{I}$. Hence, $P_I$ naturally behaves like linear momentum. Next, observe that the vector, $(\hat{x}_I\frac{\partial\theta^\alpha}{\partial x^J}|_{\rho = 1} - \hat{x}_J\frac{\partial\theta^\alpha}{\partial x^A}|_{\rho = 1})\partial_\alpha$ equals $\hat{x}_I\partial_J - \hat{x}_J\partial_I$, which is the generator of rotations. Hence, it's natural to expect the $J_{IJ}$ above to behave like angular momentum. Likewise, $\frac{\partial\theta^\alpha}{\partial x^J}|_{\rho = 1}\left(\delta\indices{^J_I} - \hat{x}^J\hat{x}_I\right)\partial_\alpha = \left(\delta\indices{^J_I} - \hat{x}^J\hat{x}_I\right)\partial_J$ can be seen as a generator of boosts when AdS is viewed as a hyperboloid in $\mathbb{R}^{3,2}$ \cite{Chrusciel2006}, suggesting the $K_I$ above should be interpreted as a centre of mass position.
\begin{theorem}
    If $S = S_\infty^2$, i.e. the sphere at infinity in an asymptotically AdS spacetime, then $Q(\Phi) = Q(\varepsilon_k)$, where $\varepsilon_k$ is a Killing spinor of AdS.
\end{theorem}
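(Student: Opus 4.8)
I read the conclusion as a large-sphere limit: writing $S=S_r^2$ for the constant-$r$, constant-$t$ spheres of the Fefferman--Graham coordinates, the claim is $\lim_{r\to\infty}Q(\Phi)=\lim_{r\to\infty}Q(\varepsilon_k)$, where $\varepsilon_k$ is the AdS Killing spinor of Lemma~\ref{thm:epsilonK} carrying the same asymptotic data as $\Phi$, and both $Q$'s are formed with the actual asymptotically AdS geometry. The plan is to (i) extract the asymptotics of the NP tetrad, spin coefficients and area element adapted to $S_r^2$ from \eqref{eq:feffermanGraham}; (ii) solve $\overbar{m}^a\nabla_a\Phi=0$ on $S_r^2$ asymptotically, showing its solutions are Killing spinors up to a fast-decaying correction; and (iii) feed this into the manifestly tangential formula of Lemma~\ref{thm:qGHP} and let $r\to\infty$. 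For (i): taking $Q^a\propto\partial_r$, $P^a\propto\partial_t$, $m^a$ tangent to the $S^2$ factor, and $l^a,n^a$ as in \eqref{eq:NPTetrad}, one reads off $\mathrm{d}A=\mathrm{e}^{2r}\big(1+O(\mathrm{e}^{-2r})\big)\,\mathrm{d}(g_{S^2})$, $\rho,\mu\to-\tfrac{1}{\sqrt2}$, and $\sigma,\lambda,\alpha,\beta$ together with the mismatch between $\nabla$ and the AdS modified connection all controlled by $\mathrm{e}^{-3r}f_{(3)}$ plus faster-decaying terms. Since $\nabla_a\varepsilon_k=0$ exactly in AdS, $\overbar{m}^a\nabla_a\varepsilon_k$ and $E^{ab}(\varepsilon_k)$ are of this same controlled size here, so $Q(\varepsilon_k)$ has a finite limit picking up only the $f_{(3)}$ contribution.

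For (ii), substitute the asymptotic spin coefficients into the GHP system of Lemma~\ref{thm:phiGHPEquations}. Its leading-order part coincides with the system solved by the restrictions to $S_r^2$ of the four AdS Killing spinors of Lemma~\ref{thm:epsilonK}; by the Fredholm index count $\mathrm{index}(\overbar{m}^a\nabla_a)=4(1-g)=4$ recorded before Lemma~\ref{thm:tProperties}, the genuine solution space on the topological sphere $S_r^2$ is also four (complex) dimensional for $r$ large. Hence a basis $\{\Phi^A\}$ can be chosen with $\Phi^A=\varepsilon_k^A+\delta\Phi^A$; solving the inhomogeneous linearised equation, whose source is $O(\mathrm{e}^{-3r}f_{(3)})$ acting on $\varepsilon_k$, one verifies that $\delta\Phi$ and its tangential derivatives $\eth\delta\Phi,\bar{\eth}\delta\Phi$ decay strictly faster than the corresponding pieces of $\varepsilon_k$, fast enough that $\mathrm{e}^{2r}|\delta\Phi|\,|\varepsilon_k|\to0$ and $\mathrm{e}^{2r}|\delta\Phi|^2\to0$ on $S^2$.

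For (iii), Lemma~\ref{thm:qGHP} writes $Q(\Phi)=4\int_{S_r^2}(\cdots)\,\mathrm{d}A$ as an integral over $S_r^2$ of an expression built only from the GHP components of $\Phi$, the spin coefficients $\rho,\mu$, the constant $k$, and the tangential operators $\eth,\bar{\eth}$, with no radial derivatives. Substituting $\Phi=\varepsilon_k+\delta\Phi$ and expanding by sesquilinearity, the terms free of $\delta\Phi$ reproduce $Q(\varepsilon_k)$ written via the same lemma, while every other term carries a factor of $\delta\Phi$ or $\eth\delta\Phi$ and is therefore $o(\mathrm{e}^{-2r})$; multiplied by $\mathrm{d}A\sim\mathrm{e}^{2r}$ it vanishes as $r\to\infty$. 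Since $Q$ is sesquilinear it suffices to treat $\Phi=c_A\Phi^A$ against $\varepsilon_k=c_A\varepsilon_k^A$, so $\lim_{r\to\infty}\big(Q(\Phi)-Q(\varepsilon_k)\big)=0$, which is the asserted equality at $S_\infty^2$.

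The main obstacle is step (ii): pinning down the decay of $\delta\Phi$. One must exclude a slowly decaying component in the correction that upgrades a Killing spinor to an exact solution of $\overbar{m}^a\nabla_a\Phi=0$ on $S_r^2$, since such a component could survive the $\mathrm{e}^{2r}$ growth of the area. This is handled by analysing the GHP system of Lemma~\ref{thm:phiGHPEquations} order by order in $\mathrm{e}^{-r}$, using that the homogeneous (AdS) operator is invertible on the complement of the Killing-spinor space and that its inhomogeneity is tied to $f_{(3)}\mathrm{e}^{-3r}$; working throughout with Lemma~\ref{thm:qGHP} rather than the raw $E^{ab}$ keeps all derivatives tangential and avoids the dangerous radial directions.
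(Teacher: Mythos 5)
There is a genuine gap, and it sits exactly where you flagged "the main obstacle": the decay you need for $\delta\Phi$ is not available. The Fefferman--Graham expansion forces the metric to differ from $g_{\mathrm{AdS}}$ at relative order $\mathrm{e}^{-3r}$, so the best one can say is that $\delta\Phi$ (the paper's $\mathcal{Z}$) is $O(\mathrm{e}^{-3r})$ \emph{relative} to $\varepsilon_k$; since $\varepsilon_k = O(\mathrm{e}^{r/2})$, this gives $\delta\Phi = O(\mathrm{e}^{-5r/2})$ and no better --- the $f_{(3)}$ term is precisely what carries the mass, so it cannot be argued to contribute at a faster-decaying order. Consequently the cross terms in your step (iii) satisfy
\begin{align}
\mathrm{e}^{2r}\,|\varepsilon_k|\,|\delta\Phi| \sim \mathrm{e}^{2r}\cdot\mathrm{e}^{r/2}\cdot\mathrm{e}^{-5r/2} = O(1),
\end{align}
not $o(1)$: each mixed term in the sesquilinear expansion of $Q(\varepsilon_k+\delta\Phi)$ (e.g.\ $(\varepsilon_k)_\iota\,\eth\overline{(\delta\Phi)}_o$ or $\rho\,\overline{(\varepsilon_k)}_o(\delta\Phi)_o$ against $\mathrm{d}A$) survives the large-$r$ limit pointwise. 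Your proposed fix --- an order-by-order analysis showing $\delta\Phi$ and its tangential derivatives decay strictly faster --- cannot succeed, because the linearised operator's source genuinely enters at relative order $\mathrm{e}^{-3r}$.

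What actually kills these $O(1)$ cross terms is not pointwise decay but an integral cancellation. The paper writes the offending term as
\begin{align}
\varepsilon_k^\dagger\gamma^1\gamma^A\nabla_A\mathcal{Z} = D_A\big(\varepsilon_k^\dagger\gamma^1\gamma^A\mathcal{Z}\big) - \nabla_A(\varepsilon_k)^\dagger\gamma^1\gamma^A\mathcal{Z},
\end{align}
uses $\nabla_A\varepsilon_k = O(\mathrm{e}^{-5r/2})$ in the perturbed geometry to discard the second piece, and then decomposes the total derivative into an intrinsic divergence on $S$ (which integrates to zero by Stokes' theorem) plus extrinsic-curvature terms: the $c_{AB}$ contributions cancel by symmetry and the $K_{AI}$ contributions vanish at the relevant order because AdS is time-symmetric. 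Your insistence on working entirely with the tangential GHP form of Lemma \ref{thm:qGHP} obscures this mechanism, since the integration by parts built into that lemma has already been spent and $\varepsilon_k$ does not exactly satisfy the GHP system of Lemma \ref{thm:phiGHPEquations} in the perturbed spacetime. Steps (i) and the framework of (ii) are otherwise in the spirit of the paper's argument, but without the Stokes'-theorem cancellation the proof does not close.
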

\begin{proof}
    First note the $\Phi^A$ form of generic holds in this instance because $S_r^2$ becomes increasingly round as $r \to \infty$ and spherically symmetric spacetimes are known to satisfy this notion of generic from section \ref{sec:schwarzschild}. AdS itself has four globally defined, linearly independent solutions to $\nabla_a\Phi = 0$, namely the the 4D space of Killing spinors, $\varepsilon_k$. Therefore in AdS, the space of solutions to $\overbar{m}^a\nabla_a\Phi = 0$ on any generic surface can be spanned by simply restricting the Killing spinors to the surface.

    By definition \ref{def:asymptoticallyAdS}, the difference between $g$ and $g_{\mathrm{AdS}}$ is $O(\mathrm{e}^{-3r})$. Hence, in the asymptotic region of $(M, g)$, $\Phi = \varepsilon_k + \mathcal{Z}$ for some $\mathcal{Z}$ that is $O(\mathrm{e}^{-3r})$ below leading order. Equating $\frac{4}{(1-\rho^2)^2}\mathrm{d}\rho\otimes\mathrm{d}\rho$ with $\mathrm{d}r\otimes\mathrm{d}r$ in lemma \ref{thm:epsilonK} shows $\varepsilon_k$ is $O(\mathrm{e}^{r/2})$ and thus $\mathcal{Z}$ must be $O(\mathrm{e}^{-5r/2})$.

    In vierbein indices $P_a = -\delta_{a0}$ and $Q_a = \delta_{a1} \equiv \mathrm{d}r$. Therefore,
    \begin{align}
        Q(\Phi) &= \int_{S_\infty^2}E^{01}(\Phi)\mathrm{d}A \\
        &= Q(\varepsilon_k) + \int_{S_\infty^2}\big(\mathcal{Z}^\dagger\gamma^1\gamma^A\nabla_A\varepsilon_k + \varepsilon_k^\dagger\gamma^1\gamma^A\nabla_A\mathcal{Z} + \mathcal{Z}^\dagger\gamma^1\gamma^A\nabla_A\mathcal{Z} + \nabla_A(\varepsilon_k)^\dagger\gamma^A\gamma^1\mathcal{Z} \nonumber \\
        &\,\,\,\,\,\,\, + \nabla_A(\mathcal{Z})^\dagger\gamma^A\gamma^1\varepsilon_k + \nabla_A(\mathcal{Z})^\dagger\gamma^A\gamma^1\mathcal{Z}\big)\mathrm{d}A.
    \end{align}
    From equation \ref{eq:feffermanGraham}, $\mathrm{d}A$ is $O(\mathrm{e}^{2r})$. Consequently, $\mathcal{Z}^\dagger\gamma^1\gamma^A\nabla_A(\mathcal{Z})\mathrm{d}A$ and $\nabla_A(\mathcal{Z})^\dagger\gamma^A\gamma^1\mathcal{Z}\mathrm{d}A$ are both $O(\mathrm{e}^{-3r})$ and go to zero as $r \to \infty$. Meanwhile, it can be shown \cite{Rallabhandi2025} $\nabla_A\varepsilon_k$ is also $O(\mathrm{e}^{-5r/2})$ in the asymptotic region, implying $\mathcal{Z}^\dagger\gamma^1\gamma^A\nabla_A(\varepsilon_k)\mathrm{d}A$, $\nabla_A(\varepsilon_k)^\dagger\gamma^A\gamma^1\mathcal{Z}\mathrm{d}A \to 0$ too. That leaves
    \begin{align}
        Q(\Phi) &= Q(\varepsilon_k) + \int_{S_\infty^2}\big(\varepsilon_k^\dagger\gamma^1\gamma^A\nabla_A\mathcal{Z} + \nabla_A(\mathcal{Z})^\dagger\gamma^A\gamma^1\varepsilon_k\big)\mathrm{d}A.
        \label{eq:randoEquation2}
    \end{align}
    The 2nd term is the complex conjugate of the first so it suffices to prove the 1st term integrates to zero. This term can be re-written as
    \begin{align}
        \varepsilon_k^\dagger\gamma^1\gamma^A\nabla_A\mathcal{Z} &= D_A(\varepsilon_k^\dagger\gamma^1\gamma^A\mathcal{Z}) - \nabla_A(\varepsilon_k)^\dagger\gamma^1\gamma^A\mathcal{Z}.
    \end{align}
    From above, $\nabla_A(\varepsilon_k)^\dagger\gamma^1\gamma^A\mathcal{Z}$ contributes nothing to the integral, leaving
    \begin{align}
        \int_{S_\infty^2}\varepsilon_k^\dagger\gamma^1\gamma^A\nabla_A\mathcal({Z})\mathrm{d}A &= \int_{S_\infty^2}D_A(\varepsilon_k^\dagger\gamma^1\gamma^A\mathcal{Z})\mathrm{d}A.
    \end{align}
    Let $D^{(S)}_A$ be the intrinsic Levi-Civita connection of $S$, let $K_{IJ}$ be the extrinsic curvature of $\Sigma$ in $M$ and let $c_{AB}$ be the extrinsic curvature of $S$ in $\Sigma$. Then,
    \begin{align}
        D_A(\varepsilon_k^\dagger\gamma^1\gamma^A\mathcal{Z}) &= \left(D_A^{(S)}\varepsilon_k - \frac{1}{2}K_{AI}\gamma^I\gamma^0\varepsilon_k - \frac{1}{2}c_{AB}\gamma^B\gamma^1\varepsilon_k\right)^\dagger\gamma^1\gamma^A\mathcal{Z} \nonumber \\
        &\,\,\,\,\,\,\, + \varepsilon_k^\dagger\gamma^1\gamma^A\left(D_A^{(S)}\mathcal{Z} - \frac{1}{2}K_{AI}\gamma^I\gamma^0\mathcal{Z} - \frac{1}{2}c_{AB}\gamma^B\gamma^1\mathcal{Z}\right) \\
        &= D_A^{(S)}(\varepsilon_k^\dagger\gamma^1\gamma^A\mathcal{Z}) + \frac{1}{2}K_{AI}(\varepsilon_k^\dagger\gamma^0\gamma^I\gamma^1\gamma^A\mathcal{Z} - \varepsilon_k^\dagger\gamma^1\gamma^A\gamma^I\gamma^0\mathcal{Z}),
    \end{align}
    where the extrinsic curvature's symmetry is used to cancel the $c_{AB}$ terms. The measure, $\mathrm{d}A$, is $O(\mathrm{e}^{2r})$ while the $\varepsilon_k$-$\mathcal{Z}$ products are already $O(\mathrm{e}^{-2r})$. Since AdS is time-symmetric, $K_{IJ} = 0$ to leading order, meaning the second term contributes nothing to the integral. That leaves
    \begin{align}
        \int_{S_\infty^2}\varepsilon_k^\dagger\gamma^1\gamma^A\nabla_A\mathcal({Z})\mathrm{d}A &= \int_{S_\infty^2}D_A^{(S)}(\varepsilon_k^\dagger\gamma^1\gamma^A\mathcal{Z})\mathrm{d}A = 0
    \end{align}
    by Stokes' theorem. Equation \ref{eq:randoEquation2} then implies $Q(\Phi) = Q(\varepsilon_k)$.
\end{proof}
\begin{corollary}
    When $S = S_\infty^2$, $m(S) = \sqrt{E^2 - ||P||^2 + ||J||^2 - ||K||^2}$.
    \label{thm:massLimit}
\end{corollary}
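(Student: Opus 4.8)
The plan is to evaluate $Q^{AB}$ and $T^{AB}$ explicitly at $S = S_\infty^2$ and then simplify $\tr(QT^{-1}\overbar{Q}\overbar{T}^{-1})$. By the preceding theorem (in its polarised form), $Q^{AB}$ equals the same expression with the basis $\{\Phi^A\}$ replaced by the AdS Killing spinors $\{\varepsilon_k^A\}_{A=1}^4$ obtained from equation \ref{eq:epsilonK} by letting $\varepsilon_0$ run over a fixed basis $\{\varepsilon_0^A\}$ of constant spinors; this is a legitimate basis since $S_\infty^2$ is generic in the $\Phi^A$ sense. Crucially the functional $Q$ is still built from the \emph{physical} connection $\nabla$, so although $Q(\varepsilon_k)$ vanishes in exact AdS, here the integrand is driven entirely by the deviation of $g$ from $g_{\mathrm{AdS}}$, i.e.\ by the $\mathrm{e}^{-3r}f_{(3)mn}$ term in the Fefferman--Graham expansion \ref{eq:feffermanGraham}. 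I would expand the Witten--Nester surface integrand $l_an_bE^{ab}(\varepsilon_k^A,\varepsilon_k^B)\,\mathrm{d}A$ to the order that survives $r\to\infty$ --- using $\mathrm{d}A = O(\mathrm{e}^{2r})$, $\varepsilon_k = O(\mathrm{e}^{r/2})$ and, from \cite{Rallabhandi2025}, $\nabla_c\varepsilon_k = O(\mathrm{e}^{-5r/2})$ for the physical connection --- so that the only non-vanishing contribution comes from the difference between the physical and AdS connection coefficients acting on $\varepsilon_k$, which is linear in $f_{(3)mn}$ and its tangential derivatives (plus leading extrinsic-curvature and total-derivative terms that drop out by Stokes, exactly as in the proof of the preceding theorem). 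Performing the angular integrals and matching against Definition \ref{def:momenta} then yields
\begin{align}
    Q^{AB} &= \kappa\,(\varepsilon_0^A)^\dagger\,\mathcal{H}\,\varepsilon_0^B,
\end{align}
where $\kappa$ is an $A,B$-independent (and, as $r\to\infty$, divergent) normalisation and $\mathcal{H}$ is a fixed Hermitian matrix built from products of the $\gamma^a$ --- concretely the ten matrices $\{\tfrac{1}{2}\gamma^{ab},\tfrac{1}{2}\gamma^a\}$ realising the generators of $\mathfrak{o}(3,2)$ on the Dirac spinor space --- with coefficients precisely $E$, $P_I$, $J_{IJ}$ and $K_I$.

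Next I would compute $T^{AB} = (\Phi^A)^TC^{-1}\Phi^B$ directly from equation \ref{eq:epsilonK}. Using the charge-conjugation identities for the $\gamma^a$ from appendix \ref{sec:conventions}, the factors $(I - \mathrm{i}x_I\gamma^I)$ and $\mathrm{e}^{\mathrm{i}\gamma^0t/2}$ conjugate through $C^{-1}$ so that all $x_I$- and $t$-dependence cancels --- which it must, since $T^{AB}$ is constant on $S$ by lemma \ref{thm:tProperties} --- leaving $T^{AB} = \lambda\,(\varepsilon_0^A)^TC^{-1}\varepsilon_0^B$ for an $A,B$-independent scalar $\lambda$ that again diverges as $S \to S_\infty^2$.

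Finally I would assemble the trace. Working in the basis $\{\varepsilon_0^A\}$, as $4\times4$ matrices $Q = \kappa\mathcal{H}$ and $T = \lambda C^{-1}$, hence
\begin{align}
    \tr(QT^{-1}\overbar{Q}\overbar{T}^{-1}) &= \frac{|\kappa|^2}{|\lambda|^2}\,\tr(\mathcal{H}C\overbar{\mathcal{H}}\overbar{C}).
\end{align}
Since $\mathcal{H}$ is Hermitian, $\overbar{\mathcal{H}} = \mathcal{H}^T$; combining $C\mathcal{H}^TC^{-1} = -\mathcal{H}$ (from $C^{-1}\gamma^aC = \pm(\gamma^a)^T$) with $C\overbar{C} = \pm I$ reduces the right side to a multiple of $\tr(\mathcal{H}^2)$, and the trace identities for antisymmetrised products of $\gamma$-matrices collapse $\tr(\mathcal{H}^2)$ to a constant times the quadratic Casimir of $\mathfrak{o}(3,2)$ with signature $(3,2)$, namely $E^2 - ||P||^2 + ||J||^2 - ||K||^2$. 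All prefactors --- including the divergent ratio $|\kappa|^2/|\lambda|^2$ --- cancel, leaving $\tr(QT^{-1}\overbar{Q}\overbar{T}^{-1}) = -(16\pi)^2\big(E^2 - ||P||^2 + ||J||^2 - ||K||^2\big)$ and hence $m(S) = \sqrt{E^2 - ||P||^2 + ||J||^2 - ||K||^2}$; non-negativity of the quantity under the root then follows automatically from theorem \ref{thm:qNonNegative} and lemma \ref{thm:funny}, recovering the asymptotically AdS positive energy theorem in the large-sphere limit.

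The main obstacle is the asymptotic expansion of the Witten--Nester integrand: one must track the precise $O(\mathrm{e}^{-5r/2})$ form of $\nabla_c\varepsilon_k$, the leading extrinsic-curvature contributions and every tangential total derivative, and then correctly identify the four surviving angular integrals with $E$, $P_I$, $J_{IJ}$ and $K_I$ of Definition \ref{def:momenta} --- getting the relative normalisations right, since these fix the signs in the final formula. The trace computation of the last step is comparatively routine, but has to be handled carefully with the signature so that $P$ and $K$ acquire the "spacelike" sign while $E$ and $J$ get the "timelike" sign, and one must verify that the divergent normalisations $\kappa$ and $\lambda$ genuinely cancel --- this being the consistency check that makes $m(S_\infty^2)$ finite.
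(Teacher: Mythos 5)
Your overall strategy is the same as the paper's: span the solution space of $\overbar{m}^a\nabla_a\Phi = 0$ by the AdS Killing spinors $\varepsilon_k$ parametrised by $\varepsilon_0^A$, evaluate $Q^{AB}$ via the preceding theorem, show $T^{AB}$ reduces to $(C^{-1})^{AB}$ by conjugating the $(I - \mathrm{i}x_I\gamma^I)$ and $\mathrm{e}^{\mathrm{i}\gamma^0t/2}$ factors through $C^{-1}$, and then evaluate the trace, which is the quadratic Casimir of $\mathfrak{o}(3,2)$. The one substantive difference is that the paper simply cites \cite{Rallabhandi2025} for the explicit formula $Q(\varepsilon_k) = 8\pi\varepsilon_0^\dagger\mathrm{e}^{-\mathrm{i}\gamma^0t/2}(EI - \mathrm{i}P_I\gamma^I + K_I\gamma^0\gamma^I + \tfrac{\mathrm{i}}{2}J_{IJ}\gamma^0\gamma^{IJ})\mathrm{e}^{\mathrm{i}\gamma^0t/2}\varepsilon_0$, whereas you sketch how that asymptotic computation would be carried out; you correctly identify it as the hard part, and your sketch of it mirrors the manipulations in the theorem immediately preceding the corollary.

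There is, however, one concrete error in your bookkeeping: you assert that both normalisations $\kappa$ and $\lambda$ diverge as $r \to \infty$ and that only their ratio is finite. Neither diverges. For $Q$, the integrand $l_an_bE^{ab}(\varepsilon_k)\,\mathrm{d}A$ scales as $\mathrm{e}^{2r}\cdot\mathrm{e}^{r/2}\cdot\mathrm{e}^{-5r/2} = O(1)$, so $Q(\varepsilon_k)$ has a finite limit with the definite coefficient $8\pi$; for $T$, the prefactor $(1-\rho^2)^{-1}$ in $\varepsilon_k$ cancels exactly against $(I + \mathrm{i}x_I\gamma^I)(I - \mathrm{i}x_J\gamma^J) = (1-\rho^2)I$, giving $T^{AB} = (C^{-1})^{AB}$ identically (consistent with $T^{AB}$ being constant on $S$). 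This matters because the corollary asserts an exact equality, not a proportionality: the specific finite values $\kappa = 8\pi$ and $\lambda = 1$ are what make the $\tfrac{1}{16\pi}$ in definition \ref{def:quasilocalMass} produce $\sqrt{E^2 - ||P||^2 + ||J||^2 - ||K||^2}$ with unit coefficient. As written, your argument establishes the answer only up to an unverified cancellation of quantities that are not in fact divergent; fixing this requires tracking the finite normalisations through the asymptotic expansion, which is precisely the content of the cited computation.
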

\begin{proof}
    Based on similar methods to \cite{Chrusciel2006}, it can be shown \cite{Rallabhandi2025}
    \begin{align}
        Q(\varepsilon_k) &= 8\pi\varepsilon_0^\dagger\mathrm{e}^{-\mathrm{i}\gamma^0t/2}\bigg(EI - \mathrm{i}P_I\gamma^I + K_I\gamma^0\gamma^I + \frac{\mathrm{i}}{2}J_{IJ}\gamma^0\gamma^{IJ}\bigg)\mathrm{e}^{\mathrm{i}\gamma^0t/2}\varepsilon_0.
    \end{align} 
    The four components of the constant spinor, $\varepsilon_0$, can be used to parameterise the four linearly independent solutions, $\Phi^A$. Hence, as a matrix,
    \begin{align}
        Q^{AB} &\equiv 8\pi\mathrm{e}^{-\mathrm{i}\gamma^0t/2}\bigg(EI - \mathrm{i}P_I\gamma^I + K_I\gamma^0\gamma^I + \frac{\mathrm{i}}{2}J_{IJ}\gamma^0\gamma^{IJ}\bigg)\mathrm{e}^{\mathrm{i}\gamma^0t/2}.
    \end{align}
    The other matrix required for $m(S_\infty^2)$ is $T^{AB}$. In the present context, it will be easiest to use the alternative expression, $T^{AB} = (\Phi^A)^TC^{-1}\Phi^B$, of equation \ref{eq:tDirac}. In the conventions chosen, $(\gamma_a)^T = -C^{-1}\gamma_aC$. Therefore, 
    \begin{align}
        T^{AB} &= (\varepsilon_k^A)^TC^{-1}\varepsilon_k^B \\
        &= \frac{1}{1 - \rho^2}(\varepsilon_0^A)^T\left(\mathrm{e}^{\mathrm{i}\gamma^0t/2}\right)^T\left(I - \mathrm{i}x_I(\gamma^I)^T\right)C^{-1}\left(I - \mathrm{i}x_J\gamma^J\right)\mathrm{e}^{\mathrm{i}\gamma^0t/2}\varepsilon_0^B \\
        &= \frac{1}{1 - \rho^2}(\varepsilon_0^A)^TC^{-1}\mathrm{e}^{-\mathrm{i}\gamma^0t/2}C\left(I + \mathrm{i}x_IC^{-1}\gamma^IC\right)C^{-1}\left(I - \mathrm{i}x_J\gamma^J\right)\mathrm{e}^{\mathrm{i}\gamma^0t/2}\varepsilon_0^B \\
        &= (C^{-1})^{AB}.
    \end{align}
    Finally, with the $Q^{AB}$ and $T^{AB}$ derived, definition \ref{def:quasilocalMass} reduces to
    \begin{align}
        m(S_\infty^2) &= \frac{1}{16\pi}\sqrt{-\tr(QT^{-1}\overbar{Q}\overbar{T}^{-1})} = \sqrt{E^2 - P_IP^I + J_IJ^I - K_IK^I},
    \end{align}
    where $J_I = \frac{1}{2}\varepsilon_{IJK}J^{JK}$.
\end{proof}
The question naturally arises whether $\sqrt{E^2 - ||P||^2 + ||J||^2 - ||K||^2}$ is an appropriate notion of mass in asymptotically AdS spacetimes. For example, from special relativity, one thinks of mass as just $\sqrt{E^2 - ||P||^2}$, without any contributions from angular momenta, $J_{IJ}$, or boost charges, $K_I$. However, it can be argued this is an artefact of Minkowski space's symmetry group, namely the Poincar\'{e} group. As in quantum field theory, one could define $m^2$ to be proportional to a quadratic Casimir operator of (the Lie algebra of) the symmetry group \cite{Buchbinder1998}. Therefore, in the AdS context, one seeks a quadratic Casimir for $\mathfrak{o(3, 2)}$.

Choose generators, $\{J_{MN} = -J_{NM}\}_{M, N = 1}^5$, for $\mathfrak{o(3, 2)}$ such that the defining Lie bracket is\footnote{The fact such a basis exists can be seen immediately by following the analogous steps in \cite{Weinberg1995} for $\mathfrak{o(3, 1)}$.}
    \begin{align}
        \left[J^{MN}, J^{PQ}\right] &= \mathrm{i}\left(\eta^{MP}J^{NQ} - \eta^{MQ}J^{NP} - \eta^{NP}J^{MQ} + \eta^{NQ}J^{MP}\right),
    \end{align}
where $\eta_{MN} \equiv \mathrm{diag}(-1, 1, 1, 1, -1)$ and all $M, N, \cdots$ indices are raised/lowered by $\eta^{-1}$/$\eta$. Then, it immediately follows that $C = \frac{1}{2}J^{MN}J_{MN}$ is a quadratic Casimir\footnote{Assume we are working with a faithful matrix representation of the Lie algebra so that multiplying two Lie algebra elements is well-defined.} for $\mathfrak{o(3, 2)}$.

Interpret $J^{5a}$ as a 4-momentum generator, $\mathbb{P}^a$, $J^{0I}$ as boost generators, $\mathbb{K}^I$, and $J^{IJ}$ as angular momentum generators, $\mathbb{J}_I = \frac{1}{2}\varepsilon_{IJK}J^{JK}$, in line with \cite{Chrusciel2006} and the heuristics accompanying definition \ref{def:momenta}. Then,
\begin{align}
    C &= J^{5a}J_{5a} + \frac{1}{2}J^{IJ}J_{IJ} + J^{0I}J_{0I} = \mathbb{P}^0\mathbb{P}^0 - \mathbb{P}^I\mathbb{P}_I + \mathbb{J}^I\mathbb{J}_I - \mathbb{K}^I\mathbb{K}_I, 
\end{align}
suggesting that the limit in corollary \ref{thm:massLimit} is physically reasonable.

\section{Linearised gravity}
\label{sec:linearisation}
This section concerns perturbations of AdS sourced by a matter field. In particular, the metric is assumed to be 
\begin{align}
    g_{ab} &= B_{ab} + \eta h_{ab},
\end{align}
where $B = g_{\mathrm{AdS}}$ is the background metric, $h$ is the perturbation and $\eta$ is assumed to be an infinitesimal parameter. Furthermore, the energy-momentum tensor, $T_{ab}$, is assumed to be $O(\eta)$. The aim is to show that definition \ref{def:quasilocalMass} captures a mass associated with $T_{ab}$. Throughout this section, the coordinates and tetrad will be the same as in lemma \ref{thm:adsBasics}. It will once again be convenient to set the AdS length scale to one, i.e. choose units where $\Lambda = -3$ and $k = 1/2$.

A natural way to construct physical quantities, like mass, out of the energy-momentum tensor is to contract $T_{ab}$ with the Killing vectors of the background metric. It can be checked that the Killing vectors of AdS are spanned by
    \begin{align}
        \tau &= \partial_t, \\
        j_{IJ} &= x_I\partial_J - x_J\partial_I, \\
        p_I &= \frac{2x_I}{1 + \rho^2}\cos(t)\partial_t + \frac{1}{2}\left((1 + \rho^2)\delta\indices{^J_I} - 2x^Jx_I\right)\sin(t)\partial_J \,\,\,\,\mathrm{and} \\
        k_I &= -\frac{2x_I}{1 + \rho^2}\sin(t)\partial_t + \frac{1}{2}\left((1 + \rho^2)\delta\indices{^J_I} - 2x^Jx_I\right)\cos(t)\partial_J.
    \end{align}
In analogy with definition \ref{def:momenta}, one can then define the following ``matter charges."
\begin{definition}[Matter charges]
    \label{def:matterCharges}
    Let matter charges on $\Sigma$ be defined as
    \begin{align}
        E &= \int_\Sigma T_{0a}\tau^a\mathrm{d}V, \,\,\,
        P_I = \int_\Sigma T_{0a}p_I^a\mathrm{d}V, \,\,\, 
        J_{IJ} = \int_\Sigma T_{0a}j_{IJ}^a\mathrm{d}V \,\,\,\mathrm{and} \,\,\,
        K_I = \int_\Sigma T_{0a}k_I^a\mathrm{d}V.
    \end{align}
\end{definition}
\begin{theorem}
    For gravity linearised about AdS, if $S$ is generic in the $\Phi^A$ sense, then
    \begin{align}
        m(S) &= \sqrt{E^2 - ||P||^2 + ||J||^2 - ||K||^2}.
    \end{align}
\end{theorem}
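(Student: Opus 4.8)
The plan is to run the argument of corollary~\ref{thm:massLimit} (the large-sphere limit) in the bulk, with the Fefferman--Graham charges of definition~\ref{def:momenta} replaced by the matter charges of definition~\ref{def:matterCharges}. First I would fix the perturbative setup. By hypothesis $S$ is generic in the $\Phi^A$ sense, so on the exact AdS background the four-dimensional solution space of $\overbar{m}^a\nabla_a\Phi=0$ on $S$ contains the restrictions of the four AdS Killing spinors $\varepsilon_k$ of lemma~\ref{thm:epsilonK} (which satisfy the stronger equation $\nabla_a\varepsilon_k=0$); being linearly independent, these span it. Since $g$ and its modified connection differ from their AdS values at $O(\eta)$, one may choose the basis so that $\Phi^A=\varepsilon_k^A+O(\eta)$.

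Next I would reduce $Q(\Phi)$ to a bulk integral by Witten's method. As in the proof of theorem~\ref{thm:qNonNegative}, pick $\Psi$ solving $\gamma^I\nabla_I\Psi=0$ on $\Sigma$ with boundary data $\psi_o=\varphi_o$, $\chi_\iota=\xi_\iota$ on $S$ (possible by theorem~\ref{thm:gIsomorphism}); on the background one may take $\Psi=\varepsilon_k$ exactly, so $\Psi=\varepsilon_k+O(\eta)$, and the boundary correction terms appearing in that proof, being quadratic in $\varphi_\iota-\psi_\iota$ and $\xi_o-\chi_o$, are $O(\eta^2)$. Hence $Q(\Phi)=Q(\Psi)+O(\eta^2)$, and since $\gamma^I\nabla_I\Psi=0$, definition~\ref{def:q} gives $Q(\Psi)=2\int_\Sigma\big((\nabla_I\Psi)^\dagger\nabla^I\Psi-4\pi T^{0a}\overline{\Psi}\gamma_a\Psi\big)\,\mathrm{d}V$. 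The modified connection annihilates $\varepsilon_k$ on the background, so $\nabla_I\Psi=O(\eta)$ and the first term is $O(\eta^2)$; and as $T_{ab}=O(\eta)$ one may replace $\Psi$ by $\varepsilon_k$ in the matter term. Therefore $Q(\Phi)=-8\pi\int_\Sigma T^{0a}\,\overline{\varepsilon_k}\gamma_a\varepsilon_k\,\mathrm{d}V+O(\eta^2)$.

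The remaining step is to evaluate the Dirac current. For a Killing spinor $\overline{\varepsilon_k}\gamma^a\varepsilon_k$ is a Killing vector of AdS, so using the explicit form in lemma~\ref{thm:epsilonK} and the Clifford algebra it decomposes over the basis $\{\tau,\,j_{IJ},\,p_I,\,k_I\}$ of section~\ref{sec:linearisation} with coefficients that are quadratic forms in the constant spinor $\varepsilon_0$. Substituting into the previous display and applying definition~\ref{def:matterCharges} to each term, and using the components of $\varepsilon_0$ to parameterise the $\Phi^A$, should reproduce the matrix $Q^{AB}\equiv 8\pi\,\mathrm{e}^{-\mathrm{i}\gamma^0 t/2}\big(EI-\mathrm{i}P_I\gamma^I+K_I\gamma^0\gamma^I+\tfrac{\mathrm{i}}{2}J_{IJ}\gamma^0\gamma^{IJ}\big)\mathrm{e}^{\mathrm{i}\gamma^0 t/2}$ of corollary~\ref{thm:massLimit}, now with $(E,P_I,J_{IJ},K_I)$ the matter charges. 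Meanwhile $T^{AB}=(\Phi^A)^TC^{-1}\Phi^B=(C^{-1})^{AB}+O(\eta)$ by the computation around equation~\ref{eq:tDirac}; since $Q^{AB}=O(\eta)$ and $m(S)=O(\eta)$, the $O(\eta)$ error in $T$ affects $\tr(QT^{-1}\overbar{Q}\overbar{T}^{-1})$ only at $O(\eta^3)$ and is negligible. Feeding $Q^{AB}$ and $T^{AB}=(C^{-1})^{AB}$ into definition~\ref{def:quasilocalMass} then yields, exactly as in corollary~\ref{thm:massLimit}, $m(S)=\sqrt{E^2-||P||^2+||J||^2-||K||^2}$.

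The main obstacle is this last step: verifying that $\overline{\varepsilon_k}\gamma^a\varepsilon_k$ splits over $\tau,p_I,j_{IJ},k_I$ with exactly the coefficient pattern that routes definition~\ref{def:matterCharges}'s charges into the slots $E$, $-\mathrm{i}P_I\gamma^I$, $K_I\gamma^0\gamma^I$, $\tfrac{\mathrm{i}}{2}J_{IJ}\gamma^0\gamma^{IJ}$, i.e.\ that the present bulk computation matches the boundary computation of $Q(\varepsilon_k)$ underlying corollary~\ref{thm:massLimit}. An alternative that sidesteps the explicit bilinear is to invoke the linearised Einstein equations and a Gauss-law identity to equate the matter charges of definition~\ref{def:matterCharges} with the Fefferman--Graham charges of definition~\ref{def:momenta}, and then quote corollary~\ref{thm:massLimit} directly; in that route the obstacle becomes establishing the matter-charge/boundary-charge identity.
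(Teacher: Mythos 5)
Your proposal is correct and lands on the same key intermediate results as the paper --- $Q(\Phi)\to 8\pi\int_\Sigma T_{0a}\overline{\varepsilon}_k\gamma^a\varepsilon_k\,\mathrm{d}V$, the matrix $Q^{AB}$ built from the matter charges, and $T^{AB}\to(C^{-1})^{AB}$ --- but it reaches the bulk integral by a different mechanism. The paper does not invoke the Witten solution $\Psi$ at all: it simply extends $\mathcal{Z}$ off $S$ in an arbitrary (sufficiently regular) way so that $\Phi=\varepsilon_k+\eta\mathcal{Z}$ is defined on all of $\Sigma$, and then applies the volume-integral form of $Q(\Phi)$ from definition \ref{def:q} directly, observing that both $(\nabla_I\Phi)^\dagger\nabla^I\Phi$ and $(\gamma^I\nabla_I\Phi)^\dagger\gamma^J\nabla_J\Phi$ are $O(\eta^2)$ because $\nabla^{(B)}_a\varepsilon_k=0$. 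This is legitimate because the surface-integral form of $Q$ depends only on $\Phi|_S$ and its tangential derivatives, so the arbitrary extension is harmless. Your route through theorem \ref{thm:gIsomorphism} and the boundary-difference identity of theorem \ref{thm:qNonNegative} also works, but it is heavier: it needs the existence of the Dirac--Witten solution and, more importantly, the unproven (though plausible) claim that this solution depends continuously on $\eta$ so that $\varphi_\iota-\psi_\iota$ and $\xi_o-\chi_o$ are $O(\eta)$; the paper's arbitrary-extension trick sidesteps both points. Finally, where you flag the decomposition of $\overline{\varepsilon}_k\gamma^a\varepsilon_k$ over the Killing vectors as the main remaining obstacle, the paper just computes the bilinears $\overline{\varepsilon}_k\gamma^0\varepsilon_k$ and $\overline{\varepsilon}_k\gamma^I\varepsilon_k$ explicitly from lemma \ref{thm:epsilonK} and matches the resulting integrands term by term against definition \ref{def:matterCharges}; your alternative of equating the matter charges with the Fefferman--Graham charges via a Gauss law is not the route taken and would require the linearised constraint analysis you mention.
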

\noindent This expression is formally identical to corollary \ref{thm:massLimit} and therefore the result can once again be thought of as a Casimir mass, but this time for $T_{ab}$.
\begin{proof}
    AdS already has four linearly independent Killing spinors, i.e. global solutions to $\nabla_a^{(B)}\varepsilon_k = 0$. Hence, in AdS, if $S$ is generic in the $\Phi^A$ sense, then solutions to $\overbar{m}^a\nabla_a\Phi = 0$ can be spanned by simply restricting $\varepsilon_k$ to $S$. Since $g_{ab} = B_{ab} + \eta h_{ab}$ though, one can therefore let $\Phi = \varepsilon_k + \eta\mathcal{Z}$ for some Dirac spinor, $\mathcal{Z}$.

    Extend $\mathcal{Z}$'s definition off $S$ in an arbitrary, but sufficiently regular, way so that $\Phi = \varepsilon_k + \eta\mathcal{Z}$ is defined on all of $\Sigma$. Then, by definition \ref{def:q}, 
    \begin{align}
        Q(\Phi) &= 2\int_\Sigma\left(\nabla_I(\Phi)^\dagger\nabla^I\Phi - 4\pi T^{0a}\overline{\Phi}\gamma_a\Phi - (\gamma^I\nabla_I\Phi)^\dagger\gamma^J\nabla_J\Phi\right)\mathrm{d}V.
        \label{eq:randoEquation3}
    \end{align}
    $\nabla_a^{(B)}\varepsilon_k = 0$ implies $\nabla_a\Phi = O(\eta)$ and thus the first and third terms in equation \ref{eq:randoEquation3} are both $O(\eta^2)$. Meanwhile, since $T_{ab}$ is assumed to be $O(\eta)$, the second term is $-4\pi T^{0a}\bar{\varepsilon}_k\gamma_a\varepsilon_k + O(\eta^2)$.
    \newline    In summary, the linearised limit yields,
    \begin{align}
        Q(\Phi) &\to 8\pi\int_\Sigma T_{0a}\bar{\varepsilon}_k\gamma^a\varepsilon_k\mathrm{d}V.
        \label{eq:qPhiLinearised}
    \end{align}
    Direct calculation using equation \ref{eq:epsilonK} shows
        \begin{align}
        \bar{\varepsilon}_k\gamma^0\varepsilon_k
        &= \frac{1}{1-\rho^2}\varepsilon_0^\dagger\mathrm{e}^{-\mathrm{i}\gamma^0t/2}((1 + \rho^2)I - 2\mathrm{i}x_I\gamma^I)\mathrm{e}^{\mathrm{i}\gamma^0t/2}\varepsilon_0 \,\,\,\mathrm{and} \\
        \bar{\varepsilon}_k\gamma^I\varepsilon_k
        &= \frac{1}{1-\rho^2}\varepsilon_0^\dagger\mathrm{e}^{-\mathrm{i}\gamma^0t/2}((1 + \rho^2)\gamma^0\gamma^I - 2\mathrm{i}x_J\gamma^0\gamma^{IJ} - 2x^Ix_J\gamma^0\gamma^J)\mathrm{e}^{\mathrm{i}\gamma^0t/2}\varepsilon_0.
    \end{align}
    Substituting back into equation \ref{eq:qPhiLinearised}, applying definition \ref{def:matterCharges} and converting to vierbein indices where required then gives
    \begin{align}
        Q(\Phi) &\to 8\pi\varepsilon^\dagger_0\bigg(\int_\Sigma\frac{1 + \rho^2}{1 - \rho^2}T_{00}\mathrm{d}V\,I - 2\mathrm{i}\int_\Sigma\frac{x_I}{1 - \rho^2}T_{00}\mathrm{e}^{-\mathrm{i}\gamma^0t}\mathrm{d}V\,\gamma^I - 2\mathrm{i}\int_\Sigma\frac{x_J}{1 - \rho^2}T_{0I}\mathrm{d}V\,\gamma^0\gamma^{IJ} \nonumber \\
        &\,\,\,\,\,\, + \int_\Sigma\frac{1}{1 - \rho^2}T_{0I}\left((1 + \rho^2)\delta\indices{^I_J} - 2x^Ix_J\right)\mathrm{e}^{-\mathrm{i}\gamma^0t}\mathrm{d}V\,\gamma^0\gamma^{J}\bigg)\varepsilon_0 \\
        &= 8\pi\varepsilon^\dagger_0\bigg(\int_\Sigma\frac{1 + \rho^2}{1 - \rho^2}T_{00}\mathrm{d}V\,I + \mathrm{i}\int_\Sigma\frac{1}{1 - \rho^2}(x_IT_{0J} - x_JT_{0I})\mathrm{d}V\,\gamma^0\gamma^{IJ} \nonumber \\
        &\,\,\,\,\,\,\, - \mathrm{i}\int_\Sigma\left(\frac{2x_I\cos(t)}{1 - \rho^2}T_{00} + \frac{\sin(t)}{1 - \rho^2}T_{0I}\left((1 + \rho^2)\delta\indices{^I_J} - 2x^Ix_J\right)\right)\mathrm{d}V\,\gamma^I \nonumber \\
        &\,\,\,\,\,\,\, + \int_\Sigma\left(-\frac{2x_I\sin(t)}{1 - \rho^2}T_{00} + \frac{\cos(t)}{1 - \rho^2}T_{0I}\left((1 + \rho^2)\delta\indices{^I_J} - 2x^Ix_J\right)\right)\mathrm{d}V\,\gamma^0\gamma^I\bigg)\varepsilon_0 \\
        &= 8\pi\varepsilon_0^\dagger\left(EI + \frac{\mathrm{i}}{2}J_{IJ}\gamma^0\gamma^{IJ} - \mathrm{i}P_I\gamma^I + K_I\gamma^0\gamma^I\right)\varepsilon_0.
    \end{align}
    Taking the four components of the constant spinor, $\varepsilon_0$, to parameterise the four linearly independent solutions, $\Phi^A$, then gives
    \begin{align}
        Q^{AB} &\to 8\pi\left(EI + \frac{\mathrm{i}}{2}J_{IJ}\gamma^0\gamma^{IJ} - \mathrm{i}P_I\gamma^I + K_I\gamma^0\gamma^I\right).
    \end{align}
    Since this $Q^{AB}$ is already $O(\eta)$, for the linearised limit it suffices to take $T^{AB}$ to $O(1)$ in definition \ref{def:quasilocalMass}. Thus, $T^{AB} \to (\varepsilon_k^A)^TC^{-1}\varepsilon_k^B = (C^{-1})^{AB}$, borrowing the calculation from the proof of corollary \ref{thm:massLimit}.

    Finally, by evaluating $m(S)$ for this $Q^{AB}$ and $T^{AB}$, one finds
    \begin{align}
        m(S) &= \frac{1}{16\pi}\sqrt{-\tr(QT^{-1}\overbar{Q}\overbar{T}^{-1})} \to \sqrt{E^2 + J_IJ^I - P_IP^I - K_IK^I},
    \end{align}
    where $J_I = \frac{1}{2}\varepsilon_{IJK}J^{JK}$.
\end{proof}

\section{Conclusion}
\label{sec:conclusion}
In this work, a new quasilocal mass has been defined for generic surfaces in spacetimes with negative cosmological constant. The new definition is spinorial and based on definitions by Penrose and Dougan \& Mason - which are themselves related to Witten's proof of the positive energy theorem. The new definition has been shown to satisfy a number of physically desirable properties - namely that $m(S) \geq 0$, $m(S) = 0$ for every generic surface in AdS, $m(S_r^2)$ agrees with the Misner-Sharp mass in spherical symmetry and $m(S)$ has an appropriate limit, $\sqrt{E^2 - ||P||^2 + ||J||^2 - ||K||^2}$, for gravity linearised about AdS or as $S$ approaches a sphere on $\mathcal{I}$ in an asymptotically AdS spacetime.

Some avenues of further research are immediately apparent at this juncture. This work was originally inspired by Reall's suggestion that a quasilocal mass-charge inequality could be established for spacetimes with $\Lambda < 0$ and such an inequality could be used to prove the 3rd law of black hole mechanics for supersymmetric horizons in this context. Having now established a workable quasilocal mass, a logical next step would be tackling this conjecture, likely using a modified connection inspired by the gravitino transformation in 4D, minimal, gauged supergravity. Note that even if a quasilocal mass-charge inequality is found, the 3rd law might not be immediately accessible because the $\theta_l\theta_n < -8k^2$ requirement prevents taking $S$ arbitrarily close to the event horizon (where $\theta_l = 0$).

Even for the definition of quasilocal mass itself, some improvements could be made. Two different definitions of generic were given in this work and it may be interesting to study further how the two definitions relate. It would be particularly desirable to find an example with toroidal $S$ where the quasilocal mass construction can be carried out in full, unlike the examples in section \ref{sec:torus}. Then, perhaps a more concrete conclusion can be made about whether either definition is generic in practice or physically relevant for higher genus surfaces.

Elsewhere, in terms of physical properties, one property not mentioned in this work is the ``small sphere" limit. In particular, it's often insisted \cite{Szabados2009} that given a point, $p \in M$, and a future direction, $t^a$, if $S_r$ is a sphere reached by flowing an affine parameter distance, $r$, along the generators of $p$'s future lightcone, then a quasilocal 4-momentum for $S_r$ should be $P^a = -\frac{4\pi}{3}r^3T\indices{^a_b}t^b$, to leading order in $r$. Then, $m(S_r)^2 = -P^aP_a$. This happens to be true for both the Dougan-Mason and Penrose masses \cite{Dougan1992, Kelly1986}. For a vacuum spacetime a similar result should hold at 5th order in $r$ based on the Bel-Robinson tensor. It would be interesting to see if this - or something analogous - also holds for the definition proposed in this work.

Furthermore, while the new definition has good quantitative properties, it does share some qualitative failings of many other quasilocal masses. Unlike the Hamilton-Jacobi masses \cite{Brown1993, Kijowski1997, Wang2009} or the Hawking mass \cite{Hawking1968}, the physical motivation for this definition is not clear, beyond some supergravity considerations \cite{Horowitz1983} underpinning Witten's method to prove the positive energy theorem. More practically, like the Hamilton-Jacobi or Bartnik \cite{Bartnik1989} masses, this quasilocal mass is likely to be quite difficult to calculate for most metrics - not only does one have to find a NP tetrad adapted to $S$, one then has to find all solutions to $\overbar{m}^a\nabla_a\Phi = 0$ on $S$. The quest to find a truly satisfactory quasilocal mass goes on.

Another possible extension would be to consider spacetimes with $\Lambda > 0$ instead. Not only is the $\Lambda > 0$ case potentially most relevant to the real world, it is arguably also a pressing need for mathematical general relativity. Many familiar properties of conformal infinity break down when $\Lambda > 0$ \cite{Ashtekar2014} and this precludes defining anything directly analogous to the ADM \cite{Arnowitt1962} or Wang \cite{Wang2001} masses. Nonetheless, a number of energy-momentum definitions have been devised in this context - see \cite{Szabados2019} for a review. Particularly relevant to the present work are extensions based on Witten's method \cite{Kastor2002, Szabados2015}. Ultimately though, these successes still have to work around the global challenges imposed by $\Lambda > 0$ - e.g. compact Cauchy surfaces, spacelike $\mathcal{I}^+$ or cosmological horizons. It may be that quasilocal mass is a viable alternative for avoiding these issues. In fact, an analogue of Penrose's quasilocal mass can be defined for asymptotically de Sitter spacetimes, albeit it no longer retains some key properties, such as positivity \cite{Szabados2015, Tod2015}. Likewise, it would be interesting to see if the definition developed in this article can be adjusted for $\Lambda > 0$ and if so, which physical properties remain intact.

\section{Acknowledgements}
Most of all, I would like to thank my supervisor, James Lucietti, for discussions and guidance throughout this project. I would also like to thank Tim Adamo, Harvey Reall and Paul Tod for reading drafts of this paper and providing helpful comments. Finally, I would like to thank the University of Edinburgh's School of Mathematics for my PhD studentship funding. 

\begin{appendices}
\section{Conventions}
\label{sec:conventions}
The conventions used in this work are based off \cite{Buchbinder1998}; the main points are listed below.
\newline
\newline
The metric signature is $(-1, +1, +1, +1)$.
\newline
\newline
The following symbols are frequently used.
\begin{itemize}
    \item $M$: The full spacetime
    \item $g$: The Lorentzian metric on $M$
    \item $\Sigma$: 3D, compact, spacelike hypersurface with boundary
    \item $S$: The boundary of $\Sigma$
    \item $\Lambda$: A negative cosmological constant
    \item $k = \sqrt{-\Lambda/12}$
    \item $C_b^\infty$: The space of smooth Dirac spinors on $\Sigma$ subject to the boundary conditions given in definition \ref{def:cbInfinity}
    \item $\mathcal{H}$: The completion of $C_b^\infty$ under the inner product in lemma \ref{def:cbInnerProduct}
    \item $\mathfrak{D}$: Operator from $\mathcal{H} \to L^2$ defined by $\Psi \mapsto \gamma^I\nabla_I\Psi$
    \item $\overline{\Psi} = \Psi^\dagger\gamma^0$ for a Dirac spinor, $\Psi$
    \item $D_a$: The Levi-Civita connection of $g$
    \item $\nabla_a\Psi = D_a\psi + \mathrm{i}k\gamma_a\Psi$ for a Dirac spinor, $\Psi$
    \item $\nabla_a\overline{\Psi} = D_a\overbar{\Psi} - \mathrm{i}k\overline{\Psi}\gamma_a = (\nabla_a\Psi)^\dagger\gamma^0$ for a Dirac spinor, $\Psi$
    \item $I$: The identity matrix
    \item $\{o_\alpha,\, \iota_\alpha\}$: A GHP spinor dyad
    \item $\delta = m^aD_a$ in the context of the NP formalism
    \item $\bar{\delta} = \overbar{m}^aD_a$ in the context of the NP formalism
\end{itemize}
Many different types of indices are used, as given below.
\begin{itemize}
    \item $a, b, c, \cdots$ are vierbein indices running $0, 1, 2, 3$. However, in most equations it will be apparent that these could equally well denote abstract indices.
    \item $\mu, \nu, \rho, \cdots$ are coordinate indices running $0, 1, 2, 3$.
    \item $I, J, K, \cdots$ are vierbein indices running $1, 2, 3$.
    \item $\alpha, \beta, \gamma, \cdots$ are two-component spinor indices for the $(1/2, 0)$ representation, i.e. left-handed Weyl spinors, and run 1, 2.
    \item $\dot{\alpha}, \dot{\beta}, \dot{\gamma}, \cdots$ are two-component spinor indices for the $(0, 1/2)$ representation, i.e. right-handed Weyl spinors, and run $\dot{1}$, $\dot{2}$.
    \item $A, B, C, \cdots$ index the linearly independent solutions to $\overbar{m}^a\nabla_a\Phi = 0$.
\end{itemize}
The Riemann tensor is defined such that $[D_a, D_b]V^c = R\indices{^c_d_a_b}V^d$.
\newline
\newline
Complex conjugation of an object - unless it is a Dirac spinor - will be denoted by a bar over the object, e.g. $\bar{z}$.
\newline
\newline
Levi-Civita symbols are normalised by $\varepsilon_{12} = -1$, $\varepsilon^{12} = 1$, $\varepsilon_{\dot{1}\dot{2}} = -1$, $\varepsilon^{\dot{1}\dot{2}} = 1$, $\varepsilon_{0123} = -1$ and $\varepsilon^{0123} = 1$. Then, $\varepsilon^{\alpha\gamma}\varepsilon_{\gamma\beta} = \delta\indices{^\alpha_\beta}$ and likewise for the dotted indices.
\newline
\newline
Two-component spinors are raised and lowered from the left, i.e. $\psi_\alpha = \varepsilon_{\alpha\beta}\psi^\beta$ and $\psi^\alpha = \varepsilon^{\alpha\beta}\psi_\beta$.
\newline
\newline
The extended Pauli matrices are
\begin{align}
    (\sigma_a)_{\alpha\dot{\alpha}} &\equiv (I, \sigma_1, \sigma_2, \sigma_3) \,\,\,\mathrm{and} \\
    (\tilde{\sigma}_a)^{\dot{\alpha}\alpha} &= \varepsilon^{\alpha\beta}\varepsilon^{\dot{\alpha}\dot{\beta}}(\sigma_a)_{\beta\dot{\beta}} \equiv (I, -\sigma_1, -\sigma_2, -\sigma_3),
\end{align}
with $\sigma_{1, 2, 3}$ being the standard Pauli matrices.
\newline
\newline
The conversion between vierbein indices and two-component spinor indices is by $V_{\alpha\dot{\alpha}} = (\sigma_a)_{\alpha\dot{\alpha}}V^a$ and $V_a = -\frac{1}{2}(\tilde{\sigma}_a)^{\dot{\alpha}\alpha}V_{\alpha\dot{\alpha}}$.
\newline
\newline
Dirac spinors are decomposed into two-component spinors by $\Psi = (\psi_\alpha, \bar{\chi}^{\dot{\alpha}})^T$.
\newline
\newline
Gamma matrices are in the Weyl representation, i.e.
\begin{align}
    \gamma_a &= \begin{bmatrix} 0 & (\sigma_a)_{\alpha\dot{\alpha}} \\
    (\tilde{\sigma}_a)^{\dot{\alpha}\alpha} & 0 \end{bmatrix}.
\end{align}
Hence, the gamma matrices are unitary and satisfy $\gamma^a\gamma^b + \gamma^b\gamma^a = -2g^{ab}I$. Furthermore, in terms of two-component spinors, $\overline{\Psi} = \Psi^\dagger\gamma^0 = (-\chi^\alpha, -\overline{\psi}_{\dot{\alpha}})$.
\newline
\newline
The antisymmetric product, $\gamma^{[a_1}\cdots\gamma^{a_n]}$, is denoted by $\gamma^{a_1\cdots a_n}$.
\newline
\newline
The charge conjugation matrix is
\begin{align}
    C &= \begin{bmatrix}
        \varepsilon_{\alpha\beta} & 0 \\
        0 & \varepsilon^{\dot{\alpha}\dot{\beta}}
    \end{bmatrix} \iff C^{-1} = \begin{bmatrix}
        \varepsilon^{\alpha\beta} & 0 \\
        0 & \varepsilon_{\dot{\alpha}\dot{\beta}}
    \end{bmatrix}.
\end{align}
\newline
The spin-weighted spherical harmonics used in section \ref{sec:schwarzschild} are
\begin{align}
    \left({}_{1/2}Y_{1/2,1/2}\right) &= \frac{\mathrm{i}}{\sqrt{2\pi}}\sin\left(\frac{\theta}{2}\right)\mathrm{e}^{\mathrm{i}\phi/2}, \,\,\, \left({}_{1/2}Y_{1/2,-1/2}\right) = -\frac{\mathrm{i}}{\sqrt{2\pi}}\cos\left(\frac{\theta}{2}\right)\mathrm{e}^{-\mathrm{i}\phi/2}, \nonumber \\
    \left({}_{-1/2}Y_{1/2,1/2}\right) &= \frac{\mathrm{i}}{\sqrt{2\pi}}\cos\left(\frac{\theta}{2}\right)\mathrm{e}^{\mathrm{i}\phi/2} \,\,\, \mathrm{and}\,\,\,\left({}_{-1/2}Y_{1/2,-1/2}\right) = \frac{\mathrm{i}}{\sqrt{2\pi}}\sin\left(\frac{\theta}{2}\right)\mathrm{e}^{-\mathrm{i}\phi/2}.
\end{align}
\newline
Section \ref{sec:asymptotics} features some additional or modified index conventions as listed below.
\begin{itemize}
    \item Based on context, $\alpha, \beta, \gamma, \cdots$ also denote coordinate indices running $2, 3$.
    \item Based on context, $A, B, C, \cdots$ also denote vierbein indices running $2, 3$.
    \item $m, n, p, \cdots$ are coordinate indices running $0, 2, 3$.
    \item $M, N, P, \cdots$ run $0, 1, 2, 3, 4$ and index the embedding Cartesian coordinates when AdS is viewed as a surface in $\mathbb{R}^{3, 2}$.
\end{itemize}

\subsection{Comparison to Penrose-Rindler conventions}
\label{sec:penroseRindler}
The monogrpahs of Penrose and Rindler \cite{Penrose1984, Penrose1986} (PR) are popular references for two-component spinors. However, their conventions differ at times from the ones used in this article; the key differences are listed below.
\begin{itemize}
    \item A mostly plus metric is used here while PR use a mostly minus metric.
    \item Lowercase letters from the start of the Greek alphabet are used for two-component spinor indices here while PR use uppercase Latin letters.
    \item For two-component spinors, the undotted indices here correspond to their primed indices and the dotted indices here correspond to their unprimed indices.
    \item The two-component spinor indices here run over the values 1 and 2, whereas PR take them to run over 0 and 1.
    \item The spinor index conversion here is $V_{\alpha\dot{\alpha}} = (\sigma_a)_{\alpha\dot{\alpha}}V^a$, while PR define $V_{\alpha\dot{\alpha}} = \frac{1}{\sqrt{2}}(\sigma_a)_{\alpha\dot{\alpha}}V^a$.
    \item The $\sqrt{2}$ difference when converting to spinor indices means $\iota^\alpha o_\alpha = \bar{\iota}^{\dot{\alpha}}\bar{o}_{\dot{\alpha}} = \sqrt{2}$ here while PR have $\iota^{A^\prime}o_{A^\prime} = \iota^Ao_A = 1$.
    \item For any spinor, $\psi_\alpha$, the $\psi_o$ and $\psi_\iota$ in equation \ref{eq:psiOPsiI} would be $\frac{1}{\sqrt{2}}\psi_{1^\prime}$ and $-\frac{1}{\sqrt{2}}\psi_{0^\prime}$ respectively in their notation. Likewise, $\overline{\psi}_o$ and $\overline{\psi}_\iota$ are $\frac{1}{\sqrt{2}}\psi_{1}$ and $-\frac{1}{\sqrt{2}}\psi_{0}$ respectively.
    \item Dirac spinors are $\Psi = (\psi_\alpha,\, \overbar{\chi}^{\dot{\alpha}})^T$ here while PR would write $(\overbar{\chi}^A,\, \psi_{A^\prime})^T$, i.e. the left and right handed componenets are written in the opposite order.
    \item Indices are raised and lowered from the left here, i.e. $\psi^\alpha = \varepsilon^{\alpha\beta}\psi_\beta$ and $\psi_\alpha = \varepsilon_{\alpha\beta}\psi^\beta$, while PR raise from the left, but lower from the right, i.e. $\psi^A = \varepsilon^{AB}\psi_B$ but $\psi_A = \psi^B\varepsilon_{BA}$. This difference means $\varepsilon^{12} = 1 \implies \varepsilon_{12} = -1$ here. Furthermore, it means $\varepsilon^{\alpha\gamma}\varepsilon_{\gamma\beta} = \delta\indices{^\alpha_\beta}$ here while they have $\varepsilon^{AC}\varepsilon_{CB} = -\delta\indices{^A_B}$.
\end{itemize}

\section{Frequently used spinor identities}
\label{sec:identities}
The following are some two-component spinor identities used liberally - most are from \cite{Buchbinder1998}.
\begin{align}
    V^{\alpha\dot{\alpha}}W_{\alpha\dot{\alpha}} &= -2V^aW_a \\
    \overline{(\psi_\alpha)} &= \overline{\psi}_{\dot{\alpha}} \\
    \psi_\alpha\chi^\alpha &= - \psi^\alpha\chi_\alpha \\
    (\sigma_a)_{\alpha\dot{\alpha}}(\tilde{\sigma}_b)^{\dot{\alpha}\beta} + (\sigma_b)_{\alpha\dot{\alpha}}(\tilde{\sigma}_a)^{\dot{\alpha}\beta} &= -2g_{ab}\delta\indices{_\alpha^\beta} \\
    (\tilde{\sigma}_a)^{\dot{\alpha}\alpha}(\sigma_b)_{\alpha\dot{\beta}} + (\tilde{\sigma}_b)^{\dot{\alpha}\alpha}(\sigma_a)_{\alpha\dot{\beta}} &= -2g_{ab}\delta\indices{^{\dot{\alpha}}_{\dot{\beta}}} \\
    (\sigma_a)_{\alpha\dot{\alpha}}(\tilde{\sigma}_b)^{\dot{\alpha}\alpha} &= -2g_{ab} \\
    (\sigma^a)_{\alpha\dot{\alpha}}(\tilde{\sigma}_a)^{\dot{\beta}\beta} &= -2\delta\indices{^\beta_\alpha}\delta\indices{^{\dot{\beta}}_{\dot{\alpha}}} \\
    (\sigma_a)_{\alpha\dot{\beta}}(\tilde{\sigma}_b)^{\dot{\beta}\beta}(\sigma_c)_{\beta\dot{\alpha}} &= g_{ca}(\sigma_b)_{\alpha\dot{\alpha}} - g_{bc}(\sigma_a)_{\alpha\dot{\alpha}} - g_{ab}(\sigma_c)_{\alpha\dot{\alpha}} + \mathrm{i}\varepsilon_{abcd}(\sigma^d)_{\alpha\dot{\alpha}} \\
    (\tilde{\sigma}_a)^{\dot{\alpha}\beta}(\sigma_b)_{\beta\dot{\beta}}(\tilde{\sigma}_c)^{\dot{\beta}\alpha} &= g_{ca}(\tilde{\sigma}_b)^{\dot{\alpha}\alpha} - g_{bc}(\tilde{\sigma}_a)^{\dot{\alpha}\alpha} - g_{ab}(\tilde{\sigma}_c)^{\dot{\alpha}\alpha} - \mathrm{i}\varepsilon_{abcd}(\tilde{\sigma}^d)^{\dot{\alpha}\alpha} \\
    \varepsilon_{\alpha\beta}\varepsilon^{\gamma\delta} &= -\left(\delta\indices{^\gamma_\alpha}\delta\indices{^\delta_\beta} - \delta\indices{^\delta_\alpha}\delta\indices{^\gamma_\beta}\right)
\end{align}
The following NP coefficients are needed in terms of GHP variables.
\begin{align}
    \sqrt{2}\mu &= \sqrt{2}\overbar{m}^a\delta n_a = \bar{\iota}^{\dot{\alpha}}\delta\bar{\iota}_{\dot{\alpha}} \\
    \sqrt{2}\rho &= -\sqrt{2}m^a\bar{\delta}l_a = \bar{o}^{\dot{\alpha}}\bar{\delta}\bar{o}_{\dot{\alpha}} \\
    \sqrt{2}\alpha &= \frac{1}{\sqrt{2}}\left(\overbar{m}^a\bar{\delta}m_a - n^a\bar{\delta}l_a\right) = \bar{\iota}^{\dot{\alpha}}\bar{\delta}\bar{o}_{\dot{\alpha}} \\
    \sqrt{2}\beta &= \frac{1}{\sqrt{2}}\left(\overbar{m}^a\delta m_a - n^a\delta l_a\right) = \bar{\iota}^{\dot{\alpha}}\delta\bar{o}_{\dot{\alpha}} \\
    \sqrt{2}\sigma &= -\sqrt{2}m^a\delta l_a = \bar{o}^{\dot{\alpha}}\delta\bar{o}_{\dot{\alpha}} \\
    \sqrt{2}\lambda &= \sqrt{2}\overbar{m}^a\bar{\delta}n_a = \bar{\iota}^{\dot{\alpha}}\bar{\delta}\bar{\iota}_{\dot{\alpha}}
\end{align}

\end{appendices}

\bibliographystyle{unsrt}
\bibliography{references}

\end{document}